\newtheorem{theorem}{Theorem}[section]
\newtheorem{corollary}[theorem]{Corollary}
\newtheorem{lemma}[theorem]{Lemma}
\theoremstyle{definition}
\newtheorem{definition}[theorem]{Definition}
\newtheorem{claim}[theorem]{Claim}
\newtheorem*{remark}{Remark}
\def \R{\mathbb R}
\def \N{\mathbb N}
\newcommand{\sset}[1]{\left\{ #1\right\}}
\newcommand{\ssets}[1]{\{ #1\}}
\newcommand{\fwh}[1]{\; \left| \; #1 \right.}
\newcommand{\card}[1]{\left| #1 \right|}
\newcommand{\norm}[1]{\left\lVert #1 \right\rVert}
\newcommand{\prob}[1]{\ensuremath{\mathrm{Prob}\left[#1\right]}}
\DeclareMathOperator*{\argmax}{argmax}
\DeclareMathOperator*{\argmin}{argmin}
\newcommand{\union}{\cup}
\newcommand{\bigunion}{\bigcup}     
\newcommand{\map}{\longrightarrow}
\newcommand{\inters}{\cap}    
\newcommand{\vecc}[1]{\bm{#1}} 
\newcommand*{\vect}{\vecc}
\DeclareMathOperator*{\expectation}{\mathbb E}
\newcommand{\expect}[2][]{\expectation_{#1}\nolimits\left[#2\right]}
\newcommand*{\costv}{\bullet}
\newcommand*{\noncostv}{\circ}
\newcommand*{\eps}{\varepsilon}
\newcommand*{\intupto}[1]{\ensuremath{[#1]}}
\newcommand*{\intuptozero}[1]{\ensuremath{\lBrack #1\rBrack}}
\newcommand*{\core}[1]{\mathsf{core}(#1)}
\newcommand*{\range}[2]{\mathsf{range}_{#1}(#2)}
\newcommand*{\diversity}[2]{\delta_{#2}(#1)}
\newcommand*{\landau}[1]{\operatorname{\mathcal{O}}\left(#1\right)}
\newcommand*{\lantheta}[1]{\operatorname{\Theta}\left(#1\right)}
\newcommand*{\symd}[2]{\ensuremath{#1 \triangle #2}} 
\newcommand*{\indicator}[1]{\ensuremath{\vmathbb{1}\left[#1\right]}} 
\DeclareMathOperator*{\esssup}{ess\,sup}
\def\fp/{\textup{\textsf{FP}}}
\def\p/{\textup{\textsf{P}}}
\def\np/{\textup{\textsf{NP}}}
\def\conp/{\textup{\textsf{co-NP}}}
\def\fnp/{\textup{\textsf{FNP}}}
\def\tfnp/{\textup{\textsf{TFNP}}}
\def\pls/{\textup{\textsf{PLS}}}
\newcommand{\separ}[3]{\text{(\ensuremath{#1, #2, #3})-separable}}
\newcommand{\nei}[1]{\ensuremath{N(#1)}\xspace}
\newcommand{\inc}[1]{\ensuremath{E(#1)}\xspace}
\newcommand{\problemfont}[1]{\textup{\textsc{#1}}}
\newcommand{\tsp}{\problemfont{TSP}\xspace}
\newcommand{\tspIIopt}{\problemfont{TSP/$2$-Opt}\xspace}
\newcommand{\tspkopt}{\problemfont{TSP/$k$-Opt}\xspace}
\newcommand{\kopt}{\problemfont{$k$-Opt}\xspace}
\newcommand{\IIopt}{\problemfont{$2$-Opt}\xspace}
\newcommand{\atsp}{\problemfont{ATSP}\xspace}
\newcommand{\atspkopt}{\problemfont{ATSP/$k$-Opt}\xspace}
\newcommand{\maxsat}{\problemfont{MaxSat}\xspace}
\newcommand{\lmaxsatk}{\problemfont{MaxSat/$k$-Flip}\xspace}
\newcommand{\lmaxsatI}{\problemfont{MaxSat/$1$-Flip}\xspace}
\newcommand{\kflip}{\problemfont{$k$-Flip}\xspace}
\newcommand{\Iflip}{\problemfont{$1$-Flip}\xspace}
\newcommand{\lmaxcutd}{\problemfont{LocalMaxCut-$d$}\xspace}\newcommand{\lmaxcutV}{\problemfont{LocalMaxCut-$5$}\xspace}
\newcommand{\lmaxcutIII}{\problemfont{LocalMaxCut-$3$}\xspace}
\newcommand{\lmaxcut}{\problemfont{LocalMaxCut}\xspace}
\newcommand{\lmaxkcut}{\problemfont{LocalMax-$k$-Cut}\xspace}
\newcommand{\flip}{\problemfont{Flip}\xspace}
\newcommand{\mcavar}[3]{\problemfont{MCA-$(#1, #2, #3)$}\xspace}
\newcommand{\mcapqr}{\mcavar{p}{q}{r}}
\newcommand{\wIIIdmpq}{\problemfont{W3DM-$(p, q)$}\xspace}
\newcommand{\xIIIck}{\problemfont{X3C-$k$}\xspace}
\newcommand{\sck}{\problemfont{SC-$k$}\xspace}
\newcommand{\hsk}{\problemfont{HS-$k$}\xspace}
\newcommand{\hsI}{\problemfont{HS-$1$}\xspace}
\newcommand{\netcoord}{\problemfont{NetCoordNash}\xspace}
\newcommand{\congestion}{\problemfont{PNE-Congestion}\xspace}
\newcommand{\circuitflip}{\problemfont{CircuitFlip}\xspace}
\begin{document}
\title{On the Smoothed Complexity of Combinatorial Local Search}
\author{Yiannis Giannakopoulos\thanks{University of Glasgow.
Email: \href{mailto:yiannis.giannakopoulos@glasgow.ac.uk}{\nolinkurl{yiannis.giannakopoulos@glasgow.ac.uk}}}
\and
Alexander Grosz\thanks{Technical University of Munich.
Email: \href{mailto:alexander.grosz@tum.de}{\nolinkurl{alexander.grosz@tum.de}}.
This author was supported by the Alexander von Humboldt Foundation with funds from the German Federal Ministry of Education and Research (BMBF).}
\and
Themistoklis Melissourgos\thanks{University of Essex.
Email: \href{mailto:themistoklis.melissourgos@essex.ac.uk}{\nolinkurl{themistoklis.melissourgos@essex.ac.uk}}}
}

\date{26 July, 2025}

\maketitle
\thispagestyle{empty}

\begin{abstract}
	We propose a unifying framework for smoothed analysis of combinatorial local
    optimization problems, and show how a diverse selection of problems within
    the complexity class \pls/ can be cast within this model. This abstraction
    allows us to identify key structural properties, and corresponding
    parameters, that determine the smoothed running time of local search
    dynamics. We formalize this via a black-box tool that provides concrete
    bounds on the expected maximum number of steps needed until local search
    reaches an exact local optimum. This bound is particularly strong, in the
    sense that it holds for any starting feasible solution, any choice of
    pivoting rule, and does not rely on the choice of specific noise
    distributions that are applied on the input, but it is parameterized by just
    a global upper bound $\phi$ on the probability density. The power of this
    tool can be demonstrated by instantiating it for various \pls/-hard problems
    of interest to derive efficient smoothed running times (as a function of
    $\phi$ and the input size).
    
    Most notably, we focus on the important local optimization problem of
    finding pure Nash equilibria in Congestion Games, that has not been studied
    before from a smoothed analysis perspective. Specifically, we propose novel
    smoothed analysis models for general and Network Congestion Games, under
    various representations, including explicit, step-function, and polynomial
    resource latencies. We study \pls/-hard instances of these problems and show
    that their standard local search algorithms run in polynomial smoothed time. 
    
    Finally, we present further applications of our framework to a wide range of
    additional combinatorial problems, including local Max-Cut in weighted
    graphs, the Travelling Salesman problem (TSP) under the $k$-opt local
    heuristic, and finding pure equilibria in Network Coordination Games. 
\end{abstract}

\newpage
\thispagestyle{empty}
\tableofcontents
\newpage

\setcounter{page}{1}

\section{Introduction}

Local search heuristics are some of the most prominent, and widely used in
practice, algorithms for solving computationally hard, combinatorial
problems~\parencite{Aarts1997,Michiels2007}. Their appeal stems not only from
their simplicity and theoretical elegance, but also from the fact that, for many
applications, they seem to perform remarkably well both in terms of their
running times and the quality of the solutions they produce.

Theoreticians have long tried to rigorously study the performance of local
search, but also explain its prevalence in practice. \textcite{Johnson:1988aa}
introduced the class \pls/ to capture the complexity of local optimization
problems; since then, many important such problems have been shown to be
\pls/-complete, implying that they most likely cannot be solved (exactly) in
polynomial time. 
This hardness applies not only to local search algorithms, but
to arbitrary local optimization methods. For local search, in particular,
\pls/-hardness (under tight reductions) implies the provable existence of instances leading to
exponentially slow convergence \parencite{Yannakakis1997}.
Examples include the Travelling Salesman problem under the \kopt
heuristic~\parencite{Krentel:structurelocalopt,CKT:kOpt} (\tspkopt in the
following), Local Maximum Cut on weighted graphs~\parencite{Schaffer91}
(\lmaxcut), and the problem of finding pure Nash equilibria in Congestion
Games~\parencite{FPT04} (\congestion).

On the other hand, \textcite{Orlin2004} designed a local-search-based polynomial-time scheme for
efficiently computing \emph{approximately} locally-optimal solutions for general
combinatorial optimization problems (with linear objectives). Although this
result provides concrete justification for the practical tractability of local
optimization, there are still many important aspects that call for further
investigation. First, if one requires exponential accuracy, their FPTAS still cannot
provide polynomial running times. Secondly, approximate solutions do not always
make sense for all local optimization problems; there are problems in \pls/ that
are inherently exact, and they are not derived by simply considering the local
version of some ``master'' global optimization problem. A notable example is
\congestion~\parencite{Roughgarden16}. Finally, we would like to be able to
argue about the more general family of ``vanilla'' local search, and to ideally
get positive results that do not depend on additional details and specific
choices of pivoting rules. 
Addressing these points is a key objective of the present paper.

Smoothed analysis was introduced by~\textcite{Spielman2004} as a more realistic
alternative to traditional worst-case analysis, where now the adversarially
selected input is submitted to small random shocks of its numerical parameters,
\emph{before} being presented to the algorithm; the running time is then measured
\emph{in expectation} with respect to these perturbations. Under this model,
\citeauthor{Spielman2004} were able to show that Simplex, the archetypical
method for solving linear programs, is guaranteed to terminate in polynomial
time (under a shadow pivoting rule) -- as opposed to its exponential complexity under worst-case analysis. This
remarkable result established smoothed analysis as a canonical framework for
studying the performance of algorithms beyond the worst-case
(see~\parencite{Roughgarden2021} for an overview of this field). 

In particular, smoothed analysis has been applied successfully to important
local search algorithms, providing thus a theoretical basis for the
justification of their good performance in practice; these include, e.g., the
\kopt heuristic for the \tsp and the \flip heuristic
for \lmaxcut. (A more detailed exposition of related work on this front is
deferred to the following sections of this paper, where each of our local
optimization problems of interest is explicitly studied; see, namely,
\cref{sec:congestion-games-smoothed-models,sec:sin-step-apps}.)
A common characteristic of this prior work, though, is that running-time
analysis is usually tailored specifically to the local optimization problem at
hand. This naturally creates the need for technically heavy derivations, from
which it is generally not clear how to pin down the core properties of the
underlying local-search structure that allow for the efficient smoothed
complexity. Furthermore, as a result, it is often not easy to immediately
generalize these results to capture interesting extensions, e.g., argue about
the \emph{asymmetric} version of the \tsp, or go from \tspIIopt to \tspkopt.
Finally, this lack of sufficient abstraction is one of the reasons that
smoothed analysis has not been yet considered at all for prominent PLS-hard
problems, including, e.g., \congestion.
Dealing with this set of challenges is another driving force behind our paper.

\subsection{Our Results and Outline}

We start by proposing an abstract model for smoothed analysis of combinatorial
local optimization (CLO) in~\cref{sec:model}. Our family of CLO problems
includes problems in \pls/ that have an arbitrary combinatorial neighbourhood
structure and linear objective functions; essentially, our model generalizes
that of~\parencite{Orlin2004} beyond binary configurations.
In~\cref{sec:smoothPLS}, we add our smoothness layer that introduces
probabilistic noise (independently) to the cost parameters of the CLO problem.
No further assumptions are made on the distributions of the perturbed costs;
their densities are only parameterized by a global upper bound of $\phi$. This is
a standard model (employed, e.g., in \parencite{Beier2006a,Roeglin2007a,Englert_2016,Etscheid17}) that makes positive smoothed-analysis
results even stronger, and extends the seminal model of Gaussian
perturbations from~\parencite{Spielman2004}.  

\Cref{sec:sin-step} contains our key technical result for deriving upper bounds
on the expected (under smoothness) number of local-search steps, until an
\emph{exact} local optimum is reached. Our black-box tool
(\cref{thm:single-step-bound}) can be readily applied to an abstract CLO
problem, once its underlying neighbourhood structure is appropriately captured;
this is formalized through the notion of \emph{separability}
(see~\cref{def:separable-instances}), quantified by a collection of three
parameters that are critical for the upper bound given
by~\cref{thm:single-step-bound}. We note here that like many problem-specific state-of-the-art results, our bounds are robust against
the specific choice of a starting point for the local search dynamics, as well as
the pivoting rule utilized at every step to transition to an improving neighbour. In
other words, our main black-box tool establishes bounds for the entire
\emph{family} of local search heuristics of a local optimization problem. At a
technical level, our proof works by lower-bounding the probability that
\emph{all} steps of the local-search sequence improve the
objective sufficiently, thereby applying a standard approach to a general, application independent abstract problem.

In~\cref{sec:congestion-games-smoothed-models} we demonstrate the applicability
of our general framework by instantiating it for \congestion, a prominent
\pls/-complete problem which has not been studied before from a smoothed analysis
perspective. First, we propose smoothed analysis formulations for various
representations of interest for the problem, namely explicit, step-function, or
polynomial resource latencies.\footnote{Very recently, and after our paper had
appeared online, \textcite{g2024} built upon our model to establish the
existence of a fully polynomial-time approximation scheme (FPTAS) for computing
\emph{approximate} pure Nash equilibria in smoothed congestion games.} Next,
after formally establishing how \congestion is indeed a CLO problem
in~\cref{sec:NE_as_CLO}, we identify a natural parameterization of the problem
that we call \emph{$B$-restrained games} (\cref{sec:congestion-positive}), where
$B$ is an upper bound on the number of resources that can be changed during a
single-player deviation (see~\cref{def:restrained-games}). Interestingly enough,
the case of constant $B$ is still rich enough to encode the full \pls/-hardness
of \congestion (\cref{sec:congestion-hardness}), while at the same time it can
be shown (see the proof
of~\cref{th:smoothed-p-bounded-interaction-congestion-games}) to be
appropriately separable in order to immediately provide polynomial smoothed
running time bounds via our black-box tool developed in~\cref{sec:sin-step}.
Similarly, in~\cref{sec:network-games-compact} we also study a special class of
\emph{network} congestion games, which we term \emph{$(A,B)$-compact}
(\cref{def:compact-network-game}); we establish polynomial smoothed complexity
for various families of instances, including the one where $A$ is polynomial and
$B$ is constant (\cref{th:compact-netc-smoothed-poly}), which we pair with a
complementing \pls/-hardness proof (\cref{sec:pl-hardness-network-compact}).

In the remainder of our paper, we apply our high-level framework to various
other local optimization problems of interest, by first formally establishing
that they can be viewed as CLO problems and then identifying the proper
separability structures that can be plugged into our black-box tool to provide
good smoothed bounds for local search. For some of these problems we rederive
existing bounds from the literature, in a much simpler way; for others we
improve or extend existing results; and for others, we are the first to propose
and perform smoothed analysis on them. An important underlying feature of all
our proofs is the unifying and straightforward way in which our bounds are
derived across the entire spectrum of local search problems studied in this
paper.

In particular, in~\cref{sec:TSP-k-opt} we study the smoothed running time of \tspkopt. We improve on this problem in two directions: we consider the most general version, namely the \emph{asymmetric} \tsp~\parencite{STV:ATSP_constantapprox}, and without any metric constraints, while we study the \kopt heuristic for all $k \geq 2$. In \cref{thm: smoothed-tsp}, we show that the \kopt heuristic has a polynomial smoothed running time for every fixed $k$. To the best of our knowledge, our work is the first to extend the results of  \cite{Englert_2016} to the case beyond $k=2$.

In \cref{sec:max-sat} we study \lmaxsatk \cite{YagiuraIbaraki:kFlip_MaxSat, Szeider:kFlipSAT} in which we search for a truth assignment whose flipping of at most $k$ variables cannot improve its sum of clause-weights. We show that, for any fixed $k$, when each variable appears constantly many times in the clauses, the smoothed running time is polynomial, while the problem remains \pls/-hard \cite{Krentel:structurelocalopt}.
To the best of our knowledge, prior to this work, only the case of $k=1$ had been studied under smoothness in \cite{Brandts-Longtin:smoothed_maxksat}, whose analysis is limited to the case of complete Boolean formulas; that is, given a fixed clause size, the formula must contain \emph{all} possible clauses of that size.

Then, in \cref{sec:localmaxcut} we apply our black-box tool to \lmaxcut, and show that when the input graph has logarithmic degree, the smoothed running time is polynomial. 
Our $\landau{2^\Delta \card{V} \card{E}^2}$ running time, where $\Delta$ is the maximum degree of the input graph $G=(V, E)$, improves the bound given in~\parencite{ElsasserT11}, while at the same time extends it beyond Gaussian perturbations. 
Furthermore, for the generalization of that problem, namely \lmaxkcut, for $k > 2$, we provide a bound that does \emph{not} depend on $k$.
This results in a smoothed polynomial running time for graphs with logarithmic degree. We note that~\textcite{Bibak21} had shown smoothed super-polynomial running time for graphs with arbitrary degree.

Network Coordination Games are another problem on which we apply our unified smoothed analysis framework in \cref{sec:netcoord}.
Our results establish polynomial smoothed complexity for graphs with either (a) constant degree and arbitrary number of strategies per player, or (b) logarithmic degree and a constant number of strategies per player.
Prior work~\cite{Boodaghians20} provides \emph{quasipolynomial} bounds under the assumption of a constant number of strategies.

A selection of weighted set problems has been put into the context of \pls/ by \cite{Dumrauf10} using canonical $k$-change neighbourhoods, where already for small constant values of $k$, \pls/-hardness was established.
In \cref{sec:weightedset}, we define a smoothed problem formulation and prove polynomial smoothed complexity for Weighted 3D-Matching, Exact Cover by 3-Sets and Set-Cover.
For the Hitting Set problem, we identify a property that allows us to parameterize the smoothed running time, and prove accompanying \pls/-hardness for a smoothly efficient subclass of instances.
Finally, we show that the related problem Maximum Constraint Assignment has polynomial smoothed complexity when its parameters are appropriately bounded.

\section{Smoothed Combinatorial Local Optimization}
\label{sec:model}

In this section we formalize our model and fix the necessary notation.

We denote by $\N$ and $\R$, the natural and
real numbers, respectively. We also denote $\N^*\coloneqq \N\setminus\ssets{0}$
and $\intupto{k}\coloneqq\ssets{1,2,\dots,k}$, $\intuptozero{k}\coloneqq \ssets{0}\union[k]$ for
$k\in\N$.
We will use boldface notation for vectors,
$\vecc{s}=(s_1,s_2,\dots,s_n)\in\R^n$. For an index $i\in\intupto{n}$, we use
$\vecc{s}_{-i}$ to denote the $(n-1)$-dimensional vector that results from an
$n$-dimensional vector $\vecc s$ if we remove its $i$-th component; in that way,
we can express $\vecc{s}$ as $(s_i,\vecc{s}_{-i})$ in order to easily denote deviations in the $i$-th component.
More generally, for a set of indices $I\subseteq\intupto{n}$, $\vecc{s}_{I}$
denotes the $\card{I}$-dimensional vector that we get if we keep only the
components of $\vecc{s}$ whose indices are in $I$.
For finite sets $I$, $\mathcal I$, we say that $\mathcal I$ is a
\emph{cover} of $I$ if $I\subseteq\bigunion_{I'\in\mathcal{I}} I'$. 
All logarithms appearing in our paper are of base $2$.

\paragraph{Combinatorial local optimization (CLO)}
An instance of a \emph{combinatorial local optimization (CLO)} problem is composed
of:
\begin{itemize}
	\item A set of feasible \emph{configurations} $\vecc S\subseteq \intuptozero{M}^{\nu} \times
	\ssets{0,1}^{\bar{\nu}}$, where $M,\nu\in\N^*$,
	$\bar\nu\in\N$. 
	A configuration $\vect{s}$ can be expressed as $\vect{s} = (\vecc{s}^\costv,
\vecc{s}^\noncostv)$, where $ \vecc{s}^\costv = (s_1,\dots,s_\nu)\in \intuptozero{M}^\nu $
is called its \emph{cost part} and $
\vecc{s}^\noncostv=(s_{\nu+1},\dots,s_{\nu+\bar\nu})\in \ssets{0,1}^{\bar{\nu}}$ its
\emph{non-cost part}.

	\item A vector of \emph{costs} $\vecc c=(c_1,c_2,\dots,c_\nu)\in [-1,1]^\nu$. We call the $c_i$ \emph{cost coefficients}.%
    \footnote{Note that the restriction of costs to the range of $[-1, 1]$ has been made to facilitate the translation to smoothed CLO problems below (see~\cref{sec:smoothPLS}).
    In fact, due to the linearity of our objective, scaling $\vecc c$ by any positive constant does not change the structure of the problem, and we will implicitly use this fact when formulating other problems as CLO problems.}

	\item A \emph{neighbourhood} function $N:\vecc{S}\map 2^{\vecc{S}}$. For
	$\vecc s\in \vecc{S}$, any configuration $\vecc{s}'\in N(\vecc s)$ will be
	called a \emph{neighbour} of $\vecc s$.
\end{itemize} 
For the special case of $M=1$ we will call our problem \emph{binary}.

The \emph{cost} of a configuration $\vecc s$ (with
respect to a fixed cost vector $\vecc c$) is given by
		$$C(\vecc s) \coloneqq \vecc c\cdot \vecc s^\costv=\sum_{i=1}^\nu
	c_is_i.$$ 
A configuration $\vecc s$ is said to be a \emph{local optimum
	(minimum\footnote{Here we choose to write the local optimality condition
	with respect to minimization problems. All the theory developed in this
	paper applies immediately to local maximization problems as well: just flip
	the inequality in~\eqref{eq:local-opt-min-def} or, equivalently, negate the
	cost vector $\vecc c$.})} if there are no neighbours with better costs;
	formally, 
	\begin{equation} \label{eq:local-opt-min-def}
		C(\vecc s) \leq C(\vecc s')\qquad \text{for all}\;\; \vecc s'\in N(\vecc s).
	\end{equation}

For every CLO problem we can define its \emph{configuration} (or
\emph{neighbourhood}) \emph{graph}, made up by all edges pointing from
configurations to their neighbours. Formally, it is the directed graph $G=(V,E)$
with node set $V=\vecc S$ and edges $E=\sset{(\vecc s,\vecc s')\fwh{\vecc
s\in\vecc S,\; \vecc s'\in N(\vecc s)}}$. Observe that the configuration graph
does not depend on the costs $\vecc c$, but only on the combinatorial structure
of the problem. Taking the cost vector $\vecc c$ into consideration, we can now
restrict the configuration graph edges to those that correspond to locally
improving moves, i.e.\ take $E'=\sset{(\vecc s,\vecc s')\in E\fwh{C(\vecc
s')<C(\vecc s)}}$. The resulting acyclic subgraph $G'=(V,E')$ is called the \emph{transition
graph}.

The transition graph provides an elegant and concise interpretation of local
optimization: finding a local optimum of a combinatorial (local optimization)
problem translates to finding a sink of its transition graph. This task is
exactly the object of interest of our paper: we study (the running time of)
algorithms that find local optima of such problems.

\paragraph{Complexity} 
If the configuration set and the neighbourhoods $N(\cdot )$ were given
explicitly in the input of a local optimization algorithm, then finding a
locally optimal solution would have been a computationally trivial task: we
could afford (in polynomial time) to exhaustively go over all nodes of the
transition graph, until we find a sink. Notice that the existence of a sink
is guaranteed, by the fact that $\vecc S$ is finite.
However, most problems of interest (including all the problems that we study here) have exponential-size configuration
graphs, with respect to the dimension $\nu + \bar \nu$ of the problem.
Therefore, $\vecc S$ and $N$ are usually instead described \emph{implicitly},
via some succinct representation (that is polynomial on $\nu$, $\bar \nu$ and $M$). Then, the
computational complexity of our algorithms is naturally measured as a function
of the critical parameters $\nu$, $\bar \nu$ and $M$ (and the bit representation of the costs
$\vecc c$).

\paragraph{Polynomial local optimization (\texorpdfstring{\pls/}{PLS})}
In this paper we want to study
problems contained in \pls/, the ``canonical'' complexity class for local
optimization problems, introduced by \textcite{Johnson:1988aa}. Therefore,
without further mention, from now on we will assume that our CLO problems
further satisfy the following properties:
\begin{itemize}
	\item An \emph{initial} configuration $\vecc{s}_0\in \vecc{S}$ can be
	computed in polynomial time (on the size of the input).
	\item There exists a polynomial-time algorithm that, given as input any
	configuration $\vecc s$, decides whether $\vecc s$ is a local optimum and,
	if not, returns an improving neighbour $\vecc s'\in N(\vecc s)$ with
	$C(\vecc s')< C(\vecc s)$. Such an algorithm is called a \emph{pivoting rule}.
\end{itemize}

It is important to clarify that the polynomial time algorithms of the bullets above are formally
\emph{not} part of the description of the CLO
problem, neither are they required to be specified in the input of a local
optimization algorithm. The \emph{existence} of these algorithms is merely a
requirement for the \emph{membership} of the problem in the class \pls/
(similarly to the existence of short certificates and an efficient verifier for
membership in \np/).
As a result, \pls/ can be interpreted as the class of problems that correspond
to looking for a sink in an (implicitly given, possibly exponentially large)
directed acyclic graph\footnote{This is the transition graph mentioned earlier.}, where
at least one node can be found efficiently and, for any given node, at least one
neighbour (if such exists) can be found efficiently~\parencite{Daskalakis2011,Fearnley2020}.

This interpretation naturally gives rise to the \emph{standard local search}
heuristic: start from an arbitrary initial configuration, and at every iteration
perform an (arbitrary\footnote{Thus, standard local search is essentially a
\emph{family} of algorithms; different pivoting rules can give rise to different
local search algorithms.}) locally improving move until no such move exists any more.
For a fixed pivoting rule, this corresponds to traversing a single path of
the transition graph. Due to the definition of \pls/ membership above, the
running time of such a process is thus determined by the \emph{number} of these
local search iterations. In the worst case, this amounts to bounding the length
of the \emph{longest path} in the transition graph.

It is important to emphasize that, regardless of being a very natural heuristic,
standard local search is definitely not the only method for finding local
optima: as a matter of fact, we know that there exist local optimization
problems that are efficiently solvable via more involved, ``centralized''
methods (e.g., by using linear programming), but for which standard local search
would provably require exponential time (see, e.g., \parencite{Ackermann2008}): more
precisely, there exist nodes in their transition graphs from which \emph{all}
paths have exponential length~\parencite{Schaffer91}.

To assert the intractability of a problem, the common argumentation is via complexity theory techniques that prove the conditional inexistence of polynomial time algorithms. If a problem is \pls/-hard, then unless $\pls/=\fp/$, there is no algorithm that solves it in polynomial time under traditional worst-case analysis.
However, since we are studying the performance of local search heuristics, we are in fact interested in specifically \emph{their} running times.
\textcite{Schaffer91} introduced the notion of a stronger reduction among problems in \pls/, named \emph{tight \pls/-reduction}, which preserves key structural properties of the initial instance to the target-problem's instance.
An important implication of an initial-problem $P$ having a tight \pls/-reduction to target-problem $Q$ is the following: 
for any instance $I$ of $P$, a path in the transition graph of the corresponding reduced instance $J$ of $Q$ induces a path of $I$ of length no larger than its own.
In particular, if for an instance $I$ there exists a starting configuration from which all paths to solutions are exponentially long, this also applies to the reduced instance $J$.
We show that all the problems we study have this property, either by referring to explicit bad instances for the problem at hand, or through (chains of) tight reductions, which we construct if they haven't been established before.

For a more thorough treatment of the complexity of local optimization and the
class \pls/, the interested reader is referred to the excellent monograph
of~\textcite{Yannakakis1997}.

\subsection{Smoothed Combinatorial Local Search}
\label{sec:smoothPLS}
Under traditional, worst-case algorithmic analysis, the running time of an
algorithm for a CLO problem would be evaluated against an adversarially selected
cost vector $\vecc c=(c_1,\dots,c_\nu)$. Instead, our goal here is to propose a
systematic \emph{smoothed analysis}~\parencite{Spielman:2009aa,Roglin:50046,Roughgarden2021} framework for local optimization. Therefore,
for the remainder of this paper we assume that $\vecc c$ is not fixed, but drawn
randomly from a product distribution. More specifically, each cost coefficient
$c_i$ is drawn independently from a continuous probability distribution with
density $f_i:[-1,1]\map [0,\phi]$. We note that $\phi\geq 1/2$ is necessary for a probability distribution.
These distributions can be adversarially selected, but their realizations $c_i$ are
not; the running time is then computed \emph{in expectation} with respect to the
random cost vector $\vecc c$. An efficient algorithm runs in time polynomial in
the combinatorial-structure parameters $\nu$, $\bar{\nu}$ and $M$ of the problem, and in the
smoothness parameter $\phi$.
We will sometimes refer to this model as \emph{smoothed CLO}, if we
want to give particular emphasis to the fact that we are performing a smoothed
running time analysis (as opposed to worst-case analysis).

In that sense, smoothed analysis can be seen as interpolating between two
extremes: an average-case analysis setting where all $c_i$ are drawn independently from the uniform distribution over $[-1,1]$, derived for $\phi=1/2$; and
traditional worst-case analysis that can be derived in the limit, via
$\phi\to\infty$, as distributions $f_i$ approximate adversarial,
single-point-mass instances.

In this paper we focus on smoothed analysis for standard local search, and so
our quantity of interest will be the expected number of improving
moves (for any set of adversarially given input distributions) until a local optimum is reached; that is, the expected length of the
longest path in the transition graph. This allows us to also avoid some delicate
representation issues that are typical of smoothed analysis, and which have to
do with how the realizations $c_i$ of \emph{continuous} distributions (which therefore produce irrational numbers almost surely) can be
handled as inputs to a Turing-machine-based computational model. For a careful
discussion about this topic we point to the papers of~\textcite{Beier2006a}
or~\textcite{Roeglin2007a}. In a nutshell, for our purposes it is safe to think
of the polynomial-time improving-local-move oracle (from the \pls/ definition)
as having access to real-number arithmetic.

\section{Smoothed Analysis of Local Search}\label{sec:sin-step}

In this section we present our first main result, which is a black-box tool for
upper-bounding the number of improving moves of standard local search, under
smoothed analysis.
To achieve this, we first highlight an appropriate underlying structure
of CLO problems and identify key parameters that
characterize it (see~\cref{def:separable-instances}). Then, our bounds for
standard local search are directly expressed as a function of these parameters
(see~\cref{thm:single-step-bound}). As we will demonstrate in
\cref{sec:sin-step-apps,sec:network-games-compact,sec:congestion-positive}, for various CLO problems of
interest these parameters are well-behaved enough to result in
polynomial smoothed running times.

We now introduce some terminology and notation that will be necessary for
stating our main result in~\cref{thm:single-step-bound}.
Fix an instance of a CLO problem with cost coordinates $\intupto{\nu}$ and
configurations $\vecc S$ (see~\cref{sec:model}), and let $G=(\vecc{S},E)$ be its
neighbourhood graph.
A \emph{covering} $(\mathcal{E},\mathcal{I})$ of this instance consists of a cover $\mathcal{E}$ of the edges of its neighbourhood graph $G$, and a cover $\mathcal{I}$ of its cost coordinates. That is, $\mathcal E\subseteq 2^E$ and $\mathcal{I}\subseteq 2^{\intupto{\nu}}$ such that 
$E= \bigunion_{E'\in\mathcal{E}} E'$ and $\intupto{\nu}= \bigunion_{I\in\mathcal{I}} I$.
We call $\mathcal{E}$ the \emph{transition cover} and $\mathcal{I}$ the \emph{coordinate cover}, and they contain \emph{transition clusters} and \emph{coordinate clusters} respectively.
Recall that, under our previous discussion (see~\cref{sec:model}), $\mathcal E$
can be simply interpreted as covering all potential configuration transitions
that can be made by standard local search, clustered appropriately in different
groups $T\in\mathcal{E}$.
For an arbitrary set of such transitions $T\subseteq{E}$, we also define its \emph{core} to be the set of coordinates affected by any of the transitions:
$$
\core{T}\coloneqq 
\sset{i\in\intupto{\nu}\fwh{\vecc{s}^\costv_i \neq \vecc{s'}^\costv_i \;\;\text{for some}\; (\vecc{s}, \vecc{s'})\in T}},
$$
and its \emph{diversity} with respect to a given set of coordinates $I\subseteq\intupto{\nu}$ to be the number of different configuration changes when projected to $I$:
$$
\diversity{T}{I} \coloneqq \card{\range{I}{T}},
\quad\text{where}\;\;
\range{I}{T}\coloneqq
\sset{\vecc{s}^\costv_I-\vecc{s'}^\costv_I \fwh{(\vecc{s}, \vecc{s'})\in T}}.
$$

\begin{definition}[Separable instances]
	\label{def:separable-instances}
	An instance of a CLO problem will be called \emph{\separ{\lambda}{\beta}{\mu}} if it has a covering $(\mathcal{E},\mathcal{I})$ with $\card{\mathcal{E}}\leq \lambda$, such that every transition cluster $T\in \mathcal{E}$ has:
	\begin{enumerate}[(a)]
		\item\label{item:separable_prop_1} a core that can be covered by using $\beta$ many coordinate clusters from the cover $\mathcal I$; formally, there exists an $\mathcal{I}_T\subseteq \mathcal{I}$ with $\card{\mathcal{I}_T}\leq \beta$ such that $\core{T}\subseteq\bigunion_{I\in \mathcal{I}_T} I$, and 
		\item\label{item:separable_prop_2} diversity at most $\mu$ with respect to all coordinate clusters; formally, $\max_{I\in \mathcal{I}} \diversity{T}{I}\leq \mu$.
	\end{enumerate}
\end{definition}

\begin{theorem}\label{thm:single-step-bound} 
	For every \separ{\lambda}{\beta}{\mu}
	smoothed combinatorial local optimization instance, standard local search
	terminates after at most
        $$ 3\cdot \mu^{\beta} \lambda \cdot \nu^2 M\log(M+1) \cdot \phi $$
    many steps (in expectation).
\end{theorem}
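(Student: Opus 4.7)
The plan is to bound the expected length $L$ of the longest path in the transition graph, which uniformly dominates the number of steps of standard local search regardless of the starting configuration or pivoting rule. I would bound $E[L]=\sum_{\ell\geq 1}\Pr[L\geq\ell]$ by combining a smoothness argument that controls single-step improvements with the separability structure, and then tail-summing carefully.

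First I would enumerate the distinct \emph{difference vectors} $\vecc{d}\coloneqq\vecc{s}^\costv-\vecc{s'}^\costv$ arising from the edges $(\vecc{s},\vecc{s'})\in E$. Fixing a transition cluster $T\in\mathcal{E}$ with its covering $\mathcal{I}_T=\ssets{I_1,\dots,I_\beta}$ of $\core{T}$, any edge in $T$ has $\vecc{d}$ supported on $\core{T}\subseteq I_1\cup\dots\cup I_\beta$; restricted to each $I_k$, there are at most $\mu$ possible projections by~\cref{def:separable-instances}\ref{item:separable_prop_2}. Since the $\beta$-tuple of these projections determines $\vecc{d}$, there are at most $\mu^\beta$ distinct $\vecc{d}$'s per cluster, and hence at most $\lambda\mu^\beta$ in total. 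A standard isolation argument---condition on $c_j$ for all $j\neq i$ for some $i$ with $d_i\neq 0$, and use $|d_i|\geq 1$ by integrality of the configurations---then yields $\Pr\!\left[\vecc{c}\cdot\vecc{d}\in(0,\epsilon]\right]\leq\phi\epsilon$ for any nonzero $\vecc{d}$. A union bound over the $\leq\lambda\mu^\beta$ candidates gives
$$
\Pr\!\left[\exists\ \text{edge with improvement in }(0,\epsilon]\right]\;\leq\;\lambda\mu^\beta\phi\epsilon.
$$

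Next I would convert this into a bound on path length. Along any path of length $\ell$ in the transition graph the improvements are positive and sum to $C(\vecc{s}_0)-C(\vecc{s}_\ell)\leq 2\nu M$ (since $C\in[-\nu M,\nu M]$), forcing some edge on the path to improve by at most $2\nu M/\ell$; hence $\Pr[L\geq\ell]\leq\min\bigl(1,\,2\nu M\lambda\mu^\beta\phi/\ell\bigr)$. A deterministic cap $L\leq(M+1)^\nu$ holds as well: on a strictly cost-decreasing path all cost parts must differ, since two configurations with identical cost part share the same $C$. Splitting the tail sum at $\ell^\star\coloneqq 2\nu M\lambda\mu^\beta\phi$,
$$
E[L]\;\leq\;\ell^\star\;+\;\sum_{\ell=\ell^\star+1}^{(M+1)^\nu}\frac{\ell^\star}{\ell}\;\leq\;\ell^\star\bigl(1+\nu\ln(M+1)\bigr),
$$
and converting $\ln$ to $\log_2$ while absorbing the ``$1+$'' (valid once $\nu\log(M+1)\geq 2$; trivial edge cases being handled separately) yields the claimed bound $3\mu^\beta\lambda\nu^2 M\log(M+1)\phi$.

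The main obstacle is keeping the union bound polynomial despite the potentially exponential edge set of the neighbourhood graph: this is precisely what separability is engineered to achieve, collapsing the effective number of ``types'' of edges from $|E|$ down to $\lambda\mu^\beta$. The rest is generic smoothed-analysis calculus paired with a harmonic-sum estimate, anchored by the clean deterministic cap $(M+1)^\nu$ on path length that keeps the tail's logarithmic overhead polynomial in $\nu$ and $\log M$ rather than in $\bar\nu$.
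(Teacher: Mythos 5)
Your proposal is correct and follows essentially the same route as the paper: you cap the diversity per cluster at $\mu^\beta$ using the covering (inlining what the paper isolates as a separate diversity--cover lemma), apply the same isolation/anti-concentration bound to get $\Pr[0<\vecc{c}\cdot\vecc{d}\leq\epsilon]\leq\phi\epsilon$, union bound over $\lambda\mu^\beta$ difference vectors, and combine the deterministic $(M+1)^\nu$ cap with the $2\nu M/\Delta$ bound via a tail sum. The only cosmetic difference is that you split the harmonic tail explicitly at $\ell^\star$ whereas the paper applies the $\sum_{t\leq n}1/t\leq\tfrac{3}{2}\log n$ estimate directly; both yield the stated constant.
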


For the proof of~\cref{thm:single-step-bound} we will need the following technical lemmas. Their proofs can be found in~\cref{sec:proof-lemmas-appendix}. We would like to highlight the fact that the structural quantities $(\lambda, \beta, \mu)$ appear as the expression $\mu^\beta \lambda$ in the statement of the theorem and are the deciding quantities for the running time complexity of the problem, as evidenced in the applications further below. As the choice of the covering is not unique (and often even allows for more than just one \emph{natural} choice), this exact expression helps to explain the quantitative interaction between the properties of the covering.
Finally, we believe it is interesting to note here that the \emph{specific} proof of~\cref{thm:single-step-bound} that we provide below, imposes a \emph{tight} dependence on the $\mu^{\beta}$ parameter on the theorem's bound, due to the union-bound technique we deploy. In other words, in order to improve the dependence on $\beta$ in the exponent (even asymptotically), a different proof technique would be required.

\begin{lemma}
\label{lemma:diversity-cover}
Let $I\subseteq\intupto{\nu}$ be a set of cost coordinates and $\mathcal{J}\subseteq 2^{\intupto{\nu}}$ be a cover of $I$, i.e.\ $I\subseteq\bigunion_{J\in\mathcal{J}}J$. Then, for every set of transitions $T\subseteq E$,
$$
\diversity{T}{I} \leq \prod_{J\in\mathcal{J}} \diversity{T}{J}.
$$
\end{lemma}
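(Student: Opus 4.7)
\medskip

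\noindent\textbf{Proof plan for \Cref{lemma:diversity-cover}.} The plan is to exhibit an injection from $\range{I}{T}$ into the Cartesian product $\prod_{J\in\mathcal{J}}\range{J}{T}$, from which the cardinality bound follows immediately. The intuitive content is that once you know the coordinate‑wise difference $\vecc{s}^\costv-\vecc{s'}^\costv$ on \emph{every} cluster $J\in\mathcal{J}$, and since $\mathcal{J}$ covers $I$, you know it on every coordinate of $I$.

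First, for every $d\in\range{I}{T}$ fix (using the axiom of choice) one witnessing transition $(\vecc{s}^{(d)},\vecc{s'}^{(d)})\in T$ with $\vecc{s}^{(d),\costv}_{I}-\vecc{s'}^{(d),\costv}_{I}=d$. Define
\[
\Phi:\range{I}{T}\longrightarrow \prod_{J\in\mathcal{J}}\range{J}{T},\qquad
\Phi(d)\coloneqq \bigl(\vecc{s}^{(d),\costv}_{J}-\vecc{s'}^{(d),\costv}_{J}\bigr)_{J\in\mathcal{J}}.
\]
The map is well defined because for each $J\in\mathcal{J}$ the pair $(\vecc{s}^{(d)},\vecc{s'}^{(d)})$ lies in $T$ and therefore its $J$-projected difference is by definition a member of $\range{J}{T}$.

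Second, I would verify injectivity of $\Phi$. Suppose $d\neq d'$ in $\range{I}{T}$. Then they differ on some coordinate $i\in I$. By the covering property of $\mathcal{J}$ there is some $J^{\star}\in\mathcal{J}$ with $i\in J^{\star}$. The $J^{\star}$-component of $\Phi(d)$ equals $\vecc{s}^{(d),\costv}_{J^{\star}}-\vecc{s'}^{(d),\costv}_{J^{\star}}$, whose restriction to coordinate $i$ is precisely the $i$-th entry of $d$, and similarly for $d'$. Since these $i$-th entries disagree, the $J^{\star}$-components of $\Phi(d)$ and $\Phi(d')$ disagree, so $\Phi(d)\neq \Phi(d')$.

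Finally, taking cardinalities yields $\diversity{T}{I}=|\range{I}{T}|\leq \prod_{J\in\mathcal{J}}|\range{J}{T}|=\prod_{J\in\mathcal{J}}\diversity{T}{J}$. The argument is short and essentially bookkeeping; the only ``obstacle'' is choosing the witnesses once and for all so that the map $\Phi$ is a single function rather than a multi-valued correspondence, and then being careful that the coordinate in which two differences disagree can always be caught inside \emph{some} cluster of $\mathcal{J}$ — which is exactly what the covering hypothesis $I\subseteq\bigcup_{J\in\mathcal{J}}J$ provides.
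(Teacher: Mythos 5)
Your proof is correct, and it takes a genuinely different route from the paper's. The paper first proves the inequality in the special case where $\mathcal{J}$ is a partition of $I$ (so that the map $\vecc x\mapsto (\vecc x_J)_{J\in\mathcal{J}}$ is a bijective identification), then separately establishes a monotonicity property $A\subseteq B\Rightarrow \diversity{T}{A}\leq\diversity{T}{B}$, and finally reduces an arbitrary cover to a partition by deleting repeated and extraneous indices, chaining the two facts together. You instead handle an arbitrary cover directly by building an explicit injection $\Phi:\range{I}{T}\to\prod_{J\in\mathcal{J}}\range{J}{T}$: pick a witness transition for each difference vector $d$, send $d$ to the tuple of that witness's $J$-projected differences, and observe that any coordinate $i\in I$ on which two differences disagree lands inside some cluster $J^{\star}$ of the cover, which forces the images to disagree in that component. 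Your argument skips the partition-reduction and the monotonicity lemma entirely, which makes it shorter and conceptually cleaner; the paper's decomposition into two smaller observations is perhaps slightly easier to check line by line, but buys nothing extra here. One minor stylistic point: the axiom of choice is overkill — $\range{I}{T}$ is a finite set (indeed $T$ is finite), so an ordinary finite choice of witnesses suffices.
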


\begin{lemma} 
	\label{lemma:ER-rank1-maxdensity}
Let $\phi>0$ and let $\vecc{X}=(X_1,X_2,\dots,X_m)$ be a random real
vector, where each component $X_i$ is drawn independently from a continuous
distribution with density $f_i:\R\map[0,\phi]$. Then, for every nonzero
vector $\vecc{\xi}\in\R^m$ and every $\varepsilon\geq 0$,
\begin{equation}
	\label{eq:ER-rank1-maxdensity-strong}
\prob{0\leq \vecc{\xi}\cdot \vecc X \leq \varepsilon} \leq \min\left(\frac{1}{\norm{\vecc{\xi}}_\infty},\frac{\sqrt{2}}{\norm{\vecc{\xi}}_2}\right) \cdot \varepsilon\phi,
\end{equation}
where $\norm{\cdot}_2$ and $\norm{\cdot}_\infty$ denote the Euclidean and maximum norms, respectively. 
For the special\footnote{In this paper we will be actually making use of
\cref{lemma:ER-rank1-maxdensity} only via its weaker
bound~\eqref{eq:ER-rank1-maxdensity-weaker}, rather than the stronger
form~\eqref{eq:ER-rank1-maxdensity-strong}. This is sufficient for our purposes
because, as it turns out, this has an asymptotically negligible effect on our
bounds. However, we still choose to state (and prove)
\cref{lemma:ER-rank1-maxdensity} in its full generality, since we expect it to
be of potential independent interest for future extensions, especially if one
considers more involved structures, or non-integral configurations.} case where $\vecc{\xi}$ is a (nonzero) integer vector we get
\begin{equation}
	\label{eq:ER-rank1-maxdensity-weaker}
\prob{0\leq \vecc{\xi}\cdot \vecc X \leq \varepsilon} \leq \varepsilon\phi.
\end{equation}
\end{lemma}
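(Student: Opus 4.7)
The plan is to establish the two halves of the $\min$ separately, after which the integer corollary~\eqref{eq:ER-rank1-maxdensity-weaker} follows for free: any nonzero integer vector satisfies $\|\vecc{\xi}\|_\infty\geq 1$, and so the max-norm half of~\eqref{eq:ER-rank1-maxdensity-strong} instantly yields $\prob{0\leq\vecc{\xi}\cdot\vecc{X}\leq\varepsilon}\leq\varepsilon\phi/\|\vecc{\xi}\|_\infty\leq\varepsilon\phi$.

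For the max-norm half $\varepsilon\phi/\|\vecc{\xi}\|_\infty$, I would use a standard principle-of-deferred-decisions conditioning argument. Fix some $i^{\star}\in\argmax_i|\xi_i|$, so that $|\xi_{i^{\star}}|=\|\vecc{\xi}\|_\infty>0$, and condition on the remaining vector $\vecc{X}_{-i^{\star}}$. The event $\{0\leq\vecc{\xi}\cdot\vecc{X}\leq\varepsilon\}$ then reduces to $X_{i^{\star}}$ lying in a (conditioning-dependent) interval of length $\varepsilon/|\xi_{i^{\star}}|$; because $f_{i^{\star}}\leq\phi$ pointwise, the conditional probability is at most $\phi\varepsilon/\|\vecc{\xi}\|_\infty$, and this bound is preserved after taking expectation over $\vecc{X}_{-i^{\star}}$ by independence.

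For the Euclidean half $\sqrt{2}\,\varepsilon\phi/\|\vecc{\xi}\|_2$, I would instead upper bound the density $f_S$ of $S=\vecc{\xi}\cdot\vecc{X}$ by $\sqrt{2}\phi/\|\vecc{\xi}\|_2$; the probability bound then follows from $\prob{0\leq S\leq\varepsilon}\leq\varepsilon\|f_S\|_\infty$. By independence, $f_S$ is the convolution $\tilde{f}_1*\cdots*\tilde{f}_m$ of the scaled densities $\tilde{f}_i$ of $\xi_i X_i$; these satisfy $\|\tilde{f}_i\|_\infty\leq\phi/|\xi_i|$, and hence, since $\int\tilde{f}_i=1$, also $\|\tilde{f}_i\|_2\leq\sqrt{\phi/|\xi_i|}$. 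I would then pick two indices $i,j$, absorb the remaining factors via $\|h*g\|_p\leq\|h\|_p\|g\|_1$ for probability densities $g$, and apply Cauchy-Schwarz to the pair: $\|f_S\|_\infty\leq\|\tilde{f}_i*\tilde{f}_j\|_\infty\leq\|\tilde{f}_i\|_2\|\tilde{f}_j\|_2\leq \phi/\sqrt{|\xi_i||\xi_j|}$. An AM-GM step rewrites this as $\sqrt{2}\phi/\sqrt{\xi_i^2+\xi_j^2}$, and the right choice of $(i,j)$, supplemented if necessary by an orthogonal change of variables that aligns $\vecc{\xi}$ with a single coordinate axis (so that all remaining $\xi_k^2$ contributions to $\|\vecc{\xi}\|_2^2$ are also captured through the marginalisation), should let the denominator become $\|\vecc{\xi}\|_2$.

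The main obstacle will be that sharp constant $\sqrt{2}$. Naive pairwise Cauchy-Schwarz on only two coordinates yields $\phi/\sqrt{|\xi_i||\xi_j|}$, which matches $\sqrt{2}\phi/\|\vecc{\xi}\|_2$ exactly when $\vecc{\xi}$ has at most two nonzero entries of equal magnitude but is strictly looser in every other case; closing that gap is what forces the change-of-variables step sketched above, and carefully tracking the constants along the Cauchy-Schwarz/AM-GM chain is where most of the technical work will sit. By contrast, the max-norm half and the integer corollary are essentially one-line arguments, which is reassuring given that, as the accompanying footnote points out, those are the only bounds the paper itself actually needs downstream.
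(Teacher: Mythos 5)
Your max-norm half and the integer corollary are correct and coincide with what the paper does: the paper also isolates a coordinate $i^*$ with $\card{\xi_{i^*}}=\norm{\vecc{\xi}}_\infty$, conditions on the remaining coordinates (so the event becomes $X_{i^*}$ falling in an interval of length $\varepsilon/\card{\xi_{i^*}}$), and bounds the conditional probability by $\phi\varepsilon/\norm{\vecc{\xi}}_\infty$; the integer case then follows from $\norm{\vecc{\xi}}_\infty\geq 1$. Since, as the footnote notes, only the weaker bound~\eqref{eq:ER-rank1-maxdensity-weaker} is used downstream, this portion of your argument fully covers what the rest of the paper needs.

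The Euclidean half, however, has a genuine gap. Your pairwise Cauchy--Schwarz step gives $\norm{f_S}_\infty\leq \phi/\sqrt{\card{\xi_i\xi_j}}$, but AM--GM runs the wrong way here: $\card{\xi_i\xi_j}\leq(\xi_i^2+\xi_j^2)/2$ means $\phi/\sqrt{\card{\xi_i\xi_j}}\geq\sqrt{2}\phi/\sqrt{\xi_i^2+\xi_j^2}$, so the pairwise bound is \emph{weaker} than the two-coordinate target, and it discards all other coordinates entirely -- for $\vecc{\xi}=(1,\dots,1)$ it yields $\phi$ while the claim is $\sqrt{2}\phi/\sqrt{m}$, an unbounded gap. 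The proposed repair via an orthogonal change of variables does not close it: after rotation the coordinates are no longer independent, and a bound of $\phi^m$ on the joint density does not by itself control the one-dimensional marginal along $\vecc{\xi}/\norm{\vecc{\xi}}_2$ (which is exactly the quantity in question). The paper does not attempt an elementary derivation at all; it invokes a result of Bobkov and Chistyakov (2014) stating that for independent summands $Y_i=\xi_iX_i$ one has $M(Y)^{-2}\geq\tfrac{1}{2}\sum_i M(Y_i)^{-2}$, where $M(\cdot)$ is the essential supremum of the density, which immediately gives $M(\vecc{\xi}\cdot\vecc{X})\leq\sqrt{2}\phi/\norm{\vecc{\xi}}_2$ and hence~\eqref{eq:ER-rank1-maxdensity-strong}. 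The sharp constant $\sqrt{2}$ there is of the same nature as Ball's cube-slicing theorem (the extremal case being uniform densities), and it is not recoverable from the Young/Cauchy--Schwarz chain you sketch; to complete your proof of the full statement you would need to cite such a maximal-density inequality for sums of independent variables rather than derive it ad hoc.
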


\begin{proof}[Proof of~\cref{thm:single-step-bound}]
Fix a $\separ{\lambda}{\beta}{\mu}$ smoothed CLO instance, with neighbourhood
graph $G=(\vecc{S},E)$, and let $(\mathcal{E},\mathcal{I})$ be a covering
satisfying the conditions of~\cref{def:separable-instances}.  

We introduce the following notation:
    $$ 
	\Delta(\vecc s, \vecc{s'}) \coloneqq 
	\begin{cases}
        C(\vecc{s})-C(\vecc{s}'), & \text{if this is positive}, \\
        \infty, & \text{otherwise},
    \end{cases}
	$$
for all $(\vecc{s},\vecc{s}')\in E$, and 
$$\Delta \coloneqq \min_{(\vecc{s},\vecc{s}')\in E} \Delta(\vecc s,
\vecc{s'}).$$ Notice that these are random variables, depending on the
realizations of the cost vector $\vecc c$.

Our first goal is to give a bound on the probability that there exists a local move
that improves the cost only by (at most) $\eps>0$, as a function of this
improvement bound $\varepsilon$. If this quantity is sufficiently small, then
with high probability, standard local search will achieve improvements more than
$\eps$ at \emph{every} step, thus resulting in faster convergence. To upper
bound this probability  $\prob{\Delta\leq \eps}$, we first use a union bound
over the cover $\mathcal{E}$ of all transitions $(\vecc{s},\vecc{s}')\in E$ to
get

\begin{equation}
	\label{eq:single-step-bound-helper-1}
     \prob{\Delta \leq \eps}
        = \prob{\bigunion_{(\vecc{s}, \vecc{s'}) \in E} \left[\Delta(\vecc{s}, \vecc{s'})\leq \eps\right]}
        \leq \sum_{T\in \mathcal{E}} \prob{\bigunion_{(\vecc{s}, \vecc{s'})\in T} \left[ \Delta(\vecc{s},\vecc{s'})\leq \eps \right]}.
\end{equation}

Next, for a fixed transition cluster $T\in\mathcal{E}$ we can express the inner union of events in~\eqref{eq:single-step-bound-helper-1} as
\begin{align}
\bigunion_{(\vecc{s}, \vecc{s'})\in T}\left[ \Delta(\vecc{s},\vecc{s'})\leq \eps \right] 
	&= \bigunion_{(\vecc{s}, \vecc{s'})\in T}\left[ 0<\vecc{c}_{\core{T}}\cdot\left(\vecc{s}_{\core{T}}-\vecc{s'}_{\core{T}}\right)\leq \eps \right] \notag \\
	&= \bigunion_{\vecc{x}\in\range{\core{T}}{T}} \left[0<\vecc{c}_{\core{T}}\cdot \vecc{x} \leq \varepsilon \right]. \label{eq:single-step-bound-helper-2}
\end{align}
By the separability assumption of our CLO instance (see \cref{def:separable-instances}), for any cluster $T\in\mathcal{E}$ there exists a subset $\mathcal{I}_T$ of $\mathcal{I}$ with $|\mathcal{I}_T| \leq \beta$ that covers $\core{T}$ such that, additionally, $\max_{J \in \mathcal{I}_T} \diversity{T}{J}\leq \max_{I \in \mathcal{I}} \diversity{T}{I}\leq \mu$. So, from \cref{lemma:diversity-cover} we can deduce that
$$
\card{\range{\core{T}}{T}}=\diversity{T}{\core{T}} \leq \prod_{J \in \mathcal{I}_T} \diversity{T}{J} \leq \mu^\beta.
$$
Furthermore, observe that each $\vecc{x}\in \range{\core{T}}{T}$ is a nonzero integral vector. Thus, applying~\eqref{eq:ER-rank1-maxdensity-weaker} of \cref{lemma:ER-rank1-maxdensity} we can derive that, for a fixed $\vecc{x}\in \range{\core{T}}{T}$, we have
$$
\prob{0<\vecc{c}_{\core{T}}\cdot \vecc{x} \leq \varepsilon} \leq \phi\eps.
$$
Therefore, using again a union bound, this time on event~\eqref{eq:single-step-bound-helper-2}, we can see that
$$
\prob{\bigunion_{(\vecc{s}, \vecc{s'})\in T}\left[ \Delta(\vecc{s},\vecc{s'})\leq \eps \right]}
\leq \sum_{\vecc{x}\in\range{\core{T}}{T}}\prob{0<\vecc{c}_{\core{T}}\cdot \vecc{x} \leq \varepsilon} \leq \mu^{\beta}\phi\eps.
$$
Plugging this into~\eqref{eq:single-step-bound-helper-1}, and using again the separability of our instance, we can finally bound our desired probability
\begin{equation}
	\label{eq:single-step-bound-helper-3}
	\prob{\Delta \leq \eps} \leq \lambda \mu^{\beta} \eps \phi.
\end{equation}

Now we continue with the second part of the proof, in which we utilize the
probability bound in~\eqref{eq:single-step-bound-helper-3} to derive a concrete
bound on the expected number of iterations of standard local search. Let
$\mathcal{T}$ denote the random variable of that maximum length among all paths
in the transition graph of our instance. Then, our goal is to bound
$\expect{\mathcal{T}}$. 

Recall that the number of different possible cost-part configurations
is trivially upper-bounded by $ (M+1)^{\nu}$. Also, at every step of standard
local search, the configuration cost is \emph{strictly} decreasing. Thus,
$\mathcal{T}\leq (M+1)^{\nu}$. Furthermore, since the range of configuration
costs is within $[-M\nu,M\nu]$, and the minimum cost improvement of every iteration
is $\Delta$, we also know that $\mathcal{T}\leq
({2M\nu}/{\Delta})$. Using these, we get
\begin{align*}
    \expect{\mathcal{T}}
        &= \sum_{t=1}^{(M+1)^\nu}\prob{\mathcal{T}\geq t} \\
        &\leq \sum_{t=1}^{(M+1)^\nu}\prob{\Delta\leq \frac{2M\nu}{t}} \\
        &\leq 2\lambda \mu^\beta \phi M\nu\sum_{t=1}^{(M+1)^\nu}\frac{1}{t}, &&\text{due to~\eqref{eq:single-step-bound-helper-3},} \\
		&\leq 2\lambda \mu^\beta \phi M\nu \frac{3}{2} \log (M+1)^\nu, &&\text{since}\;\; 1+\frac{1}{2}+ \dots +\frac{1}{n} \leq \frac{3}{2}\log{n}\;\;\forall n\geq 2,\\
		&= 3\mu^{\beta} \lambda \phi \nu^2 M\log(M+1). 
\end{align*}
\end{proof}

\subsection{Limitations of our main theorem}
\label{sec:limitations-theorem}

The method we used in the proof of our main result, \cref{thm:single-step-bound}, is based on bounding the improvement of a single step in the procedure.
This approach has been successful in the derivation of a variety of prior work in smoothed analysis as we outline in \cref{sec:congestion-games-smoothed-models} and \cref{sec:sin-step-apps}, and it turns out that our generalized proof captures all the aforementioned works' results.

However, our model and proof do not capture some other recent results on smoothed analysis. We identified the following two different approaches in the literature:
\begin{enumerate}
	\item For some problems, the results rely on bounds on the improvement of \emph{sequences} of steps, rather than improvements of a single step. Notable examples are problems on graphs, and include \cites{Angel:2017aa}{Etscheid17}{Bibak21}{Chen20} on \lmaxcut and \cite{Englert_2016} on \tspIIopt. These results provide smoothed (quasi)polynomial bounds for the case of general graphs, whereas our method gives such bounds only for the case of graphs with at most (poly)logarithmic degree.
	\item Other results, not only use the above sequence-improvement analysis, but also heavily rely on the rich random interaction of the model at hand. For example, the smoothed polynomial time results of \cites{Angel:2017aa}{Bibak21}{Chen20} on \lmaxcut and \cite{Boodaghians20} on \netcoord rely on the fact that the graph is complete in the underlying problem.
\end{enumerate}
To tackle these cases with a generalized methodology, it seems that significantly new techniques are required.

Left as open research directions for future work, the following approaches could potentially capture the aforementioned cases.
\begin{enumerate}
	\item \label{enum:open-problem-theorem3-1-1} Instead of working on the base problem $A$, we could try to define a problem $B$ for which a single step essentially realizes a sequence of polynomially many single steps in $A$. Thus, the additional properties could potentially carry over to the single step of problem $B$. A generalized notion of separability could now apply to the single step of $B$ to achieve smoothed analysis results.
	\item \label{enum:open-problem-theorem3-1-2}We conjecture that a modification of our framework, which would take advantage of the high amount of randomness introduced in every single step of the procedure, would be capable to establish improved smoothed complexity results for some problems, such as \lmaxcut or \netcoord on graphs with $n$ nodes and degree at least $n - \lantheta{\log n}$. The modified framework would have to involve a complementary notion of separability that enables it to unify this proof approach.
\end{enumerate}

\section{Smoothed Analysis for Congestion Games}
\label{sec:congestion-games-smoothed-models}

Congestion games are composed of finite nonempty sets of \emph{players}
$\mathcal{N}=\intupto{n}$ and \emph{resources} $\mathcal{R}$. Each
player $i\in\mathcal{N}$ has a \emph{strategy set} $\Sigma_i\subseteq
2^\mathcal{R}$ and each resource $r\in\mathcal{R}$ has a \emph{cost} (or
\emph{latency}) \emph{function} $\kappa_r: \intupto{n} \map
\R_{\geq 0}$.
Each (pure) \emph{strategy profile} (or \emph{outcome})
$\vecc{\sigma}=(\sigma_1,\dots,\sigma_{n})\in
\vecc{\Sigma}\coloneqq \Sigma_1\times\dots\times\Sigma_{n}$
induces a \emph{load} on each resource $r$, equal to the number of players that use
it: 
$$\ell_r(\vecc{\sigma})\coloneqq\card{\ssets{i\in \mathcal{N} \fwh{r\in
\sigma_i}}}.$$
Then, the \emph{cost of player} $i$ is the total cost she experiences
from all resources that she is using: 
$$C_i(\vecc{\sigma})\coloneqq\sum_{r\in
\sigma_i} \kappa_r(\ell_r(\vecc{\sigma})).$$ 
An (exact) pure \emph{Nash equilibrium (PNE)} is an outcome $\vecc{\sigma}^*$ from
which no player can improve her cost by unilaterally deviating. Formally, for
any player $i\in \mathcal{N}$ and any deviation $\sigma_i'\in \Sigma_i$:  
$$C_i(\vecc{\sigma}^*)\leq
C_i(\sigma_i',\vecc{\sigma}^*_{-i}).$$
Thus, if a strategy profile $\vecc{\sigma}$ is \emph{not} a PNE, there has to
exist a player $i$ and a deviation $\sigma_i'\in \Sigma_i$ that reduces her
cost, i.e.
\begin{equation*}
\label{eq:BRD-def}
C_i(\sigma_i',\vecc{\sigma}_{-i}) < C_i(\vecc{\sigma}).
\end{equation*}
Such a strategy $\sigma_i'$ is then called a \emph{better-response} (of player $i$ with respect to the profile $\vecc{\sigma}$).

By the seminal work of~\textcite{Rosenthal1973a} we know that function $\Phi:\vecc{\Sigma}\map\R_{\geq 0}$ defined by
\begin{equation}
	\label{eq:rosenthal_potential}
	\Phi(\vecc{\sigma})=\sum_{r\in \mathcal R}\sum_{\ell=1}^{\ell_r(\vecc{\sigma})} \kappa_r(\ell)
\end{equation}
and commonly referred to as \emph{Rosenthal's potential}, has the property that
\begin{align*}
C_i(\sigma_i',\vecc{\sigma}_{-i}) - C_i(\vecc{\sigma})
=
\Phi(\sigma_i',\vecc{\sigma}_{-i}) - \Phi(\vecc{\sigma}) && \forall\vecc{\sigma}\in\vecc{\Sigma}\;\forall i\in\mathcal{N}\; \forall \sigma_i'\in\Sigma_i.
\end{align*}
In other words, function $\Phi$ is an (exact)
\emph{potential}~\parencite{Monderer1996a} of the corresponding congestion game.
This implies that PNE of a congestion game correspond \emph{exactly} to the set
of \emph{local} minimizers of its Rosenthal's
potential~\eqref{eq:rosenthal_potential}, meaning that $\Phi(\vecc{\sigma}^*)
\leq \Phi(\sigma_i',\vecc{\sigma}^*_{-i})$ for any player $i$ and any deviation
$\sigma_i'\in \Sigma_i$. This also immediately establishes the existence of PNE
in \emph{all} congestion games, since the potential function
$\Phi(\vecc{\sigma})$ can only take finitely many different values.

Depending on the type and representation of the cost functions, different
classes of congestion games can arise. Below, we describe three prominent ones which we will focus on
in this paper:

\begin{itemize}
	\item \emph{General.} The cost functions are given explicitly as a list of
	nonnegative values, one for each possible load on the resource;
	$\left(\kappa_r(1),\kappa_r(2),\dots,\kappa_r(n)\right)$.
	Notice how in this model we do not impose a monotonicity constraint; this is
	deliberate, to maintain full generality. If one wants to
	focus on nondecreasing cost functions (as is often the case in the
	literature), then the step-function representation (see below) can easily be used
	instead.
	\item \emph{Polynomials (of degree $d$).} The cost functions are polynomials
	of maximum degree $d\in\N$ with nonnegative coefficients. More specifically,
	the cost functions are given implicitly by the coefficients
	$\sset{\alpha_{r,j}}_{r\in \mathcal{R},j\in \intuptozero{d_r}}\subseteq \R_{\geq 0}$, where $d_r\leq
	d$, via
	\begin{equation}
	\label{eq:congestion_poly_costs_def}
	\kappa_r(\ell)=\sum_{j=0}^{d_r} \alpha_{r,j} \ell^j
	\qquad\text{for all}\;\; \ell\in\intupto{n}.
	\end{equation} 
	\item \emph{Step functions (with $d$ break-points).} The cost functions are
	nondecreasing, piecewise constant, given by pairs of break-points and
	value-increases. More specifically, for each resource $r \in \mathcal{R}$ there is a list of break-points $\mathcal{B}_r \subseteq \intupto{n}$ and associated jumps $\sset{\alpha_{r,j}}_{j\in\mathcal{B}_r}$. We denote the number of break-points of a resource $r$ by $d_r\coloneqq \card{\mathcal{B}_r}$, and we set $d\coloneqq\max_{r\in \mathcal{R}} d_r$. Then, the cost functions are given via
	\begin{equation}
		\label{eq:congestion_step_costs_def}
	\kappa_r(\ell)=\sum_{j \in \mathcal{B}_r \inters \intupto{\ell}}\alpha_{r,j}\qquad\text{for all}\;\; \ell\in\intupto{n}.
	\end{equation}
\end{itemize}

To the best of our knowledge, no smoothed analysis model has been established so
far for congestion games. In this paper, we propose and study the following
perturbation semantics for the aforementioned classes.
Note that as is commonly the case in smoothed analysis, the impact of randomness in the given input parameters depends crucially on the way they are used to compute the cost in the model.
While the notion of \emph{general congestion games} does typically capture all possible latency functions in the deterministic setting, the results for smoothed analysis do not directly translate between the models.\footnote{See e.g.\ the discussion in \cref{sec:localmaxcut} for \lmaxcut, where the current state-of-the-art results for the complete graph are better than for arbitrary graphs.}
\begin{itemize}
	\item \emph{General congestion games.} The costs $\kappa_{r}(\ell)$ are
	independently distributed according to densities $f_{r,\ell}:[0,1]\map
	[0,\phi]$, for all $r\in \mathcal{R}$ and $\ell\in \intupto{n}$.
	\item \emph{Polynomial games.} The coefficients $\alpha_{r,j}$ are independently distributed according to densities $f_{r,j}:[0,1]\map [0,\phi]$, for all $r\in \mathcal{R}$ and $j\in\intuptozero{d_r}$.
	\item \emph{Step-function games.} The jump increases $\alpha_{r,j}$ are
	independently distributed according to densities $f_{r,j}:[0,1]\map
	[0,\phi]$, for all $r\in \mathcal{R}$ and $j\in\mathcal{B}_r$. Notice,
	however, that the break-points $\mathcal{B}_r$ themselves are not subjected
	to any noise, and they are assumed to be fixed (and adversarially selected).
\end{itemize}

\subsection{Nash Equilibria as Combinatorial Local Optimization Problems}
\label{sec:NE_as_CLO}
We now show how \congestion, the problem of finding a pure Nash equilibrium in
congestion games, can actually be interpreted as a combinatorial local
optimization problem (with respect to our definitions in~\cref{sec:smoothPLS}),
for any of the cost models described above. For all models, their randomness
semantics translate directly to the respective randomness of cost coefficients
in the smoothed CLO problem. 
\begin{itemize}
	\item \emph{General congestion games.} By~\eqref{eq:rosenthal_potential} we
	know that PNE correspond exactly to local minimizers of the potential
	function
	$\Phi(\vecc{\sigma})=\sum_{r\in \mathcal{R}}\sum_{j=1}^{\ell_r(\vecc{\sigma})}\kappa_r(j)$.
	Therefore, finding a PNE of a general congestion game can be viewed as a binary ($M=1$)
	CLO problem, with cost dimension $\nu=\card{\mathcal
	R}n$ (the cost coordinates are given by $\mathcal{R} \times \intupto{n}$), where each strategy profile $\vecc{\sigma}$ is mapped to a
	cost configuration $\vecc s^\costv=\left(s_{r,j}\right)_{r\in \mathcal{R},j\in\intupto{n}}$ given by the indicator functions:
	\begin{equation}
		\label{eq:congestion-games-cost-config-map}
	s_{r,j}= \indicator{j \leq \ell_r(\vecc{\sigma})}.
	\end{equation}
	The CLO cost coefficients are given by $c_{r,j}=\kappa_{r}(j)$. 

	Furthermore, we want to establish a one-to-one correspondence between
	configurations $\vecc s$ of the CLO problem and strategy profiles
	$\vecc{\sigma}$ of our congestion games where, in particular, the neighbours
	of $\vecc s$ are exactly the configurations corresponding to single-player
	deviations $\ssets{(\sigma_i',\vecc{\sigma}_{-i})}_{i\in\mathcal N,
	\sigma_i'\in \Sigma_i}$. In that way, better-responses of the congestion
	game would correspond exactly to local improvements in the CLO formulation.
	However, this cannot be achieved by using just the cost part defined
	above by~\eqref{eq:congestion-games-cost-config-map}, since in that case,
	different strategy profiles may end up being mapped to the same cost
	part configuration (when they induce the same resource loads). To overcome this
	technical pitfall, we also maintain a non-cost part $\vecc s^\noncostv$, which keeps track of the actual strategies of the players: this can
	be easily achieved, with only an additional polynomial size
	burden.\footnote{E.g.\ we can choose $\bar{\nu}=\card{\mathcal{N}}$ and let
	$s^\noncostv_i\in\intupto{\card{\Sigma_i}}$ be the index of strategy
	$\sigma_i$ deployed by player $i$ in profile $\vecc{\sigma}$. To simplify
	our exposition, in the remaining congestion game classes studied below, we
	will avoid explicitly discussing these non-cost parts; they are
	identical to the general congestion game model.}

	Finally, notice that the neighbourhoods of the CLO problem we created can be explicitly listed and
	efficiently searched for a better (smaller cost) value: they have a maximum
	size of $n\cdot \max_{i\in \mathcal N} \card{\Sigma_i}$, which is polynomial in the
	description of the original congestion game.

	\item \emph{Polynomial games.} Using~\eqref{eq:congestion_poly_costs_def}, Rosenthal's potential~\eqref{eq:rosenthal_potential} can now be written as
	$$
	\Phi(\vecc{\sigma})
	=\sum_{r\in \mathcal R} \sum_{\ell=1}^{\ell_r(\vecc{\sigma})}\sum_{j=0}^{d_r} \alpha_{r,j} \ell^j
	= \sum_{r\in \mathcal R}\sum_{j=0}^{d_r} \alpha_{r,j} \sum_{\ell=1}^{\ell_r(\vecc{\sigma})} \ell^j
	=\sum_{r\in \mathcal R}\sum_{j=0}^{d_r}\alpha_{r,j} \mathfrak{S}_j(\ell_r(\vecc{\sigma})),
	$$
	where 
	$$\mathfrak{S}_{j}(\ell)\coloneq\sum_{k=1}^{\ell}
	k^{j}\leq \ell^{j+1}\leq \ell^{d+1} \leq n^{d+1}$$ for any
	$\ell\in\N$.
	This induces a CLO problem with parameters 
	$$
	\nu=\sum_{r\in \mathcal R} (d_r+1)\leq \card{\mathcal R}(d+1)
	\qquad\text{and}\qquad
	M = {n}^{d+1},
	$$ 
	with each strategy profile $\vecc{\sigma}$ corresponding to a cost configuration
	$\vecc{s}^\costv=\left(s_{r,j}\right)_{r\in\mathcal{R},j\in\intuptozero{d_r}}$
	given by:
	$$
	s_{r,j} = \mathfrak{S}_{j}(\ell_r(\vecc{\sigma})).
	$$
	The costs coefficients are given by $c_{r,j}=\alpha_{r,j}$. Notice that again all
	neighbourhoods are efficiently searchable since they have a polynomial
	maximum size of $n\cdot \max_{i\in \mathcal{N}}\card{\Sigma_i}$.
	
	\item \emph{Step-function games.} 
	Using~\eqref{eq:congestion_step_costs_def}, Rosenthal's potential~\eqref{eq:rosenthal_potential} can now be written as
	\begin{align*}
	\Phi(\vecc{\sigma})
		= \sum_{r\in\mathcal{R}} \sum_{\ell=1}^{\ell_r(\vecc{\sigma})}\sum_{j\in\mathcal{B}_r\inters\intupto{\ell}}\alpha_{r,j}
		&= \sum_{r\in \mathcal R} \sum_{j\in\mathcal{B}_r\inters\intupto{\ell_r(\vecc{\sigma})}}(\ell_r(\vecc{\sigma})-j+1)\alpha_{r,j}\\
	    &= \sum_{r\in \mathcal R} \sum_{j\in\mathcal{B}_r}\max(0,\ell_r(\vecc{\sigma})-j+1)\alpha_{r,j}.
	\end{align*}
	
	This induces a CLO problem with dimension 
	$$
	\nu= \sum_{r\in \mathcal{R}} \card{\mathcal{B}_r}=\sum_{r\in \mathcal{R}} d_r\leq \card{\mathcal R} d
	\qquad\text{and}\qquad
	M = n,
	$$ 
	with each strategy profile $\vecc{\sigma}$ corresponding to configuration
	$\vecc{s}^\costv=\left(s_{r,j}\right)_{r\in\mathcal{R},j\in\intupto{d_r}}$
	given by
	$$
	s_{r,j} = \max(0, \ell_r(\vecc{\sigma}) - \mathcal{B}_r(j) + 1),
	$$
	where we use $\mathcal{B}_r(j)$ to denote the $j$-th
	break-point\footnote{That is, if
	$\mathcal{B}_r=\ssets{b_1,b_2,\dots,b_{d_r}}\subseteq\intupto{n}$,
	then $\mathcal{B}_{r}(j)=b_j$, for every $j\in\intupto{d_r}$.} of resource
	$r$. The costs are given by $c_{r,j}=\alpha_{r,\mathcal{B}_r(j)}$. Again, it
	is straightforward to see that all neighbourhoods are efficiently
	searchable.
\end{itemize}

\subsection{Restrained Congestion Games}
\label{sec:congestion-positive}

Although congestion games are guaranteed to have (at least one) PNE, the
computational problem of actually finding one is considered hard; as a matter of
fact, the problem \congestion is one of the most prominent \pls/-complete
problems.
Our goal in this section is to investigate whether this computational
barrier can be bypassed, under the lens of the more optimistic complexity model
of smoothed analysis. 

To achieve this, we establish an upper bound
(\cref{th:smoothed-p-bounded-interaction-congestion-games}) on the expected
number of better-responses that need to be performed until a PNE is found in a
congestion game, as a function of a critical structural parameter of its action
space that we identify (see~\cref{def:restrained-games}). Then, we can deduce
that for congestion games in which this parameter is appropriately bounded, the
smoothed running time becomes tractable (\cref{th:restrained-cg-smoothed-poly}).
At the same time, we show how such a restriction does not make the
problem trivially tractable, by proving that \congestion remains \pls/-complete even for this
subclass of games (\cref{th:pls-hardness-restrained}).

\begin{definition}[Restrained Congestion Games]
	\label{def:restrained-games}
	A congestion game will be called \emph{$B$-restrained}, where $B\in\N$, if the maximum number of resources changed by any single-player deviation is at most $B$. Formally, 
	$$
	\max_{i\in\mathcal{N}}\max_{\sigma,\sigma'\in\Sigma_i} \card{\symd{\sigma}{\sigma'}} \leq B,
	$$
	where $\symd{}{}$ denotes the standard symmetric difference operator (recall that $\sigma,\sigma'$ are subsets of resources).
\end{definition}

\begin{theorem}
	\label{th:smoothed-p-bounded-interaction-congestion-games}
	Consider a $B$-restrained $n$-player congestion game, under any
	of the smoothed-analysis models described
	in~\cref{sec:congestion-games-smoothed-models} (namely general, polynomial,
	or step-function latencies), with maximum density parameter $\phi$. Then,
	performing any better-response dynamics, starting from an arbitrary strategy
	profile, converge to an (exact) PNE of the congestion
	game in an expected number of iterations that is bounded by
	\begin{itemize}
		\item $\landau{n^{B+3}k^2m^2 \phi}$ for general games,
		\item $\landau{n^{B+d+2}\log(n)(d+1)^3k^2 m^2\phi}$ for polynomial games with degree at most $d$, and
		\item $\landau{(d+1)^{B+2} n^2\log(n) k^2 m^2 \phi}$ for step-function games with at most $d$ break-points,
	\end{itemize}
	where $m=\card{\mathcal{R}}$ is the number of resources and $k = \max_{i\in\mathcal{N}} \card{\Sigma_i}$ is the maximum strategy set size.
\end{theorem}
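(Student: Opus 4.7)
The plan is to invoke the black-box bound of \cref{thm:single-step-bound} on the CLO encodings of congestion games developed in \cref{sec:NE_as_CLO}. Everything reduces to exhibiting, for each of the three cost representations, a covering $(\mathcal{E},\mathcal{I})$ with separability parameters $(\lambda,\beta,\mu)$ that exploits the $B$-restrained hypothesis, and then substituting the values of $\nu$ and $M$ already computed for each encoding into the generic bound $3\mu^\beta\lambda\nu^2 M\log(M+1)\phi$.

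I would use the same canonical covering across all three classes. For the transition cover $\mathcal{E}$, I would cluster every single-player deviation $(\sigma_i,\vecc{\sigma}_{-i})\to(\sigma_i',\vecc{\sigma}_{-i})$ according to the triple $(i,\sigma_i,\sigma_i')$; the number of such clusters is at most $\sum_{i\in\mathcal{N}}|\Sigma_i|^2\leq nk^2$, giving $\lambda\leq nk^2$. For the coordinate cover $\mathcal{I}$, I would partition the cost coordinates by resource, $I_r=\{(r,j) : j\}$, giving $m$ clusters. Because any transition in a cluster $T_{i,\sigma_i,\sigma_i'}$ can alter loads only on resources in $\sigma_i\triangle\sigma_i'$, the core is contained in $\bigcup_{r\in\sigma_i\triangle\sigma_i'}I_r$, and the $B$-restrained hypothesis immediately yields $\beta\leq B$.

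The main technical obstacle is bounding the diversity $\mu$, which is the only parameter that genuinely distinguishes the three models. For a fixed cluster $T_{i,\sigma_i,\sigma_i'}$ and $r\in\sigma_i\triangle\sigma_i'$, the sign of the load change is pinned down by the cluster, so the only remaining freedom across transitions is the old load $\ell_r$. For general games I would argue that the difference on $I_r$ is a signed unit vector (only $s_{r,\ell_r}$ or $s_{r,\ell_r+1}$ flips), which gives $\mu=n$. For polynomial games each $s_{r,j}=\mathfrak{S}_j(\ell_r(\vecc{\sigma}))$ is a deterministic function of $\ell_r$, so the whole difference vector on $I_r$ is determined by $\ell_r$, again yielding $\mu=n$. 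The subtle case is step functions: although $\ell_r$ ranges over $n$ values, the induced difference on $I_r$ depends only on the set of break-points in $\mathcal{B}_r$ lying below $\ell_r+1$ (respectively $\ell_r$), which takes at most $d_r+1\leq d+1$ distinct values; hence $\mu=d+1$.

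With $(\lambda,\beta,\mu)=(nk^2,B,n)$ for the general and polynomial models, and $(nk^2,B,d+1)$ for step functions, substituting into \cref{thm:single-step-bound} together with the dimensions from \cref{sec:NE_as_CLO} ($\nu\leq mn$ and $M=1$ for general; $\nu\leq m(d+1)$ and $M=n^{d+1}$ for polynomials; $\nu\leq md$ and $M=n$ for step functions) produces the three claimed bounds up to absolute constants, noting that $\log(n^{d+1}+1)=\mathcal{O}((d+1)\log n)$ absorbs the extra logarithmic factor in the polynomial case.
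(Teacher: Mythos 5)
Your proposal is correct and takes essentially the same approach as the paper: the same transition cover indexed by $(i,\sigma_i,\sigma_i')$ (you bound its size by $nk^2$ where the paper uses $nk(k-1)$, an immaterial difference), the same per-resource coordinate cover, the same use of the $B$-restrained hypothesis to get $\beta=B$, and the same diversity bounds $\mu=n$ for general and polynomial games and $\mu=d+1$ for step functions via the observation that the per-resource difference vector is determined by the old load (and, for step functions, only by its position relative to the break-points). The final substitutions into \cref{thm:single-step-bound} match the paper's computations.
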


\begin{proof}
	In \cref{sec:NE_as_CLO} we already showed how congestion games can be
	interpreted as CLO problems. In particular, we established a one-to-one
	correspondence between better-responses of the players to local improvements
	of the CLO cost objective (which corresponds to the value of Rosenthal's
	potential). Using this interpretation, we can now make use of our main
	black-box tool from~\cref{sec:sin-step}: bounding the
	expected number of local search steps will induce the same bound in the
	expected iterations of better-response dynamics in the original congestion
	game. Therefore, the gist of our proof is to construct coverings
	$(\mathcal{E}, \mathcal{I})$ so that the induced CLO problem can be shown to
	be $(\lambda, \beta, \mu)$-separable (see~\cref{def:separable-instances})
	for parameters with appropriately small magnitude. 

	We start by establishing some properties that will be shared across all
	three different cost models. Fix a congestion game and its corresponding CLO
	problem, as described in~\cref{sec:NE_as_CLO}.\footnote{We will use standard
	notation for the various components and parameters of the game and the CLO
	problem, as introduced above
	in~\cref{sec:congestion-games-smoothed-models}.} For convenience, we denote
	by $k_i\coloneqq \card{\Sigma_i}$ the size of the strategy set of a player
	$i$, and let $k\coloneqq \max_{i\in\mathcal{N}} k_i$.
	Also, in all models we have a similar index structure in the cost part;
    the cost-part is given by $(s_{r, j})_{r\in \mathcal{R}, j\in J_r}$, where $J_r$ depends on the cost function model ($J_r = \intupto{n}$ for general, $J_r = \intuptozero{d_r}$ for polynomial, and $J_r = \intupto{d_r}$ for step-function costs).

	The transition cover $\mathcal{E}$ is constructed from clusters that collect all edges in the neighbourhood graph (of the CLO problem) that correspond to a fixed deviation of a player, regardless of the configuration of the remaining players.
	Formally, we let
		\begin{gather*}
			\mathcal{E}\coloneqq \sset{E(i,
			\sigma_i, \sigma_i')\fwh{i\in\mathcal{N},\; \sigma_i,\sigma_i'\in\Sigma_i}}, \quad
			\text{where}\\
			E(i,\sigma_i, \sigma_i')\coloneqq \sset{\left(\vecc{s}(\sigma_i,\vecc{\sigma}_{-i}),\vecc{s}(\sigma_i',\vecc{\sigma}_{-i})\right)\in E\fwh{\vecc{\sigma}_{-i}\in\vecc{\Sigma}_{-i}}},
		\end{gather*}
	and $\vecc{s}(\vecc{\sigma})$ is used to denote the CLO configuration corresponding to strategy profile $\vecc{\sigma}$ in the congestion game.
	Now we immediately get the bound
		$$
		\card{\mathcal{E}} \leq n k (k-1),
		$$
	which will be used as the value for our separability parameter $\lambda$ (see~\cref{def:separable-instances}).
	
	Next, for the coordinate cover $\mathcal{I}$, we cluster the indices with respect to each resource, i.e.\ we choose
		$$
		\mathcal{I}\coloneqq\sset{I_r\fwh{r\in\mathcal{R}}},
		\quad\text{where}\quad
		I_r = \{(r, j) \fwh{j\in J_r\}}.
		$$
	In congestion games, a deviation $\sigma_i \to \sigma_i'$ only affects the resources $ r\in \symd{\sigma_i}{\sigma_i'}$.
	Their loads are changed to increase by $1$ for $ r\in \sigma_i' \setminus \sigma_i$ and decrease by $1$ for $ r\in \sigma_i \setminus \sigma_i'$; the load of all other resources does not change.
	Our choice of the cover $\mathcal{I}$, therefore, will allow us to settle $\beta$ for all models due to the $B$-restraint assumption on the size of $\symd{\sigma_i}{\sigma_i'}$. In more detail, recall that
	the cost parts of a configuration depend only on the loads of the resources, thus
	all components associated with resources $r\notin \symd{\sigma_i}{\sigma_i'}$ (specifically, the components with coordinates $I_r$) remain unchanged during the transition, since the load of $r$ does not change either. The fact that the size of those sets $\symd{\sigma_i}{\sigma_i'}$ is universally bounded, by assumption, will let us use $\beta = B$ as a separability parameter (see~\cref{def:separable-instances}).

	The remaining parameter $\mu$ depends on the structure of the configurations in the cost part.
	Again, we emphasize that for all cost models in \cref{sec:NE_as_CLO}, the sub-configuration $\vecc s_{I_r} (\vecc\sigma)$, which comprises all components of $\vecc s(\vecc\sigma)$ that correspond to a resource $r$, depends only on the load $\ell_r(\vecc \sigma)$ of resource $r$ (under strategy profile $\vecc \sigma$).
	We can therefore represent it as a function $\vecc h_r:\intuptozero{n}\to \intuptozero{M}^{J_r}$, i.e., $\vecc s_{I_r}(\vecc \sigma) = \vecc h_r(\ell_r(\vecc \sigma))$.
	For ease of notation we write $\vecc s_r = \vecc s_{I_r}$ in the following.

	To discuss $\range{I_r}{E(i, \sigma_i, \sigma_i')}$, we need to consider the configuration changes given by
		\begin{equation*}
			\vecc s_r(\vecc \sigma)-\vecc s_r(\vecc \sigma') = \begin{cases}
				\vecc h_r(\ell_r(\vecc \sigma))-\vecc h_r(\ell_r(\vecc \sigma')) = \vecc h_r(\ell_r(\vecc \sigma))-\vecc h_r(\ell_r(\vecc \sigma)+1),& r\in \sigma_i'\setminus \sigma_i,\\
				\vecc h_r(\ell_r(\vecc \sigma))-\vecc h_r(\ell_r(\vecc \sigma')) = \vecc h_r(\ell_r(\vecc \sigma))-\vecc h_r(\ell_r(\vecc \sigma)-1),& r\in \sigma_i\setminus \sigma_i'. 
			\end{cases}
		\end{equation*}
	The only variable in this expression is therefore the initial load $\ell_r(\vecc \sigma)$.
	In either case, there are $n$ possible initial loads for each resource\footnote{Note that although there are $n+1$ different loads $0,\dots, n$, a load-increasing resource cannot already have load $n$ and a decreasing one cannot have $0$.} and therefore also at most $n$ difference-vectors within $\range{I_r}{E(i, \sigma_i, \sigma_i')}$; thus $\mu \leq n$.
	In the following, we will discuss the actual configuration difference structure for each model and whether we can improve $\mu$ over this basic bound.
	
	\begin{itemize}
		\item \emph{General. }
			The CLO representation follows a binary model with $M=1$ and $ \nu = m n$: each resource $r$ corresponds to components $s_{r, j}$, $j = 1,\dots, n$ (the indices from $I_r$), with a value of $s_{r, j} = 1$ for $j \leq \ell_r(\vecc \sigma)$ and $s_{r, j} = 0$ otherwise.
			The function $\vecc h_r: \intuptozero{n} \map \{0, 1\}^{n} $ is thus given by 
			$$(\vecc h_r(\ell))_j = \indicator{j \leq \ell}, \qquad\text{for all}\;\; j\in[n].$$

			In particular, the vectors $\vecc \delta \in \range{I_r}{E(i, \sigma_i, \sigma_i')}$ are a result of moving the rightmost entry with value $1$ within the vector $\vecc s_r$ to the next larger or smaller load in the configuration, i.e.\ they are given by: $ \delta_{\ell_r(\vecc \sigma)+1} = -1 $ for $r\in \sigma_i'\setminus \sigma_i$; $\delta_{\ell_r(\vecc \sigma)} = +1$ for $r\in \sigma_i\setminus \sigma_i'$; and zeroes elsewhere. Since there are $n$ many such vectors for every resource $r$, we cannot improve over $\mu = n$.

			In this case, thus, \cref{thm:single-step-bound} yields an expected running time of at most
				\[ 3\cdot n^B nk(k-1)  \cdot (m n)^2 \cdot \phi = \landau{n^{B+3}k^2m^2 \phi}. \]
		\item \emph{Polynomial games. }
			For this model, each (cost-part) configuration component $s_{r, j}$ is given by accumulated monomials $\mathfrak{S}_j(\ell_r(\vecc \sigma))$, for degrees $j = 0,1,\dots, d_r$. Thus, $\nu = m(d+1)$, and also $M=n^{d+1}$, in order to capture all possible values of these functions.
			Therefore, we now get $\vecc h_r: \intuptozero{n} \map \intuptozero{n^{d+1}}^{d_r+1} $ with
				\[ \vecc h_r(\ell) = \left(\mathfrak{S}_0(\ell), \dots, \mathfrak{S}_{d_r}(\ell)\right) \]
			for the configuration component of $I_r$.
			Again, we cannot do better than the basic bound, so we use $\mu = n$.
			
        	Similarly to the previous case for general latency functions, using~\cref{thm:single-step-bound} we can bound the expected number of better-response iterations by
				\[ 3\cdot n^B nk(k-1)  \cdot (m(d+1))^2 n^{d+1}\log(n^{d+1}) \cdot \phi = \landau{n^{B+d+2}\log(n)(d+1)^3k^2 m^2\phi}. \]
		\item \emph{Step-function games. }
			The configuration mapping is now given by $s_{r, j} = \max\left(0, \ell_r(\vecc{\sigma}) - \mathcal{B}_r(j) + 1\right)$ and $M=n, \nu = m\cdot d$.
            Therefore, the (cost-part) configuration components of a resource $r$ are represented by the function $\vecc h_r:\intuptozero{n}\map \intuptozero{n}^{d_r}$ given by
				\[\vecc h_r(\ell) = \left(\max(0, \ell - \mathcal{B}_{r}(1) + 1),\dots, \max(0, \ell - \mathcal{B}_{r}(d_r) + 1)\right). \]

			In this case, we can do even better than the basic $\mu = n$ bound.
			We investigate the structure of the differences with respect to each coordinate $j\in J_r$, for an increase in the load of a resource $r$, i.e.\ for $r\in \sigma'_i\setminus \sigma_i$ (the decreasing case follows analogously):
				\begin{align*}
					\left(\vecc h_r(\ell)-\vecc h_r(\ell+1)\right)_j &= \max\left(0, \ell - \mathcal{B}_{r}(j) + 1\right) -\max\left(0, (\ell+1) - \mathcal{B}_{r}(j) + 1\right)\\
					&= \begin{cases}
						0 - 0 = 0, &\text{if}\;\; \ell < \mathcal{B}_{r}(j)-1, \\
						-\left((\ell+1) - \mathcal{B}_{r}(j) + 1 - 0\right) = -1, &\text{if}\;\; \ell = \mathcal{B}_{r}(j) - 1,\\
						(\ell - \mathcal{B}_{r}(j) + 1) - ((\ell+1) - \mathcal{B}_{r}(j) + 1) = -1, &\text{if}\;\; \ell > \mathcal{B}_{r}(j) - 1.
					\end{cases}
				\end{align*}

			Because the jump points $\mathcal{B}_{r}(j)$ are ordered increasingly with respect to $j$, the resulting vectors $\vecc h_r(\ell)-\vecc h_r(\ell+1)$ are of the form $(-1,\dots, -1, 0, \dots, 0)$, including the zero vector.
			Therefore, with respect to the coordinate cluster $I_r$, there are at most $d_r+1$ possible distinct vectors for the (cost-part) configuration differences, and we choose $\mu = \max_r d_r + 1 = d + 1$.
			By \cref{thm:single-step-bound} we can thus bound the expected number of iterations by
				\[ 3\cdot (d+1)^B nk(k-1) \cdot (m\cdot d)^2 n\log(n) \cdot \phi = \landau{(d+1)^{B+2} n^2\log(n) k^2 m^2 \phi}. \]
	\end{itemize}
\end{proof}

An immediate corollary
of~\cref{th:smoothed-p-bounded-interaction-congestion-games} is that, when
congestion games are sufficiently restrained, PNE can be found efficiently
via better-response dynamics. In what follows, by a \emph{constantly}- and \emph{polylogarithmically}-restrained congestion game we mean a $B$-restrained game with $B \in \landau{1}$ and $B \in \landau{\log^c N}$ for some constant $c > 0$, respectively, where $N$ is the size of the input.
\begin{corollary}
    \label{th:restrained-cg-smoothed-poly}
    Better-response dynamics terminate in polynomial smoothed
    time for the class of constantly-restrained congestion games,
    and in quasipolynomial smoothed time for
    polylogarithmically-restrained games, under any cost model.
    Under the step-function cost model, in $\landau{\log N}$-restrained congestion games with a constant number of steps $d$, better-response dynamics terminate in polynomial smoothed time.
\end{corollary}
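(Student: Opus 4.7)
The plan is to derive the corollary directly from the three explicit bounds given in \cref{th:smoothed-p-bounded-interaction-congestion-games} by plugging in each restraint regime for $B$ and checking the resulting asymptotics. All remaining quantities ($n$, $m$, $k$, $d$, $\log\phi$) are polynomially bounded in the input size $N$ under the standard encoding of congestion games, so the calculation reduces to tracking only the $B$-dependent factors in each of the three closed-form bounds.

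For the constantly-restrained case, setting $B \in \landausmall{1}$ turns $n^{B+3}$ (general), $n^{B+d+2}$ (polynomial with fixed $d$), and $(d+1)^{B+2}$ (step-function) into polynomials in the remaining parameters, giving polynomial smoothed time uniformly across all three cost models. For the polylogarithmically-restrained case, $B \in \landausmall{\log^c N}$ for some constant $c$, I would rewrite $n^B = 2^{B\log n}$ and note that $B \log n \in \landausmall{\log^{c+1} N}$, so $n^B = 2^{\operatorname{polylog}(N)}$ is quasipolynomial; the identical argument handles $(d+1)^B$ in the step-function bound. Multiplying by the remaining polynomial factors leaves the overall running time quasipolynomial in $N$.

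For the last claim, I would specialize the step-function bound $(d+1)^{B+2}n^2\log(n)k^2m^2\phi$ with $d \in \landausmall{1}$ and $B \in \landausmall{\log N}$. The critical factor becomes $(d+1)^{B+2} = 2^{\landausmall{\log N}} = N^{\landausmall{1}}$, and the remaining factors in the bound are already polynomial, so the total running time is polynomial. It is worth emphasizing why this sharper regime is specific to the step-function model: in the general and polynomial latency bounds the $B$-dependent factor is $n^B$ (rather than $(d+1)^B$), and $n^{\landausmall{\log N}} = 2^{\landausmall{\log^2 N}}$ is only quasipolynomial, so constant $d$ does not help.

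I do not foresee a genuine obstacle: the corollary is essentially asymptotic bookkeeping on top of \cref{th:smoothed-p-bounded-interaction-congestion-games}. The only care required is to be precise about what counts as part of the input size $N$ in each cost model --- in particular, whether the degree bound $d$ of the polynomial latency model is treated as constant or as part of the input, since this determines whether $n^{B+d+2}$ should be considered polynomial, quasipolynomial, or worse in $N$.
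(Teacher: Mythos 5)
Your derivation is correct and matches the paper's own (implicit) argument: the paper states this as an ``immediate corollary'' of \cref{th:smoothed-p-bounded-interaction-congestion-games} with no separate proof, and the intended justification is precisely the asymptotic substitution you carry out for each of the three cost-model bounds. Your closing caveat about the degree $d$ in the polynomial latency model is a genuine subtlety the paper leaves unstated---for constant $B$, the factor $n^{B+d+2}$ is polynomial in $N$ only when $d$ is itself bounded (or at least $d\log n \in \landau{\log N}$), so the ``any cost model'' phrasing tacitly assumes a constant maximum degree in that case.
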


We now show that the class of constantly-restrained congestion games, for which~\cref{th:restrained-cg-smoothed-poly} provides efficient smooth running time, constitutes a computationally meaningful restriction of arbitrary congestion games, since they can still encode the \pls/-completeness of the original problem.
The hardness is a straightforward adaptation of that in~\cite{Roughgarden16}, and so the proof of~\cref{th:pls-hardness-restrained} is deferred to~\cref{sec:congestion-hardness} for completeness.
It makes use of the fact that \lmaxcut is \pls/-complete even for constant-degree graphs.

\begin{theorem}
\label{th:pls-hardness-restrained}
The problem of computing a PNE of a
constantly-restrained congestion game is \pls/-complete, for all the input models
described in~\cref{sec:congestion-games-smoothed-models} (namely general,
polynomial, or step-function cost representations). 
\end{theorem}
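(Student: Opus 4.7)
The plan is to adapt the standard reduction from \lmaxcut to \congestion in the style of~\cite{Roughgarden16}, and then carefully track how the ``restrainedness'' parameter $B$ of the resulting congestion game depends on the structural parameters of the source \lmaxcut instance. Since \lmaxcut is already \pls/-complete on graphs of constant maximum degree (i.e., \lmaxcutd for small constant $d$, which I would cite), it suffices to show that the reduction, when applied to a constant-degree graph, produces a congestion game whose single-player deviations touch only a constant number of resources.

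First I would recall the textbook reduction: given a weighted graph $G=(V,E)$ with maximum degree $d$, create one player per vertex $v\in V$, with two strategies $\Sigma_v=\{0,1\}$ corresponding to the two sides of the cut. For every edge $e=\{u,v\}\in E$ introduce a resource $r_e$ (plus, depending on the exact gadget used, a constant number of auxiliary per-edge resources), and set $\sigma_v$ to use those edge-resources in a way that the value of Rosenthal's potential equals (up to an additive constant) the weight of the non-cut edges. Under this construction, when player $v$ flips its strategy the only resources whose load changes are those associated with edges incident to $v$, i.e., at most a constant $B=B(d)$ many. Thus the game is $B$-restrained with constant $B$, and the bijection between PNE and local max-cuts is the usual one, so the reduction is a (tight) \pls/-reduction. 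Membership in \pls/ is inherited from the general \congestion problem.

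For the three cost models, I would then argue that the edge-resource costs produced by the gadget can be realized in each representation without blowing up $B$: for the general model this is immediate since the cost vector $(\kappa_r(1),\dots,\kappa_r(n))$ can be listed explicitly; for step-functions it suffices to observe that the gadget only needs cost functions that jump once (at load $2$), hence $d=1$ break-point per resource, and the jump value can be set to the desired edge weight; for polynomial latencies, a degree-$1$ polynomial $\kappa_r(\ell)=\alpha_{r,1}\ell$ already gives the right marginal contribution to the potential when combined with appropriate dummy resources, or alternatively we can use a constant-degree polynomial that agrees with the step-function at the integer loads $\{0,1,2\}$ that are actually realized. In all three cases the gadget uses only a constant number of resources per edge, so $B$ remains constant.

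The main obstacle, and the only real care needed, is the polynomial-cost case: we must produce polynomial cost functions whose induced per-edge potential contribution coincides with an edge's cut/non-cut weight without introducing extra resource usage per player (which would inflate $B$). The cleanest route is to observe that since only loads in $\{0,1,2\}$ arise on each edge-resource (because only the two endpoints of the edge can use it), we have enough freedom to interpolate the desired values by a low-degree polynomial with nonnegative coefficients, possibly after rescaling edge weights by a positive constant (which preserves the local-optimum structure of \lmaxcut). Once this is verified, the same reduction works uniformly across the three models, and combined with the constantly-restrained bound on $B$, yields \pls/-completeness as stated. Full details are given in~\cref{sec:congestion-hardness}.
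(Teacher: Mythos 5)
Your proposal follows essentially the same route as the paper's proof: reduce from \lmaxcutd for constant $d$ (which is \pls/-complete by~\cite{ElsasserT11}), use the Roughgarden-style two-resource-per-edge gadget (one resource per edge for each side of the cut, each usable by only the edge's two endpoints), observe that a player's deviation only touches resources for its incident edges so $B\leq 2d$ is constant, and then verify the reduction under each cost representation.

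However, there is a concrete flaw in the polynomial-cost case of your argument. You assert that, since only loads $\ell\in\{0,1,2\}$ occur on each edge-resource, one can ``interpolate the desired values by a low-degree polynomial with nonnegative coefficients'' that ``agrees with the step-function at the integer loads.'' The step function used in the gadget has $\kappa(1)=0$ and $\kappa(2)=w_e>0$. A polynomial $\kappa(\ell)=\sum_{j\geq 0}\alpha_j\ell^j$ with nonnegative coefficients satisfying $\kappa(1)=\sum_j\alpha_j=0$ must have all $\alpha_j=0$, forcing $\kappa(2)=0$ as well; rescaling $w_e$ by a positive constant does not help. So this interpolation route cannot work. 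The paper instead uses the affine latency $\kappa_{r_e}(\ell)=\kappa_{\bar r_e}(\ell)=\ell\,w_e$ (your other suggested alternative, without any dummy resources), which does \emph{not} agree pointwise with the step function, and then verifies directly that the deviation cost difference is the same: each player uses exactly one of $\{r_e,\bar r_e\}$ per incident edge $e$ in either strategy, and both resources have the same latency function, so the ``offset'' terms $\kappa(1)=w_e$ cancel across the two strategies and only the first differences $\kappa(2)-\kappa(1)=w_e$ survive. The correct statement is thus not that you can match the step function pointwise, but that only the first difference matters and it can be realized by an affine latency with nonnegative coefficients.
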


In addition to the conditional intractability that \pls/-hardness implies for these families of congestion games, we show an unconditional lower bound on the worst-case running time of the standard local search algorithm of the problem. 
We do this by using the notion of a \emph{tight \pls/-reduction}, as discussed in \cref{sec:model}. 
Since we reduce from the problem \lmaxcutd for $d \geq 5$ (defined in \cref{sec:congestion-hardness}), which admits a configuration starting from which the standard local search algorithm needs exponentially many iterations (see discussion after proof of \cref{thm:local_max-k-cut}), our tight \pls/-reduction implies that standard local search of our families of congestion games -- under \emph{any} pivoting rule --  takes exponential time in the worst case.

\subsection{Network Congestion Games}
\label{sec:network-games-compact}

An interesting, and very well studied, variation on the vanilla
representation model for congestion games (which we presented at the start
of~\cref{sec:congestion-games-smoothed-models} above) is that of \emph{network}
congestion games. In such games, the strategy sets $\Sigma_i$ of the players are
not given explicitly in the input, but implicitly via an underlying directed
graph $G$ whose edges constitute the resources of the  game. More precisely, for
each player $i$ we are given an origin $o_i$ and a destination $d_i$ node of
$G$. Then, $\Sigma_i$ is defined (implicitly) as the set of all (simple) $o_i\to d_i$ paths in $G$. Importantly, this means that now players may
have exponentially many strategies available to them. 

A critical implication is that $\Sigma_i$ cannot be searched \emph{exhaustively}
for better-responses. However, a better-response can still be found efficiently:
keeping all other players fixed, a minimum-cost strategy of player $i$ is a
shortest path on graph $G$ with edge costs equal to the cost
$c_r\left(\ell_r(\vecc{\sigma}_{-i})+1\right)$ of an edge/resource $r$ when used
by player $i$. This means that actually a polynomial-time \emph{best}-response
oracle does exist. This immediately places network congestion games in the
complexity class \pls/ (since neighbourhoods can be searched efficiently for a
local cost improvement; see our discussion in~\cref{sec:model}) and thus it
constitutes a valid CLO problem (via a similar interpretation as we did for
general congestion games in~\cref{sec:NE_as_CLO}). To emphasize this, we
will refer to these ``canonical'' best-response dynamics of network congestion
games as \emph{shortest-path} dynamics.

Finding PNE of network congestion games remains a \pls/-complete
problem~\parencite{Ackermann2008}. Given the prominence of these games, both in
the theoretical and applied literature, in this section we want to identify
conditions under which network congestion games inherit the desirable properties of their
general counterparts that allow them to be tractable under
smoothed analysis. More precisely, can our positive result from~\cref
{th:smoothed-p-bounded-interaction-congestion-games} be applied to network congestion games
in a straightforward way? And which structural parameters are now relevant for
the running time bound? At the same time, the network congestion game instances that
allow for efficient smoothed solutions should still be interesting enough to
remain \pls/-hard under traditional worst-case analysis. We introduce the following family of network congestion games that are defined by two parameters.

\begin{definition}[Compact Network Congestion Games]
	\label{def:compact-network-game}
	For $A, B$ positive integers, a network congestion game is called \emph{$(A,B)$-compact} if (\ref{prop:bounded-space-equilibria-network}) each player has at most $A$ different best-response strategies, and (\ref{prop:short-paths-equilibria-network}) all such strategies are paths of length at most $B$.
	Formally, there exist strategy sets
	$\Sigma_i^*\subseteq{\Sigma_i}$, such that:
	\begin{enumerate}[(a)]
		\item\label{prop:bounded-space-equilibria-network} $\card{\Sigma_i^*}\leq A$ for all $i\in\mathcal{N}$ and $\argmin_{\sigma_i\in\Sigma_i} C_i(\sigma_i,\vecc{\sigma}_{-i})\subseteq \Sigma_i^*$ for all $i\in\mathcal{N}, \vecc{\sigma}_{-i}\in\vecc{\Sigma}_{-i}$, and
		\item\label{prop:short-paths-equilibria-network} $\card{\sigma_i}\leq B$ for all $i\in\mathcal{N}, \sigma_i\in\Sigma_i^*$.
	\end{enumerate}
\end{definition}

Property~(\ref{prop:short-paths-equilibria-network})
above will serve the purpose of imposing the restraint condition
(see~\cref{def:restrained-games}) needed to
deploy~\cref{th:smoothed-p-bounded-interaction-congestion-games}.
Property (\ref{prop:bounded-space-equilibria-network}) will help us constrain the
exponentiality of the strategy space of network games, in order to be able to
handle them using tools designed for general games.
Intuitively, this property can be related to classical vehicle routing
settings in which players can a priori exclude unreasonable detours or paths
that involve a road with a construction site with large delay.
Both
properties are illuminated within the proof of our following positive result for
network congestion games.

\begin{theorem}
\label{th:smoothed-bound-dynamic-network-games}
Consider an $(A,B)$-compact $n$-player network congestion game, under any
	of the smoothed-analysis models described
	in~\cref{sec:congestion-games-smoothed-models} (namely general, polynomial,
	or step-function latencies), with maximum density parameter $\phi$. 
    Then, performing shortest-path dynamics, starting from an arbitrary strategy profile, converges to a PNE of the
	game in an expected number of iterations that is polynomial in $\phi$,
	$(d+1)^B$, $A$, and the description of the game, where the parameter $d$ depends on the cost function representation. In particular: for general latencies, $(d+1)$ can be replaced by $n$; for polynomial latencies, $d$ is the maximum degree; and for step-functions, $d$ is the maximum number of break-points.
\end{theorem}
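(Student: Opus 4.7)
The plan is to reduce the claim to an adaptation of \cref{thm:single-step-bound} by splitting the transitions of any shortest-path trajectory into \emph{initial} and \emph{stable} ones, depending on whether the pre-deviation strategy of the moving player lies in $\Sigma_i^*$. The key observation is that, by property~(\ref{prop:bounded-space-equilibria-network}), every shortest-path best-response lands the moving player in $\Sigma_i^*$; hence, once player $i$ moves for the first time, all her subsequent strategies remain in $\Sigma_i^*$. Consequently, the number of initial transitions along any trajectory is trivially bounded by $n$.

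For the stable transitions, I would cast the problem as a CLO (as in~\cref{sec:NE_as_CLO}) and mirror the covering scheme used in the proof of~\cref{th:smoothed-p-bounded-interaction-congestion-games}: take the coordinate cover $\mathcal{I}=\sset{I_r\fwh{r\in\mathcal{R}}}$ and, as the transition cover, only the clusters $E(i,\sigma_i,\sigma_i')$ with \emph{both} $\sigma_i,\sigma_i'\in \Sigma_i^*$. Property~(\ref{prop:bounded-space-equilibria-network}) bounds the number of such clusters by $\lambda_{\mathrm{stab}}\leq nA(A-1)$; property~(\ref{prop:short-paths-equilibria-network}) yields $\card{\sigma_i\triangle\sigma_i'}\leq 2B$ and thus $\beta\leq 2B$; and the per-cluster diversity analysis with respect to each $I_r$ is \emph{identical} to the one in the proof of~\cref{th:smoothed-p-bounded-interaction-congestion-games}, since it depends only on the cost-function model and not on the strategy-set structure: $\mu\leq n$ for general and polynomial latencies, and $\mu\leq d+1$ for step-function latencies.

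The main technical subtlety is that the stable transitions do not form a contiguous sub-path of the trajectory, since they may be interleaved with initial ones, so one cannot invoke \cref{thm:single-step-bound} as a black box on a restricted sub-instance. I would instead adapt its proof directly, letting $\Delta_{\mathrm{stab}}$ denote the minimum cost improvement over \emph{stable} transitions only. The union-bound argument in~\eqref{eq:single-step-bound-helper-1}--\eqref{eq:single-step-bound-helper-3}, applied with the restricted cover above, gives $\prob{\Delta_{\mathrm{stab}}\leq \eps}\leq \lambda_{\mathrm{stab}}\mu^{2B}\phi\eps$; and since each stable transition strictly decreases the CLO cost (equivalently, Rosenthal's potential), whose range is bounded by $2M\nu$, the number of stable transitions on any trajectory is at most $2M\nu/\Delta_{\mathrm{stab}}$. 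The same tail-sum computation as in the final step of the proof of \cref{thm:single-step-bound} then yields an expected bound of $3\mu^{2B}\lambda_{\mathrm{stab}}\phi\nu^2 M\log(M+1)$ on the stable transitions. Plugging in the model-specific values of $M$, $\nu$, and $\mu$ exactly as in the proof of~\cref{th:smoothed-p-bounded-interaction-congestion-games}, and adding the $n$-bound on initial transitions, yields overall expected bounds that are polynomial in $\phi$, $A$, the description of the game, and the desired $(d+1)^B$ factor.
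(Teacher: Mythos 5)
Your proof is correct and takes a genuinely different route from the paper. The paper's proof orders the players by the step at which they first best-respond, defines a chain of subgames $G_i$ (restricted to the first $i$ players, with the remaining players frozen), and argues recursively that $T = S_n + T_n \leq (n-1) + \sum_{i=1}^n T_i$, where each $\expect{T_i}$ is then bounded by invoking \cref{th:smoothed-p-bounded-interaction-congestion-games} as a black box on a $2B$-restrained $i$-player game with strategy sets $\Sigma_j^*$ of size at most $A$. You instead classify each transition of the trajectory as \emph{initial} or \emph{stable} according to whether the mover's pre-deviation strategy lies in $\Sigma_i^*$, bound the initial ones by $n$ (each player contributes at most one, and this uses the same key observation as the paper: property~(\ref{prop:bounded-space-equilibria-network}) locks a player into $\Sigma_i^*$ after her first shortest-path move), and re-run the potential/union-bound argument of \cref{thm:single-step-bound} with the restricted transition cover $\{E(i,\sigma_i,\sigma_i') : \sigma_i,\sigma_i'\in\Sigma_i^*\}$. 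Your observation that this restricted cover cannot simply be fed to \cref{thm:single-step-bound} as a black box, because it does not cover the whole neighbourhood graph and because the stable transitions are interleaved with initial ones in the trajectory, is exactly right, and the fix you give -- re-deriving the tail bound with $\Delta_{\mathrm{stab}}$ in place of $\Delta$, noting that interleaved initial moves only further decrease the potential so the count $\leq 2M\nu/\Delta_{\mathrm{stab}}$ survives -- is sound. The trade-off between the two approaches: the paper's subgame decomposition lets it reuse \cref{th:smoothed-p-bounded-interaction-congestion-games} verbatim at the cost of the bookkeeping around $S_i$, $T_i$, and the frozen players; your version avoids that bookkeeping and yields a slightly tighter constant (roughly a factor of $n$, since the paper sums $n$ separate subgame bounds) at the cost of reopening the proof of \cref{thm:single-step-bound}.
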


\begin{proof}
    Let us count iterations, i.e., steps of the dynamics, starting from step $0$ when the dynamics are in the initial arbitrary strategy profile. We will say that the dynamics are in step $t$ if $t$ improving moves have taken place since step $0$.
    
    We introduce a random variable $S_i$ for the step of the dynamics in which player $i$ best-responded (that is, chose a shortest path) for the first time. Observe that once player $i$ best-responds, by definition, she plays a strategy in $\Sigma_i^*$ and from then on, whatever strategy she plays will be in that set. Without loss of generality, let us rearrange the players according to the order with which each best-responds for the first time, meaning that player $i$ best-responds only after all players $j$ with $j < i$ did so.
    Focusing on a player $i+1$, $i < n$, that has not best-responded yet in a given step of the dynamics, notice that neither have all players $j > i+1$.
    
    Let us now define the subgame $G_{i}$ in which only players in $[i]$ participate and where each edge has its load artificially increased according to the fixed strategies of the other players.
    The initial strategy profile of each subgame $G_i$ is the profile of the original game's dynamics after step $S_i$, which is in $\prod_{j\in [i]} \Sigma_j^*$, by definition.
    $G_{i}$ allows us to exactly reflect the original game's dynamics between time steps $S_i$ and $S_{i+1}-1$, while all strategies of the participating players are in $\prod_{j\in [i]} \Sigma_j^*$.
    
    Denote by $T_i$ the random variable of the number of iterations until a PNE in $G_{i}$ is reached.
    As soon as this is the case, the next time step in the original game must involve a new player, so $S_{i+1} \leq S_{i} + T_{i} + 1$.
    Note that not all $T_i$ steps of $G_i$ are necessarily played in the original game's dynamics, as player $i+1$ can deviate from their original action at time step $S_{i+1}$ before an equilibrium is reached in $G_i$.

    Now we can express the number of better-response steps in the original network congestion game as $T = S_{n} + T_n$, and by the observation above, we can recurse to bound this variable by
        $$
            T = S_{n} + T_n \leq S_{n-1} + T_{n-1} + 1 + T_n \leq \cdots \leq  (n-1) + \sum_{i=1}^n T_n.
        $$
    
    Now notice that in subgame $G_i$ of an $(A,B)$-compact network congestion game, any better-response dynamics (such as best-response dynamics) are equivalent to the same dynamics on a $2B$-restrained\footnote{When each strategy consists of at most $B$ edges, their pairwise symmetric differences have cardinality at most $2B$.} $i$-player congestion game with maximum strategy set size $A$ (see \cref{sec:congestion-positive}). Therefore, we get that
	\begin{align*}
		\expect{T} \leq (n-1) + \sum_{i=1}^{n} \expect{T_i},
	\end{align*}
    where the expectation $\expect{T_i}$ is bounded polynomially due to \cref{th:smoothed-p-bounded-interaction-congestion-games} for the various latency representations.
\end{proof}

An immediate consequence of~\cref{th:smoothed-bound-dynamic-network-games} is that (analogously to~\cref{th:restrained-cg-smoothed-poly} for general congestion games) in $(A,B)$-compact network games with sufficiently small parameters $A, B$, a PNE can be found efficiently under smoothness. 

\begin{corollary}
	\label{th:compact-netc-smoothed-poly}
    Let $N$ be the size of the input, and $A$ be a polynomial in $N$. Shortest-path dynamics on $(A,B)$-compact network congestion games within the general, polynomial, and step-function cost model terminate in polynomial smoothed time when $B \in \landau{1}$ and in quasipolynomial smoothed time when $B \in \landau{\log^c N}$ for some constant $c > 0$.
    Under the step-function cost model, when $A$ is a polynomial in $N$, $B \in \landau{\log N}$, and the number of steps $d$ is constant, shortest-path dynamics terminate in polynomial smoothed time.
\end{corollary}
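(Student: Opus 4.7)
The plan is to apply Theorem~\ref{th:smoothed-bound-dynamic-network-games} directly to each of the regimes listed in the corollary and verify that the resulting bound falls into the desired complexity class. That theorem already guarantees an expected number of shortest-path iterations that is polynomial in $\phi$, in $A$, in the description of the game, and in the factor $(d+1)^B$, with $d$ replaced by $n$ in the general and polynomial latency models and $d$ denoting the number of break-points in the step-function model. Since the corollary assumes $A$ is polynomial in $N$ and $\phi$ contributes polynomially, the entire argument reduces to tracking the asymptotic growth of the factor $(d+1)^B$ in each regime.

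For the first claim the cost model is unrestricted. In the general and polynomial latency models we have $d+1 \leq n+1 \leq N+1$, while in the step-function model $d+1$ is at most the input size and hence polynomial in $N$. If $B \in \landausmall{1}$, then $(d+1)^B = N^{\landausmall{1}}$, so the overall smoothed bound is polynomial in $N$. If instead $B \in \landausmall{\log^c N}$, then
\[
    (d+1)^B \;\leq\; 2^{B \log(d+1)} \;=\; 2^{\landausmall{\log^{c+1} N}},
\]
which is quasipolynomial in $N$, yielding a quasipolynomial total expected running time.

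For the second claim we specialise to the step-function cost model with a constant number of break-points $d$. Then the base $d+1$ is itself $\landausmall{1}$, so even for $B \in \landausmall{\log N}$ we obtain $(d+1)^B = 2^{\landausmall{\log N}} = N^{\landausmall{1}}$, keeping the overall expected number of iterations polynomial in $N$ once combined with the polynomial dependence on $\phi$, $A$ and the remaining game parameters.

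No genuine obstacle arises; the heavy lifting is entirely contained in Theorem~\ref{th:smoothed-bound-dynamic-network-games}. The only point worth flagging is that the second claim crucially exploits the step-function-specific form of that bound, where the base of the exponential factor is $d+1$ rather than $n$; this is exactly what permits $B$ to grow logarithmically without sacrificing polynomial smoothed complexity, in contrast with the general and polynomial latency regimes where the analogous base is $n$ and hence only $B \in \landausmall{1}$ preserves polynomiality.
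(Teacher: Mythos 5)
Your derivation is correct and matches the paper's: the corollary is stated as an immediate consequence of \cref{th:smoothed-bound-dynamic-network-games}, and you carry out precisely the straightforward parameter-tracking computation that this entails, correctly isolating the factor $(d+1)^B$ (with the base replaced by $n$ for general latencies) as the only term that could break polynomiality. The only (harmless) slip is the intermediate inequality $d+1 \leq n+1$ for polynomial latencies — the degree of the cost polynomials need not be bounded by the number of players — but $d+1 \leq N+1$ always holds, which is all your argument actually uses.
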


The following hardness result establishes that such games, even for $A, B \in \landau{1}$, are \pls/-hard. The proof is deferred to~\cref{sec:pl-hardness-network-compact}. It is based on the reduction constructed by~\textcite{Ackermann2008}, with special care taken in order to incorporate constant-length paths that can be established by making use of the fact that \lmaxcut is \pls/-complete even for constant degree graphs.

\begin{theorem}
	\label{th:pls-hardness-compact-network}
	The problem of computing a PNE of an $(A,B)$-compact network congestion game is \pls/-complete, even for $A, B \in \landau{1}$, for all the
	input models described in~\cref{sec:congestion-games-smoothed-models}
	(namely general, polynomial, or step-function latencies). 
\end{theorem}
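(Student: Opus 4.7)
The plan is to establish PLS-completeness through a tight reduction from \lmaxcutd for a sufficiently large constant $d$, which is PLS-complete (as invoked in the proof of~\cref{th:pls-hardness-restrained}). Membership in PLS is inherited from the standard network congestion game formulation, since shortest-path computation provides a polynomial-time better-response oracle; this remains true after imposing the $(A,B)$-compactness restriction, because any such game is still a network congestion game.

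For the hardness direction, I would adapt the reduction of~\textcite{Ackermann2008}. Given a weighted graph $G=(V,E)$ of constant maximum degree $d$, each vertex $v\in V$ is encoded as a player with a dedicated origin $o_v$ and destination $d_v$ in the constructed network. Each such player is given two \emph{canonical} $o_v\to d_v$ paths, corresponding to the two possible sides of the cut for $v$. For each edge $e=\{u,v\}\in E$ with weight $w_e$, I introduce a shared resource (or a small constant-size gadget) whose latency function is arranged so that if $u$ and $v$ both pick their ``same-side'' canonical paths they simultaneously traverse this resource and jointly incur a cost proportional to $w_e$, whereas otherwise at most one of them does. Rosenthal's potential of the resulting congestion game then coincides, up to an additive constant, with the weight of the cut encoded by the current strategy profile, so that PNE correspond exactly to local optima of the original MaxCut instance.

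The main obstacle will be to enforce the two compactness conditions of~\cref{def:compact-network-game} simultaneously with constant parameters $A$ and $B$. For property~(\ref{prop:short-paths-equilibria-network}), the key observation is that since each vertex has at most $d$ incident edges, each canonical path can be routed through exactly the $d$ corresponding edge gadgets, arranged so that each gadget contributes a constant number of hops; thus both canonical paths have length $\landau{d}=\landau{1}$, yielding $B=\landau{1}$. For property~(\ref{prop:bounded-space-equilibria-network}), I would augment the network so that every edge not lying on a canonical path carries a \emph{prohibitive} latency (a sufficiently large constant): then any $o_v\to d_v$ path that deviates from the two designed ones is strictly dominated by both canonical paths irrespective of the loads induced by the other players, so we may safely set $\Sigma_v^*$ to be the pair of canonical paths, giving $A=2$.

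Finally, I would verify realizability under each of the three cost representations of~\cref{sec:congestion-games-smoothed-models}, and the tightness of the reduction. The general (tabular) model accommodates the required costs immediately; for the step-function model a single break-point per shared or prohibitive resource suffices; and for polynomial latencies the same constants are captured by degree-zero polynomials. Tightness in the sense of~\textcite{Schaffer91} follows because every better-response of a player in the constructed game corresponds to a single side-flip of the associated vertex in the underlying cut, preserving the correspondence between transition paths; consequently, exponentially long local-search trajectories from \lmaxcutd instances lift to the compact network congestion game, which in turn yields the unconditional worst-case lower bound for standard local search under any pivoting rule on this family of games.
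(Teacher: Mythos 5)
Your high-level plan matches the paper's: reduce from \lmaxcutd (constant degree), encode each vertex as a player with two canonical ``side'' paths, realize cut-edge weights via shared load-dependent resources, and argue both compactness properties hold with constant parameters; the paper also proceeds by first going through the explicit congestion game of \cref{th:pls-hardness-restrained} and then ``networkifying'' it à la \textcite{Ackermann2008}. However, your argument for property~(\ref{prop:bounded-space-equilibria-network}) — that placing ``prohibitive latency'' on edges \emph{not lying on any canonical path} suffices to make the two canonical paths strictly dominant — does not hold as stated, and this is precisely the crux of the proof.

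The problem is that the union of all players' canonical paths forms a connected subgraph of the constructed network, and a deviating $o_v\to d_v$ path can perfectly well consist \emph{only} of edges that lie on \emph{some} player's canonical path (just not on $v$'s). Such a path touches no prohibitive edge, so your dominance claim has no leverage on it. In the paper's construction this situation genuinely arises: $o_i$ connects to a grid-like arrangement of supernodes $U_{j\ell}$ (each shared between two players via light and heavy arcs), and there are many $o_i\to d_i$ paths that wander through other players' rows and columns using only legitimate supernode arcs. The paper neutralizes these with a carefully graded cost on the \emph{heavy} arcs — cost $(i'-i+j-1)\cdot W$ for the arc into $U_{i'j}$, with $W$ larger than the total cut weight — so that a deviating path must pass through some heavy arc in a column $\ell\geq i+1$ and thereby accumulates heavy cost $\geq(2i-k)\cdot W$, strictly exceeding the cost of either canonical path (\cref{cl: dom-strat}). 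A flat ``sufficiently large constant'' on off-path edges cannot reproduce this, because the arcs you would need to penalize are themselves on-path for other players. To repair your argument you would either need a network in which the canonical-edge subgraph admits \emph{no} non-canonical $o_v\to d_v$ path at all (a structural constraint you do not establish, and which interacts non-trivially with the requirement that the gadget resources be genuinely shared), or you would need to replicate the increasing heavy-arc cost gradient — at which point you have essentially reconstructed the paper's argument. As written, the dominance step is a genuine gap.
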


Similarly to our \pls/-hardness reduction of \cref{th:pls-hardness-restrained}, the above theorem's \pls/-reduction is tight, in the sense of \cite{Schaffer91} (see \cref{sec:model}). The chain of tight \pls/-reductions that leads to Network Congestion Games starts from \lmaxcut with maximum degree $5$, and includes Congestion Games. As we discuss after the proof of \cref{thm:local_max-k-cut}, in such \lmaxcut instances there is a starting configuration from which all improvement sequences of standard local search have exponential length in the worst case. Therefore, shortest-path dynamics on our family of network congestion games -- under \emph{any} pivoting rule -- 
need exponentially many iterations in the worst case. In contrast, our \cref{th:smoothed-bound-dynamic-network-games} and \cref{th:compact-netc-smoothed-poly} show that under smoothness, even for significantly wider instance families that include these problematic cases, shortest-path dynamics terminate after polynomially many steps in expectation.

\section{Further Applications}
\label{sec:sin-step-apps}
For the remainder of the paper,
we will apply our main tool from \cref{sec:sin-step} to directly get bounds on the expected number of iteration steps for local search algorithms, for various well-known local optimization problems. It is important to emphasize that we will be looking into \pls/-complete problems, that is, problems that are the hardest ones in \pls/: unless $\pls/ = \fp/$, there is no (worst-case) polynomial time algorithm to find a local optimum of these problems. Furthermore, due to their reductions being tight \pls/-reductions (see \cite{Schaffer91} for the definition), their individual standard local search algorithms are unconditionally inefficient: in each of these problems there exists an initial configuration, such that if the standard local search algorithm starts from there, every sequence of improving moves until termination is exponentially long.

All the proofs follow a similar structure:
for each optimization problem, we first show that it
can be formulated as a CLO problem, establishing one-to-one correspondence between feasible solutions and configurations, respecting the intended neighbourhood structure and the parameters to be perturbed under smoothness; then, we apply \cref{thm:single-step-bound} to the smoothed CLO problem, via an appropriate covering, to bound the expected running time. 
Coming up with the right choice for a covering is the technical part of our proofs.

Many of the following problems have a parameterized neighbourhood (\kopt, \kflip, etc.).
Their neighbourhood size usually increases exponentially in this parameter and therefore it is often assumed as being constant: this implies efficiently searchable neighbourhoods and, importantly, membership in \pls/.
In principle, however, it might be of interest to also consider instance-dependent neighbourhood sizes, either in terms of superpolynomial local improvements or for special types of instances where, e.g., neighbourhoods for logarithmically large values of $k$ might actually be efficiently searchable for the specific problem at hand.
For that reason, we make a point for all our running-time bounds to \emph{not} hide any dependencies on these parameters in the $\landau{\cdot}$-notation.
In that way, even when the bound depends on $k$ in the exponent, one can readily deduce a bound on the expected number of local-search iterations, beyond constant $k$; in particular, for polylogarithmic $k$ we derive quasipolynomial bounds in the input size.

A technical remark is in order here: in the following, we will not explicitly differentiate between maximization and minimization problems, as \cref{thm:single-step-bound} can be applied analogously in either case.
In order to appeal to intuition though, for maximization problems we will look at the configuration difference $\vecc s_I'-\vecc s_I$, instead of $\vecc s_I-\vecc s_I'$, when discussing the cover properties (see~\cref{sec:sin-step}), as this corresponds to the expression giving a positive improvement for a transition $(\vecc s, \vecc s')$.

\subsection{\texorpdfstring{\tsp}{TSP} with the \texorpdfstring{\kopt}{k-Opt} Heuristic}
\label{sec:TSP-k-opt}

The Travelling Salesman problem (\tsp) is one of the best studied problems in combinatorial optimization~\parencite{10.2307/j.ctt7t8kc,10.2307/j.ctt7s8xg}.
The \tsp is the problem of finding a Hamiltonian cycle -- also called \emph{tour} in this setting -- with minimum sum of edge weights in a weighted undirected graph.
This problem is \np/-hard as a global optimization problem, which is a standard result in classical complexity theory \cite{GareyJohnson:NPcompleteness}.
One of the most commonly used methods to approximate the solution is the \kopt heuristic, where the neighbourhood is induced by replacing at most $k$ edges of the current tour on $n$ vertices with another set of edges, resulting in a new tour.
Despite the worst-case results on its approximation ratio, such as the tight $\lantheta{n^{1/k}}$ ratio in Metric TSP for any constant $k$ due to \cite{Zhong:TSPapprox}, more promising probabilistic observations exist and this heuristic seems to perform well in practice, see e.g.\ \cite{GutinPunnen:TSP, Johnson97, EngelsManthey:TSP2OptAvg}.

\textcite{Krentel:structurelocalopt} settled that the local optimization problem of \tspkopt is \pls/-complete, for a sufficiently large constant $k$, as well as the existence of instances that require exponentially many iterations of local improvements.
\textcite{Klauck96} improves this result to a tight \pls/-reduction to \tspkopt.
\textcite{Englert14} constructed two-dimensional Euclidean TSP instances in which \IIopt needs exponentially many iterations, under the $L_p$ metric for any $p >0$.
Furthermore, \textcite{CKT:kOpt} provide instances where \kopt takes exponentially long to converge, for every $k > 2$.
Note that these explicit constructions provide \emph{a} path of exponential length, but not \emph{all} paths starting at the same configuration have this property.

As for the smoothed complexity of \tsp, \cite{Englert14} established polynomial smoothed running time for the \IIopt heuristic for \tsp in the input model where the graph nodes are given as points in $\R^d$, drawn from independent distributions over $[0, 1]^d$ with density bounded by $\phi$, and their pair-wise distances follow the $L_p$ metric.
This result has later been improved in \cites{MantheyVeenstra:2opt}{MantheyRhijn:2opt} for the special case of smoothness by Gaussian noise.

Furthermore, in \cite{Englert_2016} a different model of smoothness was discussed for \IIopt in the general \tsp (on undirected graphs), where the cost of each edge is drawn from independent distributions, and a polynomial bound on the smoothed running time was established.
Our analysis extends the polynomial smoothed complexity of the \kopt heuristic for \tsp from $k=2$ to arbitrary $k \geq 3$.
Using distributions that are independent of the (undirected) edges reflects the application better, where we can think of the location of each city as fixed, but the travelling times between them can vary due to traffic, which is uncorrelated between different cities.
Our analysis covers the case of \tsp on a directed graph, known as \emph{Asymmetric} TSP (\atsp)~\parencite{KanellakisPapadimitriou:ATSP,GutinPunnen:TSP,Michiels2007,STV:ATSP_constantapprox}, which has not been studied before in the context of smoothed analysis.
Under this asymmetric model, the travelling times in different directions, between the same two cities, can differ arbitrarily.

When discussed in the context of a local search heuristic for the global optimization problem, the \kopt heuristic also includes every \IIopt improvement, and is therefore often considered a stronger method (which comes at the price of higher per-step complexity).
In terms of the running time of local search, however, the methods are not as easily comparable:
every solution of the \kopt heuristic also provides a solution for \IIopt on the same instance, but it might take more steps to actually terminate the local search procedure for \kopt due to the larger neighbourhood;
a solution for \kopt not only needs to be at least as good as every possible exchange of $2$ edges, but is compared to every \kopt move and local search may terminate only a long time after the first local optimum sufficient for the \IIopt heuristic has been encountered.

\begin{definition}[\tspkopt]
	By \tspkopt we denote the following local optimization problem. Let $G=(V, E)$ be a weighted undirected graph. Find a locally minimal tour $\mathcal{C}$ on $G$ with respect to the \kopt neighbourhood.
    The cost of a tour is given by the sum of its edge weights.
	The \kopt neighbourhood of $\mathcal{C}$ is the set of all tours that can be generated from $\mathcal{C}$ by replacing some set of $r \in \{ 2, 3, \dots, k\}$ edges from it with a different set of $r$ edges in the graph. 
    A solution is a tour that does not have a neighbour with smaller cost.
\end{definition}

Analogously, we define the problem \atspkopt on directed graphs.
We will use the notion of \emph{edges} to refer to both, edges in an undirected graph and arcs in a directed graph.

\begin{definition}[\atspkopt]
	Given a weighted directed graph $G=(V, E)$, \atspkopt is the problem of finding a locally minimal directed tour (a directed Hamiltonian cycle) under the \kopt neighbourhood.
    The cost of a directed tour is given by the sum of its edge weights.
    The neighbours of a given tour are all directed tours that can be generated by replacing at most $k$ edges from it by other edges.
\end{definition}

Note that to find a neighbour of $\mathcal{C}$ using the \kopt heuristic we are removing any set of $2 \leq r \leq k$ edges of a tour (for which there are $\binom{\card{V}}{r}$ possible choices of edges for each $r$, as $\mathcal{C}$ consists of $\card{V}$ many edges). Let us fix a set of $r$ edges from the current tour to be removed, thus splitting the tour into $r$ disjoint paths (or isolated vertices, which we consider as degenerate paths).
Then, for each such path $p$, its endpoints can only be reconnected via an edge to an endpoint of the other $r-1$ paths to form a new tour; in other words, at most $2(r-1)\leq 2r$ vertices need to be considered for each of the $2r$ endpoints.
Any full reconnection of the set of endpoint vertices can be thought of as a permutation of this set, so there are at most $(2r)!$ possible new tours.
In total, for each fixed value of $r$ we thus get at most $\binom{|V|}{r} \cdot (2r)!$ neighbouring tours, resulting in at most $(k-1) \cdot \binom{|V|}{k} \cdot (2k)! \in \landau{|V|^{k} (2k+1)!}$ many neighbours for any given tour.
Therefore, when $k$ is constant, the \kopt-neighbourhood of any configuration is efficiently searchable.
The same bound holds for \kopt of \atsp, where fewer of the reconnections outlined above are feasible.
In particular, for \atsp a valid heuristic is only given for $k\geq 3$, as the unique new (undirected) tour given by any application of the \IIopt procedure does not form a valid directed cycle.

Similarly to \cite{Englert_2016}, we assume that an initial tour of the graph is given or can be queried by an oracle, because even the problem of deciding whether a Hamiltonian cycle exists in a graph is \np/-complete.

For the smoothed versions of \tspkopt and \atspkopt, the edge weights $w_e$ are drawn from independent distributions $f_e: [0, 1] \map [0, \phi]$ each.
Our analysis only makes use of the concept of a subset of edges that forms a cycle and thus applies to both, undirected and directed edges.
Note that \tsp can be reduced to \atsp in standard complexity by choosing symmetric edge weights, which introduces some overhead by doubling the representation size, as every edge is represented by a pair of opposing edges.
However, under smoothed analysis these two representations of the \tsp differ substantially, because both the opposing edges' weights appear in the input of the problem, they experience independent noise in the \atsp, contrary to the single edge's weight that appears in the input of the original \tsp.

\begin{theorem}
\label{thm: smoothed-tsp}
	The problem of computing a local optimum of the \tsp or \atsp with the \kopt heuristic (for constant $k$), on a graph with $m$ edges, whose weights $w_e, e\in E$ are drawn independently from distributions with densities $f_e:[0, 1]\to [0, \phi]$, has polynomial smoothed complexity, i.e.\ the expected number of iterations of the local search algorithm can be bounded by $\landau{4^{k^2} m^{k+2} \phi}$.
\end{theorem}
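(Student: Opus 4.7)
The plan is to cast \tspkopt (and \atspkopt) as a binary CLO problem, exhibit a suitable covering of its transition/coordinate structure, and then invoke \cref{thm:single-step-bound}. I would first set up the CLO instance: take one cost coordinate per edge, so $\nu=m$ and $M=1$; encode each tour $\mathcal{C}$ by its indicator vector $\vecc{s}^\costv=(\indicator{e\in \mathcal{C}})_{e\in E}$; let $\vecc{c}=(w_e)_{e\in E}$. Since any edge set is the edge set of at most one tour, this map is injective (a trivial non-cost part can be appended if desired), the CLO cost equals the tour weight, and the \kopt neighbourhood in the graph is preserved edge-for-edge.

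For the covering, I would let the coordinate cover be $\mathcal{I}=\sset{\ssets{e}\fwh{e\in E}}$ and, for the transition cover, cluster transitions by the pair of changed edge-sets:
\[
T_{S_1,S_2}\coloneqq\sset{(\mathcal{C},\mathcal{C}')\in E\fwh{\mathcal{C}\setminus\mathcal{C}'=S_1,\; \mathcal{C}'\setminus\mathcal{C}=S_2}},
\]
ranging over disjoint $S_1,S_2\subseteq E$ with $\card{S_1}=\card{S_2}=r$ for each $r\in\ssets{2,\dots,k}$. With this choice, the core of $T_{S_1,S_2}$ is exactly $S_1\cup S_2$ (size at most $2k$), so we can cover it by $2k$ singletons, giving $\beta\leq 2k$. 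Moreover, for any edge $e$, the difference $s_e-s'_e$ is constant on the cluster: $+1$ if $e\in S_1$, $-1$ if $e\in S_2$, and $0$ otherwise. Hence $\diversity{T_{S_1,S_2}}{\ssets{e}}=1$ for every $e$, giving $\mu=1$.

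The main step is bounding $\lambda$. A naive count of $(S_1,S_2)$ pairs yields $\landau{m^{2k}}$, which is too weak. The key structural observation is that any feasible \kopt move must reconnect the $r$ paths left after removing $S_1$, so every edge of $S_2$ has both endpoints among the at most $2r$ vertices incident to $S_1$. Thus, for fixed $S_1$ there are at most $\binom{\binom{2r}{2}}{r}\leq (2k^2)^k=2^kk^{2k}$ choices of $S_2$, and
\[
\lambda \leq \sum_{r=2}^{k}\binom{m}{r}\binom{\binom{2r}{2}}{r} \leq k\cdot m^k\cdot 2^k k^{2k} \leq 4^{k^2}m^k,
\]
where the last inequality is $k+(2k+1)\log_2 k\leq 2k^2$ for $k\geq 2$. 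In the asymmetric case the argument is the same: the only change is that directional constraints further restrict the admissible $S_2$, so the bounds on $\lambda,\beta,\mu$ all carry over.

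Finally, I would plug $\mu^\beta\lambda\leq 4^{k^2}m^k$, $\nu^2=m^2$ and $M\log(M+1)=\log 2$ into \cref{thm:single-step-bound} to conclude the expected number of improving \kopt iterations is
\[
3\mu^\beta\lambda\nu^2 M\log(M+1)\phi \leq \landau{4^{k^2}m^{k+2}\phi},
\]
as claimed. I expect the only real technical obstacle to be precisely the $\lambda$ bound; once one realises that fixing $S_1$ forces $S_2$ onto a vertex set of size at most $2k$, the dependence on $m$ collapses from $m^{2k}$ to $m^k$, and every other ingredient is immediate from the CLO formulation and the black-box \cref{thm:single-step-bound}.
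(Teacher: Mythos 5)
Your proof is correct and arrives at the paper's bound, but it takes a genuinely different route in how the separability parameters are allocated. The paper indexes its transition clusters only by the set of edges \emph{removed}, $E(e_1,\dots,e_k)$, giving $\lambda = m^k$; the combinatorial work then appears in the core-size bound $\beta = 2k^2$ (counting the at most $2k^2 - k$ possible replacement edges together with the $k$ removed ones), and in $\mu = 2$ (each candidate edge is either used or not in the new tour). Your scheme indexes clusters by \emph{both} the removed and added edge-sets $T_{S_1,S_2}$, which collapses $\mu$ to $1$ and $\beta$ to $2k$ for free, and shifts the combinatorial argument into the $\lambda$ bound: you use the same structural fact (every added edge must join endpoints of the removed paths, a set of $\leq 2r$ vertices) to show that fixing $S_1$ constrains $S_2$ to $\binom{\binom{2r}{2}}{r}\leq(2k^2)^k$ possibilities, hence $\lambda \leq 4^{k^2} m^k$. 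In the end the deciding quantity $\mu^\beta\lambda$ is identical in both proofs, as the theorem's expression $\mu^\beta\lambda$ suggests it should be: the same structural observation can be placed on either side of the covering. Your version is arguably a touch cleaner since $\mu=1$ and $\beta=2k$ are immediate from the cluster definition, with a single localised estimate; the paper's version keeps $\lambda$ at its most naive value $m^k$ and keeps the transition clusters coarser, which may be preferable if one wanted to reuse the same clusters for a variant analysis. One small remark: $\binom{N}{r}\leq N^r$ is used implicitly, and the inequality $k+(2k+1)\log k \leq 2k^2$ indeed holds for all $k\geq 2$, so the arithmetic checks out.
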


\begin{proof}
	We define a binary ($M=1$) CLO problem. Each component of the cost-part corresponds to an edge in the graph, i.e.\ $\nu = m$ and the cost part is $\vecc s^\costv = (s_e)_{e\in E}$.
    A tour $\mathcal{C}$ is mapped to a CLO configuration $\vecc s(\mathcal{C})$ by indicating which edges are being used in it:
	\begin{equation*}
		s_e(\mathcal{C}) = \indicator{e\in \mathcal{C}}.
	\end{equation*}
	The cost coefficients are given by $c_e = w_e$.
	By this model, the one-to-one correspondence between the tours and the configurations of the CLO problem is immediate, as either is a representation of a subset of edges and their costs coincide.
	The neighbourhood relation for the CLO problem is derived directly from the neighbouring tours in the \tspkopt instance.
	Because the CLO cost coefficients are equal to the edge weights, the smoothed costs of the \tsp instance also translate directly to the smoothed CLO model in \cref{thm:single-step-bound}.
	We establish the remaining properties of the theorem (a covering $(\mathcal{E}, \mathcal{I})$ and its parameters $\lambda = m^{k}, \mu \in \{1, 2\}, \beta = 2k^2$) in the following.

	For the coordinate cover, we can simply use the singleton coordinate clusters: $ \mathcal{I} = \{\{e\}\fwh e\in E\} $.
	To cover the neighbourhood graph, we define transition clusters $E(e_1,\dots, e_k)$ for every choice of $e_i\in E$.
    Each cluster captures the set of transitions where the given edges $e_i$ are removed from the tour and replaced by (equally many) other edges to form a new tour.
	Note that we allow for some edges to coincide, $e_i = e_j$, to account for transitions where fewer than $k$ edges are replaced.
    Therefore, the number of clusters is $\lambda = m^k$.

	Next, we investigate for every transition cluster $E(e_1,\dots, e_k)$ which edges can possibly replace the removed ones to form a new tour.
    This is equivalent to determining the core of each transition cluster and allows us to settle $\beta$.
	Note that this is \emph{not} the same as the number of neighbours of a given tour.
	When a subset of $k$ edges is removed from a tour, there remains a set of $k$ paths.
	Note that some of these paths might be degenerate in the sense that they only contain a single vertex, which occurs when two adjacent $e_i, e_j$ are removed from the tour.
	We call these vertices \emph{isolated}.
	Denote by $k_1$ the number of isolated vertices and by $k_2$ the number of proper paths that remain, thus $k = k_1 + k_2$.
	In order to form a new tour, each of the endpoints and isolated vertices needs to be reconnected.
	Clearly, we need to reconnect the vertices within this set, as every other vertex in the graph still has $2$ incident edges.
	Additionally, we also do not reconnect them to their previous neighbour (or neighbours in the case of an isolated vertex).
	For every end point of a non-degenerate path, the edge that leads to the other end point of the same path also cannot be used, as this would create a short cycle.
	In total, we thus have $k_1 + 2k_2$ vertices that need to be reconnected and each of them has at most $k_1+2k_2-3$ possible edges to reconnect them properly.
	Note that due to symmetry, every edge we count appears twice in the analysis, so we can bound the number of possible edges that can be added to form a new tour by
		\begin{equation*}
			\frac{1}{2}(k_1+2k_2)(k_1+2k_2-3) \leq 2k^2-k,
		\end{equation*}
	and together with the $k$ edges that are being removed, at most $\beta = 2k^2$ different edges can be affected by any transition cluster.
    This argument, and therefore also the bound, still holds when a set of $r < k$ edges are replaced from a tour.
	
	Regarding $\mu$, we need to bound the number of possible values of $s_{e}-s_{e}'$ for any transition $(\vecc s, \vecc s')$ within a given transition cluster $E(e_1,\dots, e_k)$.
    For edges $e \in \{e_1,\dots, e_k\}$ that are removed, we observe $s_e - s_e' = 1$, whereas for any of the previously discussed edges $e$ that can possibly be added to form a new tour, we have $s_{e}-s_{e}'\in \{0, -1\}$, depending on whether it is actually used in the new tour represented by the configuration $\vecc s'$ or not.
    Therefore we conclude with $\mu = 2$ and apply \cref{thm:single-step-bound} with the derived parameters to get a bound on the expected number of iterations for \tspkopt of
		\begin{equation*}
			3 \cdot 2^{2k^2} m^k \cdot m^2 \cdot \phi = 3\cdot 4^{k^2} m^{k+2} \phi.
		\end{equation*}
\end{proof}

\begin{remark}
	A slightly more refined analysis allows to improve the constant for \tspIIopt:
	When removing two edges from a given tour, there exists only one other possibility to construct a new tour, as from the three neighbouring vertices available for reconnection, one is the previous neighbour and another is the other end of the same path.
	Therefore, we can choose $\mu = 1$ instead, as we know precisely which edges are being added to the tour, and get a bound on the expected number of iterations of $3m^{k+2}\phi$ instead.
\end{remark}

\subsection{Maximum Satisfiability}
\label{sec:max-sat}
\maxsat is the optimization problem of finding an assignment of a CNF formula that maximizes the weight of satisfied clauses.
This problem is \np/-hard and therefore heuristics are often used to approximate solutions in practice.
The \kflip local search algorithm is one of the most commonly used heuristics, both by itself or as part of a more general heuristic, see e.g.\ \cites{YagiuraIbaraki:kFlip_MaxSat,Szeider:kFlipSAT} for theoretical discussions on it.
Using this heuristic, we can define the local optimization problem \lmaxsatk:

\begin{definition}
	\lmaxsatk is the problem of finding a locally maximal truth assignment of a CNF formula with clauses $\mathcal{C} = \{C_1,\dots, C_m\}$ over binary variables $\mathcal{X} = \{x_1,\dots, x_n\}$.
    There are weights $w_1,\dots, w_m$ associated with the clauses and the weight of an assignment is given by the sum of weights of the satisfied clauses.
	The neighbours of a truth assignment are given by the \kflip neighbourhood, which are all assignments generated by changing the truth value of up to $k$ variables.
\end{definition}

When $k$ is a constant, this neighbourhood has polynomial size and is therefore efficiently searchable.
\lmaxsatI has been shown to be \pls/-complete in a series of works \cite{Krentel:MaxSAT, Krentel:structurelocalopt, Klauck96}, even for the case where the size of the clauses (i.e.\ the number of literals) and the number of occurrences of any single variable across all clauses are constant. These hardness results were proven via tight \pls/-reductions (see \cref{sec:model}), starting from the maximization or its equivalent minimization problem \circuitflip \parencite{Johnson:1988aa}, for which there exists a configuration with only exponentially long sequences to solutions. This implies directly that the \lmaxsatI instances we study have also such a configuration whose improvement sequences are all exponentially long under the standard local search algorithm. 

In \cite{Brandts-Longtin:smoothed_maxksat}, a smoothed version of \lmaxsatI  has been analysed, by adapting the method for \lmaxcut from \cite{Angel:2017aa}, claiming a bound on the expected running time of $\landau{n^{5\kappa+5}}$, where $\kappa$ is a bound on the size of the clauses. Note, however, that similarly to how the original method relies on the complete graph for the result, this result applies only to the complete formula, in which every possible clause of given fixed size is present.

Our contribution shows polynomial smoothed time for \lmaxsatI with bounded variable occurrence, but without any other restrictions on the formula.

\begin{theorem}
	\label{thm:smoothed_max-sat}
	The expected number of iterations of smoothed \lmaxsatI, where each clause weight $w_i$ is drawn from independent distributions with densities $f_i:[0, 1]\map [0, \phi]$, is bounded by $\landau{2^{B}n m^2\phi}$, where $B$ is the maximum number of different clauses that any variable occurs in.
\end{theorem}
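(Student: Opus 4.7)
The plan is to cast \lmaxsatI as a smoothed binary CLO problem with $\nu = m$ cost coordinates (one per clause) and $M = 1$, and then invoke \cref{thm:single-step-bound} with an appropriately chosen covering. Concretely, each truth assignment $\vecc{x}\in\ssets{0,1}^n$ is mapped to the configuration whose cost part is $\vecc{s}^\costv = (s_1,\dots,s_m)$ with $s_i = \indicator{C_i \text{ satisfied by } \vecc{x}}$, and whose non-cost part records $\vecc{x}$ itself to ensure a bijective correspondence between assignments and configurations. The cost coefficients are $c_i = w_i$, so that $C(\vecc s) = \sum_i w_i s_i$ is exactly the weight of the satisfied clauses; the neighbourhood is given by all single-variable flips; and the smoothness of the $w_i$ translates verbatim to the CLO smoothness required by \cref{thm:single-step-bound}.

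The next step is choosing a good covering $(\mathcal E,\mathcal I)$. For the coordinate cover I would simply take singletons $\mathcal I = \ssets{\ssets{i}\fwh i\in[m]}$. For the transition cover, the critical idea is to cluster transitions not only by the variable being flipped but also by the \emph{direction} of the flip: I would define, for each variable $x_j$ and each $\tau\in\ssets{0\to 1,\,1\to 0}$, the cluster $T_{j,\tau}$ consisting of all edges of the neighbourhood graph corresponding to flipping $x_j$ in direction $\tau$. This yields $\lambda = \cards{\mathcal E} = 2n$.

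For the separability parameters, the core of $T_{j,\tau}$ is exactly the set of clauses that contain the variable $x_j$, which has size at most $B$ by hypothesis; hence property~(\ref{item:separable_prop_1}) of \cref{def:separable-instances} is satisfied with $\beta = B$. The key point for the diversity is that within a cluster of fixed direction, the change $s_i' - s_i$ for a single clause $C_i$ containing $x_j$ can only be $0$ or $+1$ (when $\tau$ flips the literal of $x_j$ in $C_i$ to true) or $0$ or $-1$ (when it flips it to false), depending on the polarity of $x_j$ in $C_i$; in either case there are at most two possible differences per singleton coordinate cluster, giving $\mu = 2$. This is precisely the reason to split by direction rather than lumping both orientations into a single cluster, which would give $\mu = 3$ and a worse base of the exponential.

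Plugging $\lambda = 2n$, $\beta = B$, $\mu = 2$, $\nu = m$, $M = 1$ into \cref{thm:single-step-bound} immediately yields the bound $3 \cdot 2^B \cdot 2n \cdot m^2 \cdot \log 2 \cdot \phi = \landausmall{2^B n m^2 \phi}$. I expect the main (very minor) obstacle to be the careful verification that clauses not containing $x_j$ truly contribute trivially to $\vecc s - \vecc s'$ on $T_{j,\tau}$, which is what enables confining the core to the $B$ clauses in which $x_j$ appears; everything else is routine bookkeeping.
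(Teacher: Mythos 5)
Your proposal is correct and follows essentially the same route as the paper's proof: the same binary CLO encoding with one cost coordinate per clause, the same singleton coordinate cover, the same transition cover indexed by (variable, flip direction) giving $\lambda = 2n$, and the same reasoning that confines the core to the $B$ clauses containing $x_j$ (so $\beta = B$) and that a fixed flip direction yields $\mu = 2$ per singleton. The final plugging into \cref{thm:single-step-bound} matches the paper's $3\cdot 2^B \cdot 2n \cdot m^2 \cdot \phi$ bound.
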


\begin{proof}
	\lmaxsatI can be stated as a binary ($M=1$) CLO problem.
	We identify the cost part with the clauses ($\nu = m$) by defining a variable $s_i$ for every clause $C_i$ with cost coefficient $c_i = w_i$.
    Denote by $a:\mathcal{X}\map \{0, 1\}$ a truth assignment and by $\vecc s(a)$ its induced CLO configuration, defined as follows:
	In the cost part, we assign $s_i(a) = 1$ if the associated clause $C_i$ is satisfied, and $s_i(a) = 0$ otherwise.
    Thus, the cost of the CLO configuration $\vecc s(a)$ is equal to the weight of the assignment $a$ in the \lmaxsatI problem.
	Because this does not necessarily represent all the information of the underlying truth assignment, we use the non-cost part of $\vecc s(a)$ to represent $a$ explicitly as a binary string of length $\bar \nu = n$.
	The CLO neighbourhood can now be derived from the assignments in the \Iflip neighbourhood and be searched efficiently.
	For the smoothed problems, the random distributions of the weights in the smoothed \lmaxsatI instance are mapped identically to the distributions of the costs in the smoothed CLO problem.
	In the following, we apply \cref{thm:single-step-bound} to achieve a bound on the smoothed running time.

	To do so, we first specify the singleton coordinate cover $\mathcal{I} = \{\{i\}, i = 1,\dots, m \}$ of the indices of the cost part.
	The transition cover $\mathcal{E}$ comprises clusters $E(x_j, b)$ for each variable $ x_j $ and $b\in \{0, 1\}$.
    The cluster $E(x_j, b)$ collects all transitions where the truth assignment of variable $x_j$ flips from $b-1$ to value $b$, regardless of the assignment of the remaining variables.
	Formally, we can cover the edges of the neighbourhood graph using $\lambda = 2n$ many transition clusters by
	\begin{align*}
		\mathcal{E} &= \{E(x_j, b)\fwh j\in \intupto{n}, b\in \{0, 1\}]\}, \\
        \text{where} \quad E(x_j, b) &= \{(\vecc s(1-b, a_{-j}), \vecc s(b, a_{-j})) \fwh a:\mathcal{X}\map \{0, 1\} \text{ arbitrary}\},
	\end{align*}
    and the notation $(b, a_{-j})$ refers to the truth assignment that maps variable $x_j$ to value $b$ and all other variables according to assignment $a$.
    A change in the assignment of variable $x_j$ can only affect the satisfaction of clauses that contain the literal $x_j$ or $\bar x_j$.
    Therefore, we know that $\core{E(x_j, b)}$ comprises precisely the indices of those clauses.
	By assumption, there are at most $B$ such clauses for every $x_j$, and we have established $\beta = B$.

	Finally, we specify $\mu = 2$ by inspecting, for fixed $(x_j, b)$, the entries $s_i, s_i'$ of a transition $(\vecc s, \vecc s')\in E(x_j, b)$:
    For $b = 1$, when clause $C_i$ contains the literal $x_j$, it may either become satisfied by the changed assignment (i.e.\ $s_i = 0$ and $s_i' = 1$ after the transition) or remain satisfied ($s_i=s_i'=1$).
    For a clause that contains the negative literal $\bar x_j$, either it becomes unsatisfied ($s_i=1$ and $s_i'=0$) or remains satisfied ($s_i=s_i=1$).
    Analogously, the same observation holds with exchanged roles in the literals for $b = 0$.
	In any case, the difference $s_i' - s_i$ can only take two possible values for every transition in $E(x_j, b)$ and $\mu = 2$ is justified.

	Thus, by application of \cref{thm:single-step-bound}, the expected number of steps is bounded by $ 3\cdot 2^B2n \cdot m^2 \cdot \phi$.
\end{proof}

Our proof method can easily be adapted to account for more general \kflip neighbourhoods as well, where at most $k$ variables can change their values simultaneously.
As the \kflip neighbourhood for $k > 1$ also includes every \Iflip move, every local optimum to the former is also optimal for the latter; thus \lmaxsatk is still \pls/-hard.

\begin{theorem}
	\label{thm:smoothed_max-sat_kflip}
	The expected number of iterations of smoothed \lmaxsatk, where each clause weight $w_i$ is drawn from independent distributions with densities $f_i:[0, 1]\map [0, \phi]$ is bounded by $\landau{3^{k B} n^{k} m^2\phi }$, where $B$ is the maximum number of clauses that a variable occurs in.
\end{theorem}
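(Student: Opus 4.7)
The plan is to replay the proof of \cref{thm:smoothed_max-sat} almost verbatim, using the identical CLO encoding (binary, $M=1$, $\nu=m$, cost-part $s_i(a)=\indicator{a\text{ satisfies }C_i}$, non-cost-part storing the full assignment $a$), and keeping the singleton coordinate cover $\mathcal{I}=\{\{i\}\fwh i\in\intupto{m}\}$. The neighbourhood of the CLO problem is now the \kflip neighbourhood, and the distributions over clause weights translate unchanged to the CLO cost coefficients. The only substantive change from the $k=1$ proof is that the transition cover must aggregate simultaneous flips of up to $k$ variables into a single cluster.

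Concretely, I would define one cluster per nonempty flip-set of size at most $k$:
$$\mathcal{E}=\{E(S)\fwh S\subseteq\mathcal{X},\ 1\leq\card{S}\leq k\},\quad E(S)=\{(\vecc s(a),\vecc s(a'))\fwh a,a'\text{ differ exactly on }S\}.$$
Every edge of the \kflip neighbourhood graph is captured by some $E(S)$, and for constant $k$ the count is $\lambda=\sum_{j=1}^{k}\binom{n}{j}=\landau{n^{k}}$. A flip on $S$ can only alter the truth value of clauses that contain a literal over some variable in $S$; since each variable occurs in at most $B$ clauses, the core satisfies $\card{\core{E(S)}}\leq kB$, and can be covered by $\beta=kB$ singletons from $\mathcal{I}$.

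The key parameter to settle is the diversity $\mu$. Fix a cluster $E(S)$ and a clause $C_i\in\core{E(S)}$; decomposing $C_i$ into its literals over $S$ and over $\mathcal{X}\setminus S$ and noting that for every transition in $E(S)$ we have $s_i,s_i'\in\{0,1\}$, we immediately get $s_i-s_i'\in\{-1,0,1\}$, so $\mu\leq 3$. Unlike in the \Iflip proof, we cannot push this down to $\mu=2$: the cluster $E(S)$ contains, for each variable in $S$, both ``$0\to 1$'' and ``$1\to 0$'' flip-transitions (the pre-assignment on $S$ ranges freely), so all three signs of $s_i-s_i'$ are in principle realisable. A simple example with $C_i=x_1\vee x_2$ and $S=\{x_1\}$ confirms that each value in $\{-1,0,1\}$ does occur across transitions of the same cluster, so $\mu=3$ is tight for this covering.

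Plugging $\lambda=\landau{n^{k}}$, $\beta=kB$, $\mu=3$, $\nu=m$ and $M=1$ into \cref{thm:single-step-bound} yields the claimed bound
$$3\cdot 3^{kB}\cdot\landau{n^{k}}\cdot m^{2}\cdot\phi=\landau{3^{kB}n^{k}m^{2}\phi}.$$
I do not expect any real obstacle; the only genuine design choice is the granularity of $\mathcal{E}$, and the accounting above is mechanical once the covering is fixed. A finer cover $E(S,v)$ that additionally specifies the post-flip values $v:S\to\{0,1\}$ would force $\mu=2$ at the cost of an extra $2^{k}$ factor in $\lambda$, giving the same asymptotic bound for constant $k$; the coarser cover above is marginally cleaner and already suffices to match the statement of the theorem.
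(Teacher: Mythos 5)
Your proof is correct and follows essentially the same approach as the paper: same CLO encoding, singleton coordinate cover, $\beta=kB$ via bounded occurrence, $\mu=3$ from the binary cost part, and then \cref{thm:single-step-bound}. The only cosmetic difference is that the paper indexes the transition cover by ordered $k$-tuples of variables (allowing repeats) yielding $\lambda = n^k$, whereas you index by nonempty subsets of size at most $k$ yielding $\lambda = \sum_{j=1}^k\binom{n}{j}=\landau{n^k}$; the two give the same asymptotic bound, and your remark about the finer cover $E(S,v)$ trading $\mu=2$ for an extra $2^k$ in $\lambda$ is also accurate.
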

\begin{proof}
	The proof is analogous to that of \cref{thm:smoothed_max-sat} and we only state the necessary changes to deal with the \kflip neighbourhood instead of \Iflip:
	
	For the transition cover, we now use clusters $E(x_{j_1},\dots, x_{j_k})$, $x_{j_\ell}\in \mathcal{X}$.
    Each cluster collects all transitions where all given variables change their truth assignment, regardless of their initial assignment or the assignment of the remaining variables.
	We do not require all $x_{j_\ell}$ to be different, so as to account for neighbouring assignments where fewer than $k$ variables flip.
	Therefore, we can now bound the number of transition clusters by $\lambda = n^{k}$.
	Every variable occurs in at most $B$ clauses, therefore the transitions in $E(x_{j_1},\dots, x_{j_k})$ can affect at most $\beta = k B$ cost components.

	Finally, because the CLO problem is binary, we know that within every component $s_i$ the transition difference $s_i' - s_i$ can take at most the three values $ \{-1, 0, 1\}$, which gives $\mu = 3$.
	Applying \cref{thm:single-step-bound} yields a bound for the expected number of steps of $3\cdot 3^{k B} n^{k}\cdot m^2 \cdot \phi $.
\end{proof}

\subsection{Maximum Cut}
\label{sec:localmaxcut}
\lmaxcut is among the problems in \pls/ that has been most extensively studied in the context of smoothed analysis.
The result that our method generalizes is the polynomial smoothed complexity for graphs of up to logarithmic degree due to \cite{ElsasserT11}, who studied a smoothness model with Gaussian noise.
By application of our black-box tool, we are able to extend this result to the smoothness model of arbitrary distributions with density bounded by $\phi$ and also prove it directly for the expected running time instead of the weaker \emph{probably smoothed complexity} they use.
Further polynomial results under smoothness exist for \lmaxcut on complete graphs, both for a version of ``metric'' \lmaxcut in \cite{Etscheid:2015} and for the noise model with general edge weights in \cites{Angel:2017aa,Bibak21}.
For general graphs on the other hand, quasipolynomial smoothed complexity was shown in \cite{Etscheid17} and further improved by \cite{Chen20}, as well as parametrized by the arboricity of the graph in \cite{Schwartzman:SparseSmoothedLocalMaxCut}.

It is important to note that none of the aforementioned graph models is stronger than any other, and that is due to a delicate difference between worst-case and smoothed analysis in the equivalence between input representations. For example, in worst-case analysis, having a general graph is equivalent to having the complete graph as input to \lmaxcut, in the sense that the complexity of the problem remains the same (up to a constant factor): any general graph can be turned into a complete graph with zero weights for the non-existing edges. On the contrary, in smoothed analysis this is not the case, since an instance in which an edge does not exist cannot be straightforwardly turned into an -- in terms of smoothed complexity -- equivalent instance by setting that edge's weight to zero: as soon as the zero weight appears in the input it is supposed to undergo smoothness, something that we wish to disallow for an essentially non-existing edge.

Our method allows us to easily extend the analysis to \lmaxkcut, a generalization of \lmaxcut that considers partitions of the vertex set into $k$ blocks instead of only $2$.
The \flip neighbourhood for \lmaxkcut is similarly given by a single vertex changing from one of the cut-sets to another.
\lmaxkcut under this neighbourhood has been analysed by \cite{Bibak21}, who established polynomial smoothed complexity for the complete graph and a bound of $\landau{\phi n^{2(2k-1)k\log(kn)+3+\eta}}$, for every $\eta > 0$, in the general case.
Our method approaches this problem from a different perspective through the parameterization on the degree of the graph and thereby improves on this result for graphs of up to logarithmic degree with a bound that does not depend on $k$ at all.

We first analyse the case of $k=2$, namely, the classical \lmaxcut problem.

\begin{definition}
\label{def:lmaxcut}
	\lmaxcut is the problem of finding a locally maximal cut of a weighted, undirected graph $G = (V, E), n = \card{V}, m = \card{E}$ with weights $w_e$ for $e\in E$.
	A cut is a partition $(V_1, V_2)$ of $V$ and the neighbours of a cut are given by the \flip neighbourhood, in which any single vertex can move from one partition block to the other.
	The weight of a cut is given by the sum of all edge weights across the cut, i.e.\
	\[ W(V_1, V_2) = \sum_{e\in E(V_1, V_2)}w_{e}, \quad \text{where}\quad E(V_1, V_2) = \{uv \in E \fwh u\in V_1, v\in V_2\} \]
    is the cut-set of the cut $(V_1, V_2)$ and we call the edges $e\in E(V_1, V_2)$ cut-edges. 
\end{definition}
\lmaxcut is \pls/-complete, as shown in \cite{Schaffer91};
this is still the case for graphs whose degree is bounded as low as five \cite{ElsasserT11}.
Independently of this reduction, \cite{MT10} have shown that even when the degree is bounded by four, there exist instances and initial cuts such that every local search sequence is exponentially long.

\begin{theorem}
	\label{thm:local_max-cut}
    The expected number of iterations of smoothed \lmaxcut with weights $w_e$ drawn from independent distributions $f_e:[-1, 1]\map [0, \phi]$ is bounded by $\landau{2^{\Delta(G)} n m^2\phi }$, where $\Delta(G)$ denotes the maximum degree of the graph.
\end{theorem}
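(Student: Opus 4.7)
The plan is to instantiate the black-box tool of \cref{thm:single-step-bound} on an appropriate CLO encoding of \lmaxcut, so essentially all of the work reduces to choosing the covering and reading off $(\lambda,\beta,\mu)$.

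First I would cast \lmaxcut as a binary ($M=1$) CLO problem in the spirit of \cref{sec:NE_as_CLO}. Take one cost coordinate per edge, so $\nu=m$, with cost coefficients $c_e=w_e$. For a cut $(V_1,V_2)$, the cost part is $s_e=\indicator{e\in E(V_1,V_2)}$, which makes $C(\vecc s)$ equal to the cut weight. To ensure that configurations are in one-to-one correspondence with cuts (not just with cut-sets), I would use a non-cost part of $\bar\nu=n$ bits recording the block assignment of each vertex, exactly as was done for congestion games. The \flip neighbourhood of \lmaxcut then corresponds directly to the CLO neighbourhood, and since we are maximizing I will look at $\vecc s'-\vecc s$ (as allowed by the remark at the start of \cref{sec:sin-step-apps}). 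The smoothness model on $w_e$ maps verbatim onto the smoothness model on $\vecc c$.

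Next I would build the covering. For the coordinate cover, take the singletons $\mathcal{I}=\ssets{\ssets{e}\fwh e\in E}$. For the transition cover, cluster edges of the neighbourhood graph by which vertex is being flipped:
\begin{equation*}
\mathcal{E}=\sset{E(v)\fwh v\in V},\qquad E(v)=\sset{(\vecc s(V_1,V_2),\vecc s(V_1\triangle\ssets{v},V_2\triangle\ssets{v}))\fwh (V_1,V_2)\text{ cut}},
\end{equation*}
which immediately yields $\lambda=n$. Flipping $v$ can only toggle the status of edges incident to $v$, so $\core{E(v)}=\inc{v}$ has size at most $\Delta(G)$ and is covered by at most $\Delta(G)$ singletons, giving $\beta=\Delta(G)$. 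For a single-edge coordinate cluster $\ssets{e}$ and a transition in $E(v)$, the scalar $s_e'-s_e$ lies in $\ssets{-1,0,+1}$: it is $0$ if $e$ is not incident to $v$, and otherwise $\pm1$ depending on whether the other endpoint of $e$ currently lies in the same block as $v$. Hence $\diversity{E(v)}{\ssets{e}}\leq 2$ uniformly, so $\mu=2$.

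Plugging $(\lambda,\beta,\mu)=(n,\Delta(G),2)$ and $M=1$, $\nu=m$ into \cref{thm:single-step-bound} delivers
\begin{equation*}
\expect{\mathcal{T}}\leq 3\cdot 2^{\Delta(G)}\cdot n\cdot m^2\cdot\log 2\cdot\phi=\landau{2^{\Delta(G)} n m^2\phi},
\end{equation*}
as required. There is essentially no hard technical step: the only thing one must double-check is that bundling \emph{all} starting cuts into a single per-vertex transition cluster does not blow up $\mu$, and this is exactly why the singleton coordinate cover is the right choice — a per-edge projection cannot see more than the trivial three values, and the incident-edge structure keeps $\beta$ at $\Delta(G)$ rather than at $m$.
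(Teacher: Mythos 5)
Your proof is correct and follows essentially the same route as the paper: the same binary CLO encoding with per-edge cost coordinates and a non-cost bit per vertex, the same singleton coordinate cover, the same per-vertex transition clusters giving $\lambda=n$, $\beta=\Delta(G)$, $\mu=2$, and the same plug-in into the black-box theorem. The only cosmetic difference is that you explain $\mu=2$ by observing $s_e'-s_e\in\{-1,0,+1\}$ and then noting each singleton cluster sees at most two of these, whereas the paper observes directly that every incident edge flips (so the difference is always $\pm1$); both arguments are correct and lead to the same value.
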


\begin{proof}
	To formulate this problem as a CLO problem, we define a binary ($M=1$) variable $s_e$ in the cost-part for every edge $e\in E$ in the graph of the \lmaxcut instance, thus $\nu = m$.
    For the CLO configuration $\vecc s(V_1, V_2)$ induced by a cut $(V_1, V_2)$, we set $s_e = 1$ if and only if the edge $e$ is a cut-edge and $0$ otherwise.
	Even though the vertex partition can be efficiently recovered from this representation (up to the arbitrary assignment of connected components to either part of the partition), we disregard the discussion and represent the partition explicitly in the non-cost part instead, using a binary string of length $\bar \nu = n$ that indicates which vertices belong to $V_1$.
	The edge weights are translated directly to the cost coefficients by $c_e = w_e$ (which also makes smoothed \lmaxcut correspond directly to the smoothed CLO problem) and the neighbourhood relation is derived from the original problem as well to be in one-to-one correspondence.
	This neighbourhood is efficiently searchable and the cost of the CLO configuration induced by a cut is equal to the weight of the cut in \lmaxcut.

    Now, we apply \cref{thm:single-step-bound} on the smoothed CLO problem defined above.
    We derive the remaining properties in the following.
    The covering uses a singleton coordinate cover $ \mathcal{I} = \{\{e\}, e\in E \} $.
    For the transition cover, $\mathcal{E}$ comprises clusters $E(v)$ for every vertex $v\in V$, capturing all transitions where the vertex $v$ moves from one side to the other, and thus $\lambda = n$.
    When vertex $v$ moves, all edges that are not incident to $v$ either stay cut-edges or non-cut-edges, so the only components possibly affected are those of edges to neighbours of $v$.
    Therefore, we have $\beta = \Delta(G)$.
	Moreover, all edges that are incident to $v$ do in fact change their configuration: Every edge to a neighbour of $v$ in the same partition block as $v$ before the transition is a cut-edge afterwards and also the other way around.
	Alas, we can choose $\mu = 2$, as the difference $s_e'-s_e$ for every affected edge $e$ is always $\pm 1$.
    Thus, \cref{thm:single-step-bound} provides a bound on the expected number of iterations of $3\cdot 2^{\Delta(G)} n \cdot m^2 \cdot \phi$.
\end{proof}

The result also applies with little modification to \lmaxkcut. 

\begin{definition}
	\lmaxkcut is the problem of finding a locally maximal cut of a weighted, undirected graph $G = (V, E), n = \card{V}, m = \card{E}$ with weights $w_e$ for $e\in E$.
	A cut is a partition $(V_1, \dots, V_k)$ of $V$ and the neighbours of a cut are given by the \flip neighbourhood, in which a single vertex can move from one partition block to another.
	The weight of a cut is given by the sum of all edge weights across the $k$-cut, i.e.\
    \[ W(V_1, \dots, V_k) = \sum_{e\in E'} w_{e}, \quad \text{where}\quad E' = \bigcup_{i < j \leq k}E(V_i, V_j), \]
	and we call the edges $e\in E'$ cut-edges.
\end{definition}

This generalization only has a minor impact on our bound, where the exponential part of the bound increases from $2^{\Delta(G)}$ to $3^{\Delta(G)}$.
Note that compared to the result by \cite{Bibak21}, the parameter $k$ does not appear in our bound.

\begin{theorem}
    \label{thm:local_max-k-cut}
    The expected number of iterations of smoothed \lmaxkcut with densities bounded by $\phi$ is bounded by $\landau{3^{\Delta(G)} n m^2 \phi }$, where $\Delta(G)$ denotes the maximum degree of the graph.
\end{theorem}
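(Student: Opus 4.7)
The plan is to mirror the CLO formulation used in the proof of \cref{thm:local_max-cut}, with the only substantive change occurring in the counting of single-coordinate diversities. I would encode a $k$-cut $(V_1,\dots,V_k)$ as a binary ($M=1$) CLO configuration with $\nu=m$ cost components $s_e$, by setting $s_e=1$ if the endpoints of $e$ lie in different blocks of the partition and $s_e=0$ otherwise. Since multiple $k$-partitions can induce the same cut-indicator pattern, I would store the explicit block assignment in the non-cost part, as a string of length $\bar{\nu}=n$ over $\intupto{k}$. The cost coefficients are $c_e=w_e$, so the smoothed input translates directly, the \flip neighbourhood is mirrored one-to-one by single-vertex reassignments, and the CLO cost coincides with the cut weight.

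Next, I would set up the covering $(\mathcal{E},\mathcal{I})$. Take the singleton coordinate cover $\mathcal{I}=\ssets{\ssets{e}\fwh e\in E}$. For the transition cover, introduce one cluster $E(v)$ per vertex $v\in V$ that collects all transitions where $v$ changes its block assignment, regardless of its initial block, destination block, or the assignment of the remaining vertices. This gives $\lambda=n$. The only cost components that can change during a transition in $E(v)$ are those of edges incident to $v$, so $\core{E(v)}$ has size at most $\Delta(G)$ and is already covered by $\Delta(G)$ singleton coordinate clusters, yielding $\beta=\Delta(G)$.

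The diversity parameter $\mu$ is the only place where the analysis departs from the $k=2$ case. For an edge $e=uv$ incident to $v$ and a transition $(\vecc{s},\vecc{s}')\in E(v)$, both $s_e$ and $s_e'$ lie in $\ssets{0,1}$, hence $s_e'-s_e\in\ssets{-1,0,1}$. Unlike \lmaxcut, where a flip of $v$ necessarily switches the cut-indicator of every incident edge, here $e$ can remain a non-cut-edge if $u$ and $v$ were in different blocks before \emph{and} still lie in different blocks after the move (only possible when $k\geq 3$). All three values are attainable in general, so the tight bound on $\diversity{E(v)}{\ssets{e}}$ is $\mu=3$.

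Plugging $\lambda=n$, $\beta=\Delta(G)$, $\mu=3$, $\nu=m$, $M=1$ into \cref{thm:single-step-bound} delivers the desired $3\cdot 3^{\Delta(G)} n\cdot m^2\log 2\cdot \phi\in\landau{3^{\Delta(G)} n m^2 \phi}$. The main (minor) subtlety is recognising that by clustering transitions only by the moving vertex, and not by its destination block, the bound becomes genuinely independent of $k$; the destination-block information is absorbed into the diversity count, which grows only from $2$ to $3$ and in particular stays constant in $k$. Everything else is inherited from the $k=2$ argument.
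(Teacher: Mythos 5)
Your proof matches the paper's own argument essentially step for step: the same binary CLO encoding ($\nu=m$, $M=1$), the same covering with singleton coordinate clusters and one transition cluster $E(v)$ per vertex, the same parameters $\lambda=n$, $\beta=\Delta(G)$, $\mu=3$, and the same application of \cref{thm:single-step-bound}. One small verbal slip in your diversity discussion: when $u$ and $v$ lie in different blocks both before and after $v$'s move (possible only when $k\geq 3$), the edge $e$ remains a \emph{cut}-edge, not a non-cut-edge as you wrote; the resulting difference $s_e'-s_e=0$ and the bound $\mu=3$ are nonetheless correct, so the conclusion is unaffected.
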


\begin{proof}
	The proof is analogous to the proof of \cref{thm:local_max-cut}, except for the changes outlined in the following.
	The non-cost part now needs to encode which of the $k$ partition blocks every vertex belongs to.
	It is straightforward to do this using a binary vector of polynomial size.
	The transition clusters remain structurally the same as $E(v), v\in V$, and each of these sets covers all transitions of a vertex between any of the partition blocks.
	
    The argument, that only the incident edges are affected still holds, but for $k>2$ not all neighbouring edges of a moving vertex must change between being a cut-edge or not.
	Therefore, we can bound $\mu = 3$ by observing that in this binary CLO problem, the difference $s_e'-s_e$ for any cost component can only take a value from $\{-1, 0, 1\}$.
	Thus we conclude with an application of \cref{thm:single-step-bound} to achieve a bound on the expected number of iterations of $3\cdot 3^{\Delta(G)} n \cdot m^2 \cdot \phi$.
\end{proof}

The family of instances for which we prove smoothed efficiency are \pls/-hard. In particular, in \cref{sec:maxkcut_pls} we prove that \lmaxkcut is \pls/-hard for any $k \geq 2$ and even when the input graph has maximum degree $k+3$. \lmaxcut is its special case $k=2$, which was already known by \cite{ElsasserT11} to be \pls/-hard. This means that, unless $\pls/=\fp/$, there is no worst-case polynomial time algorithm to solve \lmaxkcut for any $k \geq 2$.

Apart from the aforementioned conditional intractability, there are results on the unconditional worst-case performance of the standard local search algorithms for \lmaxcut. \textcite{MT10} constructed instances for which there exists a configuration (i.e., a cut) such that, if the standard local search algorithm starts from there, it will need exponential time, regardless of what path of improvements it chooses.
\begin{theorem}[\cite{MT10}, Theorem~5]
    For every $n \in \N^*$, there is a graph on $O(n)$ vertices with maximum degree four, and an initial cut $\vecc s_0$, such that every sequence of improving \flip moves starting from $\vecc s_0$ and ending in a local optimum contains a vertex that is moved $2^{n+1}$ times.
\end{theorem}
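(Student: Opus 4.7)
The plan is to construct explicitly a graph $G_n$ on $\Theta(n)$ vertices with $\Delta(G_n)\leq 4$ that simulates an $n$-bit binary counter under the \flip neighbourhood, in such a way that the improving dynamics are forced (under any pivoting rule) to enumerate exponentially many counter states before stabilising. The outermost bit gadget will correspond to the vertex of interest, and it will be flipped once per overflow of the counter, i.e.\ $\Theta(2^n)$ times.

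The main building block will be a \emph{bit gadget} $B_i$, a constant-sized subgraph encoding the state $\in\{0,1\}$ of the $i$-th bit through the assignment of one distinguished ``output'' vertex to either side of the cut. I would design each $B_i$ so that (i) an improving flip of its output vertex is enabled exactly when a designated ``trigger'' edge to $B_{i-1}$ becomes cut (equivalently, when bit $i-1$ has just overflowed), and (ii) after it flips, the internal configuration of $B_i$ is inconsistent in a way that forces a cascade of local flips inside $B_i$ which ultimately resets $B_{i-1}$ and re-arms the trigger. Carrying this out on only $O(1)$ vertices per bit, while keeping every vertex at degree at most $4$, is the central combinatorial design problem: the trigger vertex must receive signals from both the lower bit and its own internal wiring, so the weights must be chosen so that internal edges dominate external ones in the right order, and extra ``buffer'' vertices must be inserted wherever the degree would otherwise exceed $4$.

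With such gadgets in place, I would define the initial cut $s_0$ as ``all bits equal to $0$, all internal auxiliary variables in their reset state'' and then prove the key lemma: if in some configuration $s$ the highest bit $B_n$ is still $0$, then the unique locally improving vertex (up to symmetric choices irrelevant to the counting) is the one prescribed by the binary counter's next step. The proof is a case analysis over the states of the gadgets, using the fact that the weight hierarchy I chose makes moves out of order cost more than they save. This implies inductively that any improving sequence from $s_0$ must realise every one of the $2^n$ counter states; in particular, the output vertex of $B_n$ is flipped $2^{n+1}$ times, once per transition into and out of the ``$1$''-state during a full traversal.

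The hardest step, by a clear margin, is the one in the previous paragraph: establishing that every improving sequence (not just some specifically chosen one) respects the counter semantics, given the very tight degree-$4$ budget. In ordinary exponential lower bounds for local search one may choose the pivoting rule adversarially, but here we must rule out shortcuts available to \emph{every} rule. This forces the gadget to be ``rigid'' in the sense that at each non-locally-optimal configuration reachable from $s_0$, the set of improving moves is structurally confined to the counter's next action. Verifying rigidity requires exhibiting a potential-like invariant on the auxiliary variables inside each $B_i$, together with a careful weight assignment (exponentially scaled across bits, but still representable in polynomial bits) that guarantees the correct ordering of improvements regardless of which improving move is chosen.
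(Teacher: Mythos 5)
The paper does not prove this statement; it is quoted verbatim from \cite{MT10} (their Theorem~5) and serves only as external evidence that worst-case local search for \lmaxcut is exponentially slow even at maximum degree four, which is what makes the paper's smoothed upper bounds (\cref{thm:local_max-cut}, \cref{thm:local_max-k-cut}) interesting. There is therefore no proof of the paper's own to compare your sketch against; the relevant question is whether the sketch is sound on its own terms.

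As written, it is not, for two reasons. First, the counting claim contradicts the wiring you describe. Trace the causal chain: gadget $B_i$ fires when $B_{i-1}$ ``overflows,'' and upon firing $B_i$ resets $B_{i-1}$. Each firing of $B_i$ thus costs at least one overflow plus one reset of $B_{i-1}$, so $B_{i-1}$ changes state roughly twice as often as $B_i$; the exponentially flipped vertex sits at the \emph{bottom} of the chain, while $B_n$ fires $O(1)$ times. Your statement that ``the output vertex of $B_n$ is flipped $2^{n+1}$ times, once per transition into and out of the `$1$'-state during a full traversal'' is inconsistent with this, since a single full traversal flips the top bit only once; and the earlier phrase ``once per overflow of the counter, i.e.\ $\Theta(2^n)$ times'' is itself inconsistent, because an $n$-bit counter with no external driver overflows once, not $2^n$ times. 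You need either to reverse the indexing (put the vertex of interest in the lowest gadget) or to use a genuinely different fan-out cascade in which each gadget's single firing drives \emph{two} firings of its successor; under a hard degree-$4$ budget these are structurally different designs, not a relabelling. Second, and more fundamentally, the step you yourself flag as hardest --- showing that \emph{every} improving sequence from $s_0$, under any pivoting rule, is forced to follow the counter and cannot shortcut to a local optimum --- carries the entire content of the theorem, and your sketch supplies no gadget, no invariant, and no weight hierarchy for it. At maximum degree four each vertex sees at most four weights, and ruling out all off-schedule improving moves at every reachable configuration requires a concrete construction and a careful case analysis. Without that, what you have is a plausible plan for a lower bound, not a verification of the statement.
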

In contrast, our \cref{thm:local_max-cut} shows that, under smoothness, the expected number of iterations of standard local search is polynomial even for graphs with logarithmically large degree. Furthermore, in \cref{thm:local_max-k-cut} we prove that this still holds for the general problem \lmaxkcut with any $k \geq 2$.

\subsubsection{Equivalent Problems}
The following problems are straightforward equivalent reformulations of \lmaxkcut, which allows us to derive analogous smoothness results from \cref{thm:local_max-cut} and \cref{thm:local_max-k-cut} by just restating the problem as its \lmaxkcut counterpart instead of applying the CLO results to them directly.

\paragraph{Party Affiliation Games}
In these games, as outlined by e.g.\ \cite{BCK:PartyAffiliation_approximatePNE}, players correspond to vertices in the graph $G=(V, E_f\cup E_e)$ and choose one of two sides of a cut (one of two \emph{parties}).
Each edge in the weighted game graph is considered either a ``friendly'' or an ``enemy'' edge, denoted by $e \in E_f$ or $e\in E_e$.
The weights $w_e$ are non-negative.
The utility of a player is given as the sum of edge weights of friendly edges within the same party and of enemy edges across the two parties.
If we denote the parties by $V_1, V_2$ and use the cut notation from above, the game admits a potential function for a strategy profile $\vecc{s}$ as
  \[ \Phi(\vecc{s}) = \sum_{e\in E_f \setminus E_f(V_1, V_2)} w_{e} + \sum_{e \in E_e(V_1, V_2)} w_{e}. \]
This puts the problem of finding a PNE in this game into \pls/ by using best-response dynamics.
By subtracting $\sum_{e\in E_f}w_e$ from this function, we achieve another potential
  \[ \Phi'(\vecc s) = -\sum_{e\in E_f(V_1, V_2)} w_e + \sum_{e \in E_e(V_1, V_2)} w_{e}. \]
It is immediate that by changing the signs of the weights of all edges in $E_f$, the resulting problem can be considered an instance of \lmaxcut.
Therefore, we state the following result.
\begin{corollary}
    The expected number of steps in a best-response dynamic of a smoothed party affiliation game with weights $w_e$ drawn from independent distributions $f_e:[0, 1]\map [0, \phi]$ is bounded by $\landau{2^{\Delta(G)} n m^2\phi }$, where $\Delta(G)$ denotes the maximum degree of the graph.
\end{corollary}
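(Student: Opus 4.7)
The plan is to exploit the problem reformulation already sketched immediately before the statement: rewrite the party affiliation potential $\Phi'$ as the cut-weight of an auxiliary \lmaxcut instance on the same graph, and then invoke \cref{thm:local_max-cut}. Concretely, I would start by observing that best-response dynamics in the party affiliation game coincide, step-for-step, with local improvement dynamics on $\Phi$, since $\Phi$ is an exact potential (a vertex deviation changes $\Phi$ by exactly the change in the deviating player's utility). Subtracting the constant $\sum_{e\in E_f} w_e$ yields the equivalent potential $\Phi'$, and therefore best-response dynamics correspond to flip-improvement dynamics maximizing $\Phi'$.

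Next, I would define an auxiliary weighted graph $G' = (V, E_f \cup E_e)$ with weights $w'_e = -w_e$ for $e \in E_f$ and $w'_e = w_e$ for $e \in E_e$. Then the potential $\Phi'(V_1,V_2) = -\sum_{e\in E_f(V_1,V_2)} w_e + \sum_{e\in E_e(V_1,V_2)} w_e$ equals exactly the cut-weight $W(V_1, V_2)$ of $G'$ in the sense of \cref{def:lmaxcut}. In particular, improving flips in the party affiliation game are in bijection with improving flips for \lmaxcut on $G'$, and the two processes terminate together in exactly the same number of steps. Since $G'$ has the same vertex/edge sets as $G$, it inherits the parameters $n$, $m$, and $\Delta(G)$.

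It remains to verify that the smoothed input distributions translate correctly. Each friendly edge weight $w_e$ is drawn from a density $f_e : [0,1] \to [0,\phi]$; after negation, $w'_e = -w_e$ has density $g_e(x) = f_e(-x)$ supported on $[-1,0]$, which is again bounded by $\phi$. Enemy edge weights are unchanged and already satisfy the required bound on $[0,1] \subseteq [-1,1]$. Thus $G'$ is a valid smoothed \lmaxcut instance in the sense of \cref{thm:local_max-cut}, with independent edge weight distributions of density at most $\phi$ on $[-1,1]$.

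Applying \cref{thm:local_max-cut} to $G'$ then bounds the expected number of improving flips by $\landau{2^{\Delta(G')} n m^2 \phi} = \landau{2^{\Delta(G)} n m^2 \phi}$, and by the step-for-step correspondence established above, this is exactly the expected number of best-response iterations in the original party affiliation game. There is essentially no technical obstacle here; the only point requiring a moment of care is the density translation under the sign flip for friendly edges, which is straightforward since negation is a measure-preserving bijection.
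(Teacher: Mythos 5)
Your proof is correct and follows essentially the same route as the paper: reformulate the potential as a cut weight by negating the friendly-edge weights, check that the perturbation densities remain bounded by $\phi$ on $[-1,1]$, and invoke \cref{thm:local_max-cut}. The paper's exposition is terser (it simply states that the sign-flip yields an \lmaxcut instance), while you spell out the density translation and the step-for-step correspondence, but the underlying argument is the same.
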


Note that the assumption of non-negative weights and the separation of the edge set into friendly and enemy edges simply corresponds to the sign of the weights in the corresponding \lmaxcut instance.
The \pls/-hardness for this problem is usually attributed to \cite{Schaffer91} due to the equivalence with \lmaxcut.
Following the \pls/-hardness discussion for \lmaxcut above, we deduce that even for graphs with constantly bounded degree, party affiliation games are \pls/-hard.

\paragraph{Symmetric Additively Separable Hedonic Games}
Again, we consider the players as vertices in a weighted graph $G = (V, E)$.
In these games, players partition themselves into sets $V_1,\dots, V_n$, $n = \card{V}$, and achieve utility according to the sum of weights of edges within their respective party.
If we denote by $E(V_i)$ the set of edges of the subgraph induced by $V_i$, this game has a potential function for a strategy profile $\vecc{s}$ given by
  \[ \Phi(\vecc{s}) = \sum_{i\in \intupto{n}} \sum_{e\in E(V_i)} w_e. \]
Therefore, the problem \problemfont{NashStable} (see \cite{GairingSavani:hedonicgames}) of finding a PNE (also called a Nash-stable outcome in this setting due to the existence of alternative stability properties) for these games is in \pls/ using best-response dynamics.
By subtracting $\sum_{e\in E} w_e$ from this function and denoting $E' = \bigcup_{i<j\leq n} E(V_i, V_j)$, we achieve another potential given by
  \[ \Phi'(\vecc{s}) = -\sum_{e\in E'} w_e. \]
By changing the sign of all weights, this is precisely the potential function of the \lmaxkcut problem on the same graph. In other words, working on the same graph, asking to find a PNE in these games is equivalent to finding a solution to \lmaxkcut for $k=n$.
Thus, we state the following bound.
\begin{corollary}
	The expected number of steps in a best-response dynamic of a smoothed symmetric additively separable hedonic game with weights $w_e$ drawn from independent distributions $f_e:[-1, 1]\map [0, \phi]$ is bounded by $\landau{3^{\Delta(G)} n m^2\phi }$, where $\Delta(G)$ denotes the maximum degree of the graph.
\end{corollary}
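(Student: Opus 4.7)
The plan is to directly reduce the smoothed symmetric additively separable hedonic game to a smoothed instance of \lmaxkcut on the same graph with $k=n$ blocks, and then invoke \cref{thm:local_max-k-cut}. The paragraph preceding the corollary already sketches the equivalence at the level of potential functions; the work is to make the correspondence precise on both the dynamic side and the smoothness side.

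First, I would formalize the equivalence. A strategy profile of the hedonic game corresponds to a partition $(V_1,\dots,V_n)$ of the players/vertices, and a single player's deviation to another party corresponds exactly to a \flip move in \lmaxkcut on the same graph. Subtracting the constant $\sum_{e\in E}w_e$ from $\Phi$ and negating every edge weight turns $\Phi$ into $-\sum_{e\in E'} w_e$ with signs flipped, which is precisely (the negation of) the objective of \lmaxkcut for $k=n$. Hence a step in a best-response dynamic of the hedonic game produces exactly the same sequence of configurations as a corresponding improving \flip move in the constructed \lmaxkcut instance, and in particular the number of steps is preserved.

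Next, I would verify that the smoothness model transfers. The densities $f_e:[-1,1]\to[0,\phi]$ of the hedonic-game weights induce densities $\tilde f_e(x)\coloneqq f_e(-x)$ for the negated weights $-w_e$; these are again supported on $[-1,1]$ and bounded by $\phi$, and they remain mutually independent. So the transformed \lmaxkcut instance satisfies the hypothesis of \cref{thm:local_max-k-cut} with the same density bound $\phi$ and the same graph (hence the same $n$, $m$, and $\Delta(G)$).

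Finally, since the bound in \cref{thm:local_max-k-cut} is $\landau{3^{\Delta(G)} n m^2 \phi}$ and does \emph{not} depend on the number of blocks $k$, applying it with $k=n$ immediately yields the claimed bound on the expected number of best-response steps in the hedonic game. The only subtlety — and it is a minor one — is keeping the sign-flip and the distributional symmetry clearly in sight so that $\phi$ is preserved; once that is handled, the corollary follows as a direct consequence of the previous theorem.
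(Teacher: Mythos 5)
Your proposal is correct and takes essentially the same route as the paper: rewrite the hedonic game's potential as (the negation of) the \lmaxkcut objective for $k=n$ on the same graph via a constant shift and a sign flip, observe that the smoothness bound $\phi$ is preserved under negation because the supports $[-1,1]$ are symmetric, and apply \cref{thm:local_max-k-cut}, whose bound is independent of $k$. The explicit check that $\tilde f_e(x) = f_e(-x)$ remains a valid density bounded by $\phi$ is a detail the paper leaves implicit, but otherwise the argument is identical.
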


While the discussion in \cref{sec:maxkcut_pls} can be used to confirm the \pls/-hardness result for this problem first given by \textcite{GairingSavani:hedonicgames}, neither suffices to immediately deduce hardness results for bounded degree graphs.
To our knowledge, the complexity in the bounded degree case is open.

\subsection{Network Coordination Games}
\label{sec:netcoord}
Network Coordination Games are a special class of potential games (see \cite{Monderer1996a}), where players want to coordinate with each other locally with respect to some graph structure.
The game is given by an undirected graph $G = (V, E), n = \card{V}, m = \card{E}$, where $V$ represents the set of players. Each player $v\in V$ chooses one of $k$ actions and plays a separate matrix coordination game with each of their neighbours simultaneously, i.e.\ the same action is chosen in each of these games.
For every edge $uv\in E$ of the graph, there is a $k\times k$ payoff matrix given that we denote by $A_{uv}$, according to which the players receive their partial payoffs of the game on this edge.
A player's total payoff is then given by the sum of payoffs over all the games on their incident edges; for a strategy profile $\vecc \sigma$, player $u$'s payoff is $\operatorname{payoff}_u(\sigma) = \sum_{v\in \Gamma(u)} A_{uv}(\sigma_u, \sigma_v)$, where $\Gamma(u)$ is the set of neighbours of $u$ in the graph $G$.
Note that we specify each matrix only once for each edge, so $A_{vu} = A_{uv}^{\top}$.

The potential structure of the game implies that, just like in congestion games, better-response dynamics converge to a PNE.
The dynamics are derived from the PNE definition; a single player can deviate unilaterally to increase their payoff.
The local search neighbourhood is thus given by all possible single player deviations and a solution is a local optimum of the potential with respect to this neighbourhood, or equivalently a PNE of the game. 
The potential function is given by
	\begin{equation*}
		\Phi(\vecc \sigma) = \sum_{uv\in E}A_{uv}(\sigma_u, \sigma_v) = \frac{1}{2}\sum_{v\in V}\operatorname{payoff}_v(\vecc \sigma) 
	\end{equation*}
where $\vecc \sigma$ denotes a strategy profile and $A_{uv}(i, j)$ denotes the entry at coordinates $(i, j)$ of the $k\times k$ payoff matrix in the game between $u$ and $v$.
The problem \netcoord of finding a PNE in a Network Coordination Games is \pls/-complete by a reduction from \lmaxcut due to \cite{CaiDaskalakis:multiplayergames}.

In \textcite{Boodaghians20}, the authors conducted smoothed analysis on \netcoord.
Their smoothness model draws each matrix entry for each of the games from independent distributions $f_{uv, i, j}:[-1, 1]\map [0, \phi]$, where $uv \in E$ denotes an edge in the graph and $i, j$ refer to the entry of the matrix $A_{uv}$ at coordinates $(i, j)$.
The authors have established that, in general graphs, the expected running time of a better-response algorithm under this smoothness model is bounded by $\phi \cdot (nk)^{\landau{k \log(nk)}}$.

We investigate this problem by introducing the degree of the graph as a new parameter, which allows us to establish polynomial smoothed complexity for graphs with constantly bounded degree.
This approach allows us to achieve a bound that only depends polynomially on $k$ and the graph size, and the only property that appears in its exponent is the degree of the graph.
For instances with either (a) constant degree and arbitrary $k$, or (b) logarithmic degree and constant $k$, we provide polynomial smoothed time bounds, which improves upon the bounds induced by~\cite{Boodaghians20} for those instances.

Note that the \pls/-hardness reduction for Network Coordination Games from \lmaxcut in \cite{CaiDaskalakis:multiplayergames} preserves the graph structure of the \lmaxcut instance in the game graph.
As \lmaxcut is \pls/-complete even for graphs of maximum degree $5$ (see \cref{sec:localmaxcut}), our result establishes another \pls/-complete problem class with polynomial smoothed complexity: \netcoord on graphs with bounded degree.
We remark here that the \pls/-hardness reduction in the aforementioned paper is tight, in the sense of \cite{Schaffer91} (see \cref{sec:model}). Since the standard local search of \lmaxcut with maximum degree $5$ has a starting configuration whose all improvement paths to solutions are exponentially long (see our discussion after the proof of \cref{thm:local_max-k-cut}), the same holds for the standard local search of the \netcoord instances we study here. Our smoothed analysis results show that, contrary to the worst-case performance, these algorithms terminate after polynomially many steps in expectation, even for logarithmically large degree of the game graph.

\begin{theorem}
	The expected running time of better-response dynamics for smoothed Network Coordination Games on a graph $G = (V, E)$ with costs drawn from independent distributions $f_{uv, i, j}:[-1, 1]\map [0, \phi]$ for every $uv\in E$, $i, j\in \intupto{k}$ is bounded by $\landau{k^{4\Delta(G)+6}nm^2 \phi}$, where $\Delta(G)$ is the degree of the game graph $G$.
\end{theorem}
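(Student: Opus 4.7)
The plan is to instantiate the general recipe of \cref{sec:NE_as_CLO} on \netcoord and then apply our black-box \cref{thm:single-step-bound} to an appropriately chosen covering. Concretely, I first cast the game as a binary ($M=1$) CLO problem whose cost coordinates are indexed by triples $(uv,i,j)$ with $uv\in E$ and $i,j\in\intupto{k}$, so $\nu=mk^2$. Fixing an orientation $u<v$ on each edge, I map every strategy profile $\vecc{\sigma}$ to the cost-part configuration
\begin{equation*}
s_{uv,i,j}(\vecc{\sigma}) \;=\; \indicator{\sigma_u=i\,\wedge\,\sigma_v=j},
\end{equation*}
and use the non-cost part to record each $\sigma_u$ explicitly, thereby obtaining a one-to-one correspondence between strategy profiles and configurations in which better-responses translate exactly to local improvements. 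Setting $c_{uv,i,j}=-A_{uv}(i,j)$ aligns local minimization of the CLO cost with local maximization of the potential $\Phi$, and the independent smooth densities on the matrix entries $A_{uv}(i,j)$ transfer directly to the cost coefficients $c_{uv,i,j}$.

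Next I construct a covering $(\mathcal{E},\mathcal{I})$ tailored to single-player deviations. For the transition cover, I take one cluster $E(u,a,a')$ per player $u\in V$ and ordered pair $a\neq a'\in\intupto{k}$, collecting all edges of the neighbourhood graph corresponding to $u$ switching from $a$ to $a'$ (with the strategies of all other players arbitrary); this yields $\card{\mathcal{E}}\leq nk(k-1)$. For the coordinate cover, the natural choice is the per-edge cluster $I_{uv}=\{(uv,i,j)\fwh i,j\in\intupto{k}\}$ of size $k^2$, one for each $uv\in E$.

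The separability parameters follow from the local structure of \netcoord. When $u$ deviates, the only coordinates whose values change are those indexed by edges incident to $u$, so $\core{E(u,a,a')}$ is covered by the at most $\Delta(G)$ clusters $\{I_{uv}\fwh v\in\Gamma(u)\}$; hence $\beta=\Delta(G)$. Moreover, inside such a cluster $I_{uv}$, the difference of the two configurations is completely determined by $\sigma_v$ (only the two entries in rows $a,a'$ at column $\sigma_v$ flip, with opposite signs), which gives $\diversity{E(u,a,a')}{I_{uv}}\leq k$; for all other clusters the diversity is $1$, so overall $\mu\leq k$. Feeding $(\lambda,\beta,\mu)=(nk(k-1),\Delta(G),k)$ together with $\nu=mk^2$ and $M=1$ into \cref{thm:single-step-bound} produces a polynomial smoothed bound of the claimed form $\landau{k^{c\Delta(G)+c'}nm^2\phi}$ for suitable constants.

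The main technical subtlety — and where some care is needed — is the diversity count: one has to exploit the fact that each incident edge contributes independently only $k$ possible difference patterns (parameterized by the neighbour's strategy), so that they multiply solely across the at most $\Delta(G)$ incident edges rather than across all $m$ edges of the graph. Once this locality is identified, everything else is a direct plug-in to \cref{thm:single-step-bound} analogous to our treatments in \cref{sec:congestion-positive} and \cref{sec:localmaxcut}.
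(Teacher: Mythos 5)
Your proof is correct and follows the same architecture as the paper's: the same binary CLO encoding with indicator coordinates $s_{uv,i,j}$, the same per-edge coordinate clusters $I_{uv}$, the same per-player-per-deviation transition clusters $E(u,a,a')$, and the same $\lambda = nk(k-1)$ and $\beta = \Delta(G)$. The one place where you diverge is the diversity parameter. The paper bounds $\mu$ coarsely by $k^4$, simply observing that both the before- and after-projections onto $I_{uv}$ range over $k^2$ indicator vectors each. You instead exploit the fact that within a fixed cluster $E(u,a,a')$ the deviating player's before/after strategies $a,a'$ are frozen, so the projected difference on $I_{uv}$ is $(+1,-1)$ in rows $a',a$ at column $\sigma_v$ (or the transposed pattern if the orientation is $v<u$), which is parameterized solely by the neighbour's strategy $\sigma_v\in\intupto{k}$. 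This gives $\mu\leq k$ and hence the stronger bound $\landau{k^{\Delta(G)+6}nm^2\phi}$, which of course subsumes the stated $\landau{k^{4\Delta(G)+6}nm^2\phi}$. The refinement is a genuine (if modest) quantitative improvement over the paper's argument, obtained for free from the same covering once one remembers that $a,a'$ are fixed inside a transition cluster.
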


\begin{proof}
	We define a binary CLO problem $(M=1)$ to represent \netcoord.
    The structure of the cost part corresponds to the different games played (the edge set $E$ of the game graph) and their matrix entries: for every edge $uv\in E$ and every entry of its payoff matrix, $i, j\in \intupto{k}$ we define a cost component $s_{uv, i, j}$ with its corresponding cost coefficient $c_{uv, i, j} = A_{uv}(i, j)$.
	The size of the cost part therefore is $\nu = mk^2$.

	From each strategy profile $\vecc \sigma$ of the game we can induce a CLO configuration $\vecc s(\vecc \sigma)$.
    This configuration follows an indicator structure within the components for a fixed edge $uv$: we set $s_{uv, i, j}(\vecc \sigma) = 1$ if and only if $\sigma_u = i \land \sigma_v = j$ and $s_{uv, i, j}(\vecc \sigma) = 0$ otherwise.
	Thus, within every set of the $k^2$ components of a fixed edge $uv$, there is exactly one entry of $1$ for every valid CLO configuration.
	The CLO cost of these induced CLO configurations $\vecc s(\vecc \sigma)$ is equal to the value of the potential $\Phi(\vecc \sigma)$ of \netcoord.
	From this representation, we can recover the strategy profile of the game in a straightforward way and therefore also compute all neighbouring configurations according to the better-response dynamics efficiently.

	Next, we establish all properties for an application of \cref{thm:single-step-bound}.
	The assumed smoothness model on the matrix entries translates directly to our smoothed CLO model, as every cost coefficient $c_{uv, i, j} $ is equal to a single random variable $A_{uv}(i, j)$ in smoothed \netcoord.
	In the following, we specify the covering $(\mathcal{E}, \mathcal{I})$ and its parameters $\lambda = nk(k-1), \mu = k^4, \beta = \Delta(G)$.
	Let
		\begin{equation*}
			\mathcal{I} = \{I_{uv}, uv\in E\},\quad \text{where} \quad I_{uv} = \{(uv, i, j)\fwh i, j\in \intupto{k}\}
		\end{equation*}
	is the coordinate cluster of all components associated with edge $uv$, i.e.\ the game between $u$ and $v$.
	As outlined above, every CLO configuration has exactly one entry of $1$ within the coordinates of any $I_{uv}$.
	Therefore, when considering a pair $(\vecc s, \vecc s')$ of configurations, the expression $ \vecc s_{I_{uv}}' - \vecc s_{I_{uv}}$ can take at most $\mu = k^4$ different values, as both $\vecc s_{I_{uv}}, \vecc s_{I_{uv}}'$ are one of $k^2$ configurations each.

	Next, we cover the neighbourhood graph by $\mathcal{E} = \{E(u, a, a')\fwh u\in V, a\neq a' \in \intupto{k}\}$.
    The transition cluster $E(u, a, a')$ contains all transitions where player $u$ deviates from action $a$ to action $a'$.
	There are $\lambda = nk(k-1)$ different transition clusters in this cover.
	Finally, we establish $\beta$:
	Every deviation of a player $u$ only affects the outcome of the games they participate in.
    These games are specified by the graph $G$, therefore each player participates in at most $\Delta(G)$ games.
    In the CLO problem, this corresponds to a change in the components related to the coordinate clusters $I_{uv}, v\in \Gamma(u)$, so every game a player participates in corresponds to a single coordinate cluster.
	This gives $\beta = \Delta(G)$, and we conclude the proof by applying \cref{thm:single-step-bound} to bound the expected number of iterations by
		\begin{equation*}
			3 \cdot (k^4)^{\Delta(G)} nk(k-1) \cdot (mk^2)^2 \cdot \phi \leq 3 k^{4\Delta(G)+6}nm^2 \phi.
		\end{equation*}
\end{proof}

\subsection{Weighted Set Problems}
\label{sec:weightedset}
In this section we discuss a selection of standard weighted set problems.
The global decision versions of these problems are \np/-complete even for the unweighted case, dating back to \cite{GareyJohnson:NPcompleteness}.
The discussion of the local optimization versions of the problems in our context is inspired by \cite{Dumrauf10}, who classified all of them as \pls/-hard by tight \pls/-reductions from Maximum Constraint Assignment \mcapqr, a problem shown to be \pls/-hard in \cite{DumraufMonien:PLS_MCA} for fixed parameters $(p,q,r) \in \{(3,2,3), (2,3,6), (6,2,2)\}$.
This section's focus is on several weighted set problems and not on \mcapqr. For our smoothed analysis results on the latter problem itself, see \cref{sec:MCA}.

It is important to highlight the fact that all the problems in this section have been shown in \cite{Dumrauf10} to be \pls/-hard via tight \pls/-reductions (see \cref{sec:model}). The starting problem in the chain of reductions is \circuitflip \parencite{Johnson:1988aa}. Since the standard local search algorithm for \circuitflip has exponentially long sequences in the worst case (see \cite[Lemma 12]{Yannakakis1997}), by our argumentation right after \cref{th:pls-hardness-restrained}, the standard local search algorithms for all the weighted set problems we study here have a configuration from which all improving sequences to solutions have exponential length. In contrast, we prove that under smoothed analysis, these algorithms terminate after polynomially many iterations in expectation.

\subsubsection{Weighted 3D-Matching}
An instance of \wIIIdmpq is given by an integer $n$ and a weight function $w:\intupto{n}^3\map \R$ on triples.
A 3-dimensional matching is a subset $S\subseteq \intupto{n}^3, \card{S} = n $ of triples $T = (T_1, T_2, T_3)$ that covers all values in every component; that is, 
for all $i \in \intupto{n}$ and $k\in \{1, 2, 3\}$ there exists a $T\in S$ such that $T_k = i$.
For the sake of intuition, we call the first component the \emph{boys}, the second the \emph{girls} and the third one \emph{homes} and think of the problem as finding an allocation that puts a boy and a girl together into their respective home for each of the triples.
The weight of a matching $S$ is given by the sum of weights of the triples in use, i.e.\ $W(S) = \sum_{T\in S}w(T)$.
A solution of \wIIIdmpq is a matching that is locally maximal in weight regarding the $(p, q)$-neighbourhood, defined as follows.

The neighbourhood of a matching $S$ is given by all matchings created by replacing up to $p$ triples by other (exactly $p$ many) triples.
Additionally, we require that the total number of boys or girls relocated into new homes is at most $q$.
Note that any triple that is being replaced moves the boy or the girl to a new home, but not necessarily both; the parameter $q$ can be thought of as restricting how often both of them may be relocated.
When $p$ is considered a constant, this neighbourhood can be efficiently searched. Furthermore, the second restriction $q$ does not appear in our bound, but we have stated the problem as it appears in \cite{Dumrauf10}, who established \pls/-completeness for all $p\geq 6, q\geq 12$.

\begin{theorem}
    The expected number of iterations of local search for smoothed \wIIIdmpq, where the weights $w(T)$ for each triple $T\in \intupto{n}^3$ are drawn from independent distributions with density $f_T:[0, 1]\map [0, \phi]$, is bounded by $\landau{n^{6p+6} \phi}$.
\end{theorem}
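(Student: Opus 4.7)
The plan is to cast \wIIIdmpq as a binary CLO problem and then apply the black-box bound of \cref{thm:single-step-bound} after exhibiting an appropriate covering. Following the pattern used for \tspkopt in \cref{sec:TSP-k-opt}, I would introduce one binary cost component $s_T$ for every triple $T \in \intupto{n}^3$, giving $M=1$ and $\nu = n^3$. For a matching $S$, the induced configuration $\vecc{s}(S)$ sets $s_T = 1$ iff $T \in S$, and the cost coefficients are $c_T = w(T)$. This mapping is injective on matchings (a matching is determined by its set of triples), so no non-cost part is required. Since the CLO cost $\sum_T c_T s_T$ equals the matching weight, the neighbourhood of the problem translates one-to-one with the $(p,q)$-neighbourhood; we treat \wIIIdmpq as a maximization problem here, as noted in the opening of \cref{sec:sin-step-apps}.

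Next I would choose the covering. For $\mathcal{I}$ I take all singletons $\{T\}$ (so any difference on a coordinate cluster lies in $\{-1,0,+1\}$ and, being restricted to transitions in a fixed cluster below, in at most two values). For $\mathcal{E}$, I define a cluster $E(S_{\text{out}}, S_{\text{in}})$ for every ordered pair of subsets of triples $S_{\text{out}}, S_{\text{in}} \subseteq \intupto{n}^3$ with $|S_{\text{out}}| = |S_{\text{in}}| \le p$: it collects every transition $(\vecc{s}(S), \vecc{s}(S'))$ with $S \setminus S' = S_{\text{out}}$ and $S' \setminus S = S_{\text{in}}$. Every edge of the transition graph lies in some such cluster, because any $(p,q)$-move removes at most $p$ triples and replaces them by (exactly) the same number of new ones; the $q$-restriction only prunes edges and does not affect the validity of the cover. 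The number of clusters satisfies
\[
\lambda \;\le\; \sum_{r=0}^{p} \binom{n^3}{r}^{2} \;=\; \landausmall{n^{6p}}.
\]

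For the separability parameters, any transition in $E(S_{\text{out}}, S_{\text{in}})$ affects only coordinates $T \in S_{\text{out}} \cup S_{\text{in}}$, so the core of the cluster has size at most $2p$ and is covered by the singletons of those triples, giving $\beta = 2p$. For each singleton coordinate cluster $\{T\}$, the difference $s_T - s'_T$ equals $+1$ when $T \in S_{\text{out}}$, $-1$ when $T \in S_{\text{in}}$, and $0$ otherwise, and these values are fixed by the cluster label, so $\mu \le 2$ (in fact $\mu = 1$ for triples inside the core, but $\mu \le 2$ suffices). Plugging $(\lambda,\beta,\mu,\nu,M) = (\landausmall{n^{6p}}, 2p, 2, n^3, 1)$ into \cref{thm:single-step-bound} yields an expected number of iterations of
\[
3 \cdot 2^{2p} \cdot \landausmall{n^{6p}} \cdot (n^3)^2 \cdot 1 \cdot \log 2 \cdot \phi \;=\; \landausmall{n^{6p+6}\,\phi},
\]
as claimed. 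The main thing to get right is the choice of covering: parameterising clusters by the removed triples alone (as in \tspkopt) would force $\beta$ to include all $p^3$ potential new triples formed from the freed boys, girls, and homes, inflating $\mu^\beta$ exponentially in $p^3$. Specifying \emph{both} the removed and the inserted triples trades a larger $\lambda$ ($n^{6p}$ instead of $n^{3p}$) for a much smaller core ($2p$ rather than $p+p^3$), which is the trade-off needed to obtain the stated polynomial bound in $n$ for every fixed $p$.
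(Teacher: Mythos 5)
Your proof is correct and takes essentially the same route as the paper: the same binary CLO formulation with indicator cost components $s_T$, the same singleton coordinate cover, and the same key idea of labelling transition clusters by \emph{both} the removed and the inserted triples, giving $\lambda = \landausmall{n^{6p}}$ and $\beta = 2p$. One small point: your cluster labels already pin down the difference vector on every singleton uniquely, so $\mu = 1$ (which is what the paper uses); your $\mu \le 2$ is harmless but looser. Also, your closing heuristic is a bit overstated: a cover parameterised by the removed triples alone would give $\lambda = \landausmall{n^{3p}}$, $\beta = p + p^3$, $\mu \le 2$, which is still polynomial in $n$ for fixed $p$ (exponent $3p+6$, in fact smaller) at the cost of a $2^{p^3}$-type constant, so it too proves the stated bound; the dual-label choice buys a cleaner $p$-dependence rather than being strictly necessary for the claim.
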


\begin{proof}
	We formulate this problem as a CLO problem while maintaining the weight structure of \wIIIdmpq in the CLO cost coefficients and then apply \cref{thm:single-step-bound}.
	Consider a binary ($M=1$) CLO problem where the cost part corresponds to the set of all triples, i.e.\ we choose $\nu = n^3$ and the cost variables are given by $s_T, T\in \intupto{n}^3$.
	The cost coefficients are set to the weights of the corresponding triples, $c_T = w(T)$.
	The CLO configuration $\vecc s(S)$ induced by a matching $S$ has $s_T(S) = 1$ whenever $T \in S$ and $s_T(S) = 0$ otherwise.
        Thus, the weight of a matching corresponds to the cost of the induced CLO configuration.
	The neighbourhood is derived directly and efficiently from the original instance.

	To apply \cref{thm:single-step-bound}, we establish a covering $(\mathcal{E}, \mathcal{I})$ and the parameters $\lambda = (n^3)^p, \beta = p, \mu = 1$ in the following.
	We start with the singleton coordinate cover, $\mathcal{I} = \{\{T\}, T\in \intupto{n}^3\}$.
	The neighbourhood graph is covered by transition clusters $E(T_1,\dots, T_p, U_1,\dots, U_p), T_i, U_i\in \intupto{n}^3$.
        Each of these clusters contains all valid transitions where the triples $T_i$ are replaced by $U_i$ in the matching, regardless of the remaining triples in the matching.
	There are at most $\lambda = (n^3)^{2p}$ such transition clusters when accounting every possible choice of the $2p$ triples.
	Note that some triples $T_i = U_j$ may coincide to account for fewer than $k$ tuple replacements in the solution, and that each permutation within the triples $T_i$ or $U_j$ refers to the same set of transitions.

        Now, we investigate the structure of this covering to determine the remaining parameters.
	Within the transition cluster $E(T_1,\dots, T_p, U_1,\dots, U_p)$, the only cost components that can change their assignment are the ones corresponding to a triple $T_i$ or $U_j$.
	Because we add and remove at most $p$ triples each, we therefore have $\beta = 2p$.
	More importantly, we know that for any choice of $(\vecc s, \vecc s')\in E(T_1,\dots, T_p, U_1,\dots, U_p)$ for a fixed transition cluster, the difference vector $\vecc s' - \vecc s$ is uniquely determined:
	Outside of the selected tuples, there is no change in the configuration.
	When $T_i \neq U_j$ or $U_i \neq T_j$ for all $j = 1,\dots, k$ (i.e.\ $T_i$ is actually removed from and $U_i$ is actually being added to the matching), we know that $s_{T_i} = 1$ before the transition and $s_{T_i}' = 0$ afterwards, as well as $s_{U_i} = 0$ and $s_{U_i}' = 1$, so the difference in these components has only one possible value.
	Otherwise, when some $T_i = U_j$ coincide and thus $T_i$ stays in the matching, we know that $s_{T_i} = s'_{T_i}$ and this component difference is again fixed as $0$.
	Thus, we have settled $\mu = 1$.

	Applying \cref{thm:single-step-bound} gives a bound on the expected number of steps of $3\cdot (n^3)^{2p}  \cdot (n^3)^2 \cdot \phi$.
\end{proof}

\subsubsection{Exact Cover by 3-Sets and Set-Cover}
The similar structure of the following cover problems allows for shared analysis.
Both use the $k$-differ neighbourhood, where the neighbours may differ in up to $k$ selected sets for the cover.
For constant $k$, this is an efficiently searchable neighbourhood.
First, let us define them separately:

\begin{itemize}
	\item Exact Cover by 3-Sets (\xIIIck) is the problem of finding a maximum weight exact cover of a finite set $\mathcal{B}$ with $\card{\mathcal{B}} = 3q$.
		An exact cover in this problem is a subset $S\subseteq \mathcal{C} = \{C_1,\dots, C_n\}, \card{S} = q$ of sets $C_i\subseteq \mathcal{B}$ with $\card{C_i} = 3$, such that $\bigcup_{C_i\in S}C_i = \mathcal{B}$.
        Thus, the sets in any $S$ must be disjoint and $S$ is a partition of $\mathcal{B}$ into sets of size $3$.
		With every set $C_i$ there is given a weight $w_i$ and we want to maximize the total weight $ w(S) = \sum_{i: C_i\in S}w_i $.

		The neighbours of an exact cover $S$ are given by adding or removing at most $k$ sets in total to/from $S$ so as to generate another exact cover.
		Because we know that $\card{S} = q$ is fixed, this means we replace up to $k/2$ sets from the solution by the same number of other sets from $\mathcal{C}$.

		\xIIIck is \pls/-complete due to \cite{Dumrauf10} for all $k \geq 12$.

	\item Set-Cover (\sck) is the problem of finding a minimum weight cover from a collection of sets $\mathcal{C} = \{C_1,\dots, C_n\}, C_i\subseteq \mathcal{B}$, i.e.\ a subset $S\subseteq \mathcal{C}$ such that $\bigcup_{C_i\in S}C_i = \mathcal{B}$.
		Each set $C_i$ has some weight $w_i$.
		The cost of a cover $S$ is given by $c(S) = \sum_{i:C_i\in S}w_i$.

		The neighbourhood of a cover is given by changing for up to $k$ sets whether they are being used.
		This means that for any $k_1+k_2\leq k$ there may be $k_1$ sets added to and $k_2$ sets removed from $S$ in order to construct a new cover of $\mathcal{B}$.
	
		\sck is \pls/-complete due to \cite{Dumrauf10} for all $k \geq 2$.
\end{itemize}

\begin{theorem}
	The expected number of iterations of local search for \xIIIck and \sck, when the weights $w_i$ for each set $C_i\in \mathcal{C}$ are drawn from independent distributions with density $f_i:[0, 1]\map [0, \phi]$, is bounded by $\landau{3^k n^{k+2} \phi}$.
\end{theorem}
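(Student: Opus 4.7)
The plan is to cast both \xIIIck and \sck as binary CLO problems in a completely analogous way and then to exhibit a single covering that realises the parameters $\lambda\leq n^k$, $\beta=k$, $\mu=3$; plugging these into \cref{thm:single-step-bound} directly yields the desired bound $\landau{3^k n^{k+2}\phi}$ (recall $\nu=n$, $M=1$).

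For the CLO formulation I would take $\nu=n$ binary cost variables $s_i$, one per set $C_i\in\mathcal{C}$, with $s_i(S)=\indicator{C_i\in S}$ and cost coefficients $c_i=w_i$ (using the sign flip discussed in the remark at the end of \cref{sec:sin-step-apps} for the minimisation problem \sck). The configuration space $\vecc S$ is restricted to those binary vectors encoding a \emph{feasible} solution (an exact cover of $\mathcal{B}$ for \xIIIck, a cover for \sck); the neighbourhood $N(\vecc s)$ is obtained from the $k$-differ neighbourhood of the original problem. Since no set appears twice, the cost part alone already determines the solution, so no non-cost part is needed. The objective values and neighbourhoods coincide by construction, and the smoothness on $w_i$ translates verbatim to smoothness on $c_i$.

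For the covering, I would take the singleton coordinate cover $\mathcal{I}=\ssets{\ssets{i}\fwh i\in[n]}$ and, for every $k$-tuple $(i_1,\dots,i_k)\in[n]^k$ (repetitions allowed, so as to capture transitions flipping fewer than $k$ sets), the transition cluster
\begin{equation*}
E(i_1,\dots,i_k)\coloneqq\sset{(\vecc s,\vecc s')\in E\fwh \ssets{i:s_i\neq s'_i}\subseteq\ssets{i_1,\dots,i_k}}.
\end{equation*}
This immediately gives $\card{\mathcal{E}}\leq n^k=\lambda$ and $\core{E(i_1,\dots,i_k)}\subseteq\ssets{i_1,\dots,i_k}$, which is covered by at most $k$ singleton coordinate clusters, so $\beta=k$. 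For each singleton $I=\ssets{i}$ and any transition in the cluster, $s'_i-s_i\in\ssets{-1,0,+1}$ (the value $0$ arising precisely when $i$ is a repeated entry of the tuple that is not actually flipped, or when the same index appears in the tuple but is not being toggled in this particular transition). Hence $\diversity{E(i_1,\dots,i_k)}{I}\leq 3$, i.e.\ $\mu=3$. Substituting into \cref{thm:single-step-bound} yields
\begin{equation*}
3\cdot 3^k\cdot n^k\cdot n^2\cdot 1\cdot\log 2\cdot\phi=\landau{3^k n^{k+2}\phi}.
\end{equation*}

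The main (mild) subtlety to check is that the same covering works uniformly for both problems despite the different feasibility constraints: in \xIIIck the size of a cover is fixed at $q$, so every transition must balance additions and removals, whereas in \sck any combination of up to $k$ flips is allowed. This only further \emph{restricts} which transitions lie in each $E(i_1,\dots,i_k)$, and therefore cannot increase any of $\lambda$, $\beta$, $\mu$; in particular the $\mu=3$ bound is preserved because we always look at the projection onto a single coordinate. A secondary point is that, as noted above, \sck is a minimisation problem, so before invoking \cref{thm:single-step-bound} I would negate the cost vector (or, equivalently, invoke the ``flip the inequality'' convention of the footnote in \cref{sec:model}); the density bound $\phi$ on $[0,1]$ transports to a density bound $\phi$ on $[-1,0]$ without any loss.
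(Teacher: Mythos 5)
Your proof is essentially the same as the paper's. Both use the binary CLO formulation with one cost coordinate per set $C_i$, the singleton coordinate cover, and transition clusters indexed by $k$-tuples of indices/sets (the paper writes $E(D_1,\dots,D_k)$ with $D_i\in\mathcal{C}$, you write $E(i_1,\dots,i_k)$ with $i_\ell\in[n]$; the two are the same up to notation), yielding $\lambda=n^k$, $\beta=k$, $\mu=3$ and the claimed bound. The only small slip is the remark about which problem requires the ``sign flip'': the CLO framework is phrased for \emph{minimisation} by default, so it is the maximisation problem \xIIIck, not \sck, that would use the $\vecc s'_I-\vecc s_I$ convention from the remark preceding \cref{sec:TSP-k-opt} -- but since \cref{lemma:ER-rank1-maxdensity} is symmetric in $\vecc\xi$ this has no bearing on the bound.
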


\begin{proof}
	We formulate these problems as CLO problems, while directly translating the random distributions of the weights to the distributions of smoothed CLO costs and apply \cref{thm:single-step-bound}.

	Consider a binary ($M=1$) CLO problem where the cost part corresponds to the elements of the collection $\mathcal{C}$, i.e.\ $\nu = n$ and every set $C_i\in \mathcal{C}$ has a corresponding cost component $s_i$.
	The cost coefficients are set to the weights of the respective sets, $c_i = w_i$.
	The CLO configuration $\vecc s(S)$ induced by a cover $S$ for \xIIIck and \sck has $s_i(S) = 1$ whenever $C_i \in S$ and $s_i(S) = 0$ otherwise.
        Thus the weight of any cover is equal to the cost of its induced CLO configuration. 
	The CLO neighbourhood is derived directly and efficiently from the original instance of the problem.

	To apply \cref{thm:single-step-bound}, we establish a covering $(\mathcal{E}, \mathcal{I})$ and the parameters $\lambda = n^k, \beta = k, \mu = 3$ in the following.
	We start with the singleton coordinate cover, $\mathcal{I} = \{\{i\}, i = 1,\dots, n\}$.
	The transition cover $\mathcal{E}$ comprises clusters $E(D_1,\dots, D_k), D_i\in \mathcal{C}$, each indicating which $D_i$ are changed, regardless of their initial status; this means that if $D_i\in S$, it is being removed, and otherwise it is being added to the cover.
	There are at most $\lambda = n^k$ such clusters by accounting for every possible choice of $D_i$ in $k$ components.
	Note that the same $D_i$ may be chosen multiple times to account for changes of fewer than $k$ sets in the cover and that each permutation within the $D_i$ refers to the same transition cluster.
	Clearly, with each of these clusters we affect only the cost configuration of the sets $D_i$ that we change.
	Because we change at most $k$ sets in the transition, we have $\beta = k$.

	Finally, as these are binary problems, we know that $s_i'-s_i\in \{-1, 0, 1\}$ and therefore $\mu = 3$.
	By \cref{thm:single-step-bound} this gives a bound on the expected number of steps of $3\cdot 3^k n^k  \cdot n^2 \cdot \phi$.
\end{proof}

\subsubsection{Hitting Set}
An instance of Hitting Set (\hsk) is given by a collection of sets $\mathcal{C} = \{C_1,\dots, C_n\}, C_i \subseteq \mathcal{B}$ where $\mathcal{B}$ is a finite ground set, weights $w_i$ for each $C_i\in \mathcal{C}$ and some integer $m$.
A solution is a locally maximal subset $S \subseteq \mathcal{B}$ that is bounded in size $\card{S}\leq m$.
The weight of a subset $S$ is given by $ w(S) = \sum_{i: S\cap C_i \neq \emptyset} w_i$, i.e.\ the sum of all sets in $\mathcal{C}$ that are \emph{hit} by the subset $S$, which means that at least a single element of $C_i$ appears in $S$.

The local search neighbourhood of a subset $S$ is given by changing for at most $k$ elements from $\mathcal{B}$ whether they occur in $S$ or not.
This means for any $k_1+k_2\leq k$ we can add $k_1$ new elements to and remove $k_2$ elements from $S$ to create one of its neighbours.

\hsk is \pls/-complete due to \cite{Dumrauf10} for all $k \geq 1$ (note that the proof can only be found in the full version of the paper \cite{Dumrauf10_arXiv}).
In fact, the reduction allows for a slight adaptation that gives \pls/-completeness of \hsk for every $k \geq 1$ even when the maximum number of occurrences of an element within $\mathcal{C}$ is bounded.
In our proof deferred to \cref{sec:hittingset_pls} we reduce from \lmaxsatI with bounded variable occurrence (every variable appears in at most $\tilde{B}$ many clauses) instead of the standard \lmaxsatI; the additional property translates to the fact that the number of sets that contain a given element is bounded by $\tilde{B}+1$, and therefore also by a constant. \lmaxsatI with bounded variable occurrence was shown to be \pls/-complete in \cite{Krentel:structurelocalopt} and \cite{Klauck96} via tight \pls/-reductions. 

The starting problem of these reduction paths needs exponentially many steps to find a solution in the worst case, under its standard local search algorithm. Therefore, the \hsI instances we study here inherit this property, meaning that the standard local search algorithm has exponentially long improvement sequences in the worst case (see the relevant discussion in \cref{sec:model}). However, we prove that under smoothness, even in significantly wider families of instances, standard local search terminates after polynomially many iterations in expectation.

\begin{theorem}
	The expected number of iterations of local search for \hsk, when the weights $w_i$ for each set $C_i\in \mathcal{C}$ are drawn from independent distributions with density $f_i:[0, 1]\map [0, \phi]$, is bounded by $\landau{3^{kB} \card{\mathcal{B}}^k n^2\phi }$.
     The parameter $B$ denotes the maximum number of occurrences of an element within $\mathcal{C}$, i.e., $ B = \max_{b\in \mathcal{B}} \card{\{C_i\in \mathcal{C}\fwh b\in C_i\}} $.
\end{theorem}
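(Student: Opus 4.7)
The plan is to cast \hsk as a binary CLO problem and then invoke our black-box tool (\cref{thm:single-step-bound}) with an appropriately chosen covering. Because the weights live on the sets $C_i$ (rather than on the ground-set elements), the cost part should be indexed by $\mathcal{C}$: take $M=1$, $\nu = n$, and for each $C_i$ introduce a component $s_i$ with coefficient $c_i = w_i$. A candidate subset $S\subseteq\mathcal{B}$ induces the configuration $\vecc{s}(S)$ with $s_i(S) = \indicator{S\cap C_i \neq \emptyset}$, so that $\vecc{c}\cdot \vecc{s}^\costv(S)$ equals $w(S)$. Since different $S$ may yield the same $\vecc s^\costv$, I would keep track of $S$ itself in the non-cost part as a binary string of length $\bar\nu = \card{\mathcal{B}}$, which also encodes the size constraint $\card{S}\leq m$. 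The neighbourhood relation on configurations is the one directly induced by the $k$-flip neighbourhood on $S$, and is efficiently searchable for constant $k$. The smoothing on the weights $w_i$ translates verbatim to smoothing on $c_i$.

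For the covering $(\mathcal{E},\mathcal{I})$, take the singleton coordinate cover $\mathcal{I} = \sset{\ssets{i}\fwh i\in\intupto{n}}$. The key idea for the transition cover is to cluster transitions not by which sets $C_i$ change status, but by which ground-set elements are flipped. Concretely, for every choice of $k$ (not necessarily distinct) elements $b_1,\dots,b_k\in\mathcal{B}$, let
\[
E(b_1,\dots,b_k) = \sset{(\vecc{s}(S), \vecc{s}(S'))\in E \fwh{ S\triangle S' \subseteq \ssets{b_1,\dots,b_k}}},
\]
i.e.\ the transitions that flip membership of at most these $k$ elements in $S$ (repetitions accommodate flips of fewer than $k$ elements). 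This yields $\lambda = \card{\mathcal{B}}^k$.

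The decisive step is bounding $\beta$, and this is where the parameter $B$ enters. For a transition in $E(b_1,\dots,b_k)$, a component $s_i$ can change only if $C_i$ contains at least one of the $b_j$'s, since sets disjoint from $\ssets{b_1,\dots,b_k}$ are hit by $S$ iff they are hit by $S'$. By hypothesis, each $b_j$ lies in at most $B$ sets of $\mathcal{C}$, so $\core{E(b_1,\dots,b_k)}$ is covered by at most $kB$ singleton coordinate clusters, giving $\beta = kB$. For $\mu$, since the problem is binary, every coordinate difference $s_i' - s_i$ lies in $\ssets{-1,0,1}$, so $\mu = 3$. The main (minor) subtlety here is just verifying that within a fixed cluster, all three values are actually attainable across differing initial configurations, but since we only need an upper bound, $\mu=3$ suffices.

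Plugging into \cref{thm:single-step-bound} with $M=1$ and $\log(M+1)=1$ yields
\[
3\cdot \mu^\beta\lambda \cdot \nu^2 M \log(M+1)\cdot \phi \;=\; 3\cdot 3^{kB}\cdot \card{\mathcal{B}}^k \cdot n^2 \cdot \phi \;=\; \landau{3^{kB}\card{\mathcal{B}}^k n^2\phi},
\]
as claimed. I do not anticipate a genuinely hard step; the only real decision is organising the transition cover by flipped ground-set elements rather than by affected sets, which is what decouples $\beta$ from $n$ and forces the right dependence on $B$.
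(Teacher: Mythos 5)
Your proposal is correct and follows essentially the same approach as the paper: same CLO formulation (cost components indexed by $\mathcal{C}$, non-cost part encoding $S$), same singleton coordinate cover, same key idea of clustering transitions by the flipped ground-set elements to obtain $\lambda = \card{\mathcal{B}}^k$ and $\beta = kB$ via the bounded element-occurrence parameter $B$, and the same $\mu = 3$ from binarity. The only cosmetic difference is that you define the transition cluster $E(b_1,\dots,b_k)$ via the containment $S\triangle S'\subseteq\ssets{b_1,\dots,b_k}$, whereas the paper requires each listed $b_j$ to actually flip (with repetitions among the $b_j$ accommodating flips of fewer than $k$ elements); both are valid covers and yield the identical bound.
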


\begin{proof}
	We formulate this problem as a CLO problem apply \cref{thm:single-step-bound} to its smoothed counterpart:

	Consider a binary ($M=1$) CLO problem, where the cost part corresponds to the elements of the collection $\mathcal{C}$, i.e.\ $\nu = n$ and every set $C_i$ corresponds to a cost component $s_i$.
	A subset $S$ induces a CLO configuration $\vecc s(S)$ with $s_i(S) = 1$ if and only if $S\cap C_i \neq \emptyset$, $s_i = 0$ otherwise.
	The cost coefficients are set to $c_i = w_i$ and thus, the cost of $\vecc s(S)$ corresponds to the weight of $S$ in the \hsk instance.
	The subset $S$ itself can be maintained in the non-cost configuration as an indicator vector over the ground set $\mathcal{B}$, i.e.\ $\bar \nu = \card{\mathcal{B}}$.
	Also, the neighbourhood of the CLO problem is efficiently derived from the neighbourhood of the \hsk instance and the smoothed costs of the \hsk instance correspond to the derived smoothed CLO model because the definition of the CLO cost coefficients.  

	To apply \cref{thm:single-step-bound}, we need to establish a covering $(\mathcal{E}, \mathcal{I})$ and the parameters $\lambda = \card{\mathcal{B}}^k, \beta = kB, \mu = 3$ in the following.
	We start with the singleton coordinate cover, $\mathcal{I} = \{\{i\}, i = 1,\dots, n\}$.
	The neighbourhood graph is covered by transition clusters $E(b_1,\dots, b_k), b_j\in \mathcal{B}$, each indicating for which elements in $\mathcal{B}$ the subset changes, regardless of the initial subset.
	Every such cluster therefore contains all valid transitions from an initial subset $S$, where each $b_j$ is removed from the set if $S$ already contains $b_j$, or added to the set if it doesn't.
	There are at most $\lambda = \card{\mathcal{B}}^k $ different transition clusters, accounting for each choice of $b_j$ in $k$ components.
	Note that we do allow the same $b\in \mathcal{B}$ to appear multiple times in order to represent changes of fewer than $k$ elements.
	Each permutation of the $b_j$ gives the same transition cluster.
	
	Lastly, we settle the properties $\beta$ and $\mu$ for a fixed transition cluster $E(b_1,\dots, b_k), b_j\in \mathcal{B}$.
        When changing any element $b$ in the subset, this can only affect the cost components $s_i$ of the $C_i$, where $b\in C_i$; the weight contribution of all other sets is independent of $b$.
	Because, by assumption, each $b$ occurs in at most $B$ different sets $C_i$, we know that in total there are at most $\beta = kB$ components affected by any change in $k$ different elements.
	Finally, within the configuration change in a transition for any of the singleton coordinate clusters, we simply use that the CLO problem is binary, and therefore $s_i' - s_i \in \{-1, 0, 1\}$ in any case, directly giving $\mu = 3$.
	Applying \cref{thm:single-step-bound} with these parameters yields a bound on the expected numbers of iterations of $3\cdot 3^{kB} \card{\mathcal{B}}^k  \cdot n^2 \cdot \phi$.
\end{proof}

\section{Conclusion and Open Problems}

In this work, we have introduced the class of CLO problems, which captures many relevant problems in \pls/ and allows us to define a corresponding smoothed version of each problem in a unifying way.
This framework allowed us to establish a black-box tool to give a bound on the smoothed running time for such problems by merely supplying some structural parameters of the problem, which can often be found naturally.
We have made use of this newly established framework by applying it to a multitude of existing problems.
Thereby, we rederive some existing results on the smoothed complexity of these problems. Furthermore, the framework allows us to apply the same structural ideas to problems that have not been discussed in the field of smoothed complexity before. For all these problems, we complement the positive results of smoothed running time with negative worst-case results. In particular, we show that they are \pls/-hard, and additionally, that the local-search algorithms used for the positive results actually admit exponential time instances.

A variety of more recent results in the field of smoothed analysis apply a technique that is not captured by our framework: instead of considering a single step in the improvement dynamics, the path through the neighbourhood graph is structurally clustered into sequences, for which the \emph{total} improvement is bounded by a problem-specific rank-based argument.
A natural question that arises is whether our framework can be generalized to capture this technique in a streamlined fashion as well, in order to unify even more of the landscape of smoothed analysis or establish better bounds than we can currently provide. This open question is related to Point~\ref{enum:open-problem-theorem3-1-1} of our discussion about the limitations of our key technical tool (\cref{thm:single-step-bound}), in \cref{sec:limitations-theorem}. 

There have been some observations on smoothed models, where considering a fully parameterized instance enabled improving the bounds on the smoothed running times of the standard local search algorithm compared to the general instances.
An example of this phenomenon is \lmaxcut for complete graphs for which smoothed polynomial time has been established in \cite{Angel:2017aa}, whereas for general graphs there is no better than quasi-polynomial smoothed time known. The former kind of graph input is fully parameterized (all possible edges are considered to undertake smoothed perturbation), as opposed to the latter.
We conjecture that there is a unifying framework of a similar flavour to ours which would work for these fully parameterized instances of \pls/-complete problems. This is related to Point~\ref{enum:open-problem-theorem3-1-2} of our discussion in \cref{sec:limitations-theorem}.

In this paper, we show that the problem of computing (pure, exact) equilibria in congestion games admits a smoothed polynomial running time algorithm for special classes of general and network congestion games (see \cref{th:smoothed-p-bounded-interaction-congestion-games} and \cref{th:pls-hardness-compact-network}, respectively). However, outside of these restricted instances of the problem, the complexity remains unresolved. An efficient algorithm for the computation of \emph{approximate} equilibria (FPTAS) under smoothed analysis, was recently presented in~\parencite{g2024}.

To the best of our knowledge, the published literature on smoothed analysis of problems in \pls/ has been limited to establishing upper bounds on the smoothed running time of the standard local search algorithm. The question that arises here is whether we can show that a problem remains \pls/-hard even if its input is perturbed (within some range) according to the smoothed model.
From the positive side, an interesting question is to characterize the class of \pls/-complete problems whose standard local search algorithm terminates in smoothed polynomial time.

\section*{Acknowledgements} Y.\ Giannakopoulos is grateful to Diogo Poças for many useful discussions and inspiration during the early stages of this project.
The authors thank Rahul Savani for pointing towards the inclusion of hedonic games to this paper.

\appendix

\section{Omitted Proofs}
\label{sec:proof-lemmas-appendix}

\subsection{Proof of \texorpdfstring{\cref{lemma:diversity-cover}}{Lemma~3.3}}

First, we show that the statement holds when $\mathcal{J}$ is a partition instead of a cover, i.e.\ we assume the sets in $\mathcal{J}$ are pairwise disjoint and $ J\subseteq I$ for every $J\in \mathcal{J}$.
Let us fix a subset of transitions $T\subseteq E$.
Denote by $\intuptozero{M}_\pm = \{-M, \dots, -1, 0, 1, \dots, M\}$ the set of possible values for the transition difference $s_i - s_i'$ in every component.
The set $\range{I}{T}$ thus contains vectors from $\intuptozero{M}_\pm^{I}$ and $\range{J}{T}\subseteq \intuptozero{M}_\pm^{J}, J\in \mathcal{J}$ analogously.
We identify each $\vecc x\in \intuptozero{M}_\pm^I $ with the family $(\vecc x_{J})_{J\in \mathcal{J}}$, as every index $i\in I$ is contained in exactly one of the $J\in \mathcal{J}$, and observe that $(\vecc x_{J})_{J\in \mathcal{J}} \in \prod_{J\in \mathcal{J}}\intuptozero{M}_\pm^J$.

We prove that $\range{I}{T} \subseteq \prod_{J\in \mathcal{J}}\range{J}{T}$ under this identification.
For the sake of contradiction, suppose that there exists $\vecc x\in \range{I}{T}$ with $\vecc x\notin \prod_{J\in \mathcal{J}}\range{J}{T}$. 
Therefore there exists $J \in \mathcal{J}$ such that $\vecc x_{J} \notin \range{J}{T}$.
But by definition, $\range{J}{T}$ contains $\vecc x_{J}$ due to the same transition $(\vecc s, \vecc s')$ that generated $\vecc x$ itself; a contradiction.

Now, we show that for $A\subseteq B$ we have $\delta_{A}(T)\leq \delta_{B}(T)$. Every $\vecc x, \tilde{\vecc x} \in \range{A}{T}, \vecc x\neq \tilde{\vecc x}$ are coming from transitions $(\vecc s, \vecc s'), (\tilde{\vecc{s}},  \tilde{\vecc{s}}')$.
These transitions also provide $\vecc x', \tilde{\vecc x}' \in \range{B}{T}$.
Since $A \subseteq B$ and for the projections of $\vecc x'$ and $\tilde{\vecc x}'$ on $A$ we have $\vecc x'_{A} = \vecc x, \tilde{\vecc x}'_{A} = \tilde{\vecc x}$, it holds that also $\vecc x'\neq \tilde{\vecc x}'$.
Thus, all elements in $\range{A}{T}$ have distinct representatives in $\range{B}{T}$ and therefore $\delta_{A}(T) \leq \delta_{B}(T)$.

Combining the observations above proves the main statement: Every cover
$\mathcal{J}$ of $I$ can be modified to be a partition $\mathcal{J}_p$ of $I$ by
removing every index $i\in I$ that appears more than once from all but a single
set and remove every $i \notin I$ entirely. Therefore, every set $A \in
\mathcal{J}_p$ has a corresponding superset $B \in \mathcal{J}$, and we conclude 
	\begin{equation*}
		\delta_I(T) \leq \prod_{A \in \mathcal{J}_p} \delta_{A}(T) \leq \prod_{B\in \mathcal{J}}\delta_{B}(T).
	\end{equation*}

\subsection{Proof of \texorpdfstring{\cref{lemma:ER-rank1-maxdensity}}{Lemma~3.4}}

For an arbitrary (absolutely) continuous real random variable $X$ with density
function $f:\R\to\R_{\geq 0}$ we will denote $M(X)\coloneqq
\esssup_{x\in\R}f(x)$.\footnote{Here $\esssup$ denotes the \emph{essential supremum} operator (see, e.g., \textcite[Sec.~21]{halmos1974measure}.); that is, roughly speaking, $M(X)$ is upper-bounding function $f$ \emph{almost everywhere} in its domain, excluding possibly a zero-measure set of points. Formally, $\esssup_{x\in\R}f(x)\coloneqq \inf\sset{c\in\R\fwh{\mu(\ssets{x\in \R\fwh{f(x)>c}})=0}}$ where $\mu$ is the standard Lebesgue measure on $\R$.} 
For the random variables in the statement of our Lemma we have $M(X_i)\leq \phi$ for all $i\in[m]$. Let also $Y_i=\xi_i X_i$, for all
$i\in[m]$, and 
$$Y=\sum_{i\in I^*} Y_i\qquad\text{where}\;\; I^*=\sset{i\in[m]\fwh{\xi_i\neq 0}}.$$
Notice that $I^*\neq \emptyset$, since $\vecc{\xi}=(\xi_1,\dots,\xi_m)$ is nonzero.

Then, it holds that $M(Y_i)=\frac{1}{\xi_i}M(X_i)$ for all $i\in I^*$. Furthermore,
since $Y_i$'s are independent, by~\textcite[Remark, p.~105]{Bobkov_2014} we know
that the following inequality applies:
$$ \frac{1}{M^2(Y)}\geq \frac{1}{2}\sum_{i\in I^*}\frac{1}{M^2(Y_i)}.$$
From this we get:
\begin{align*} 
M(Y) \leq \sqrt{2}\frac{1}{\sqrt{\sum_{i\in I^*}\frac{1}{M^2(Y_i)}}}
= \sqrt{2}\frac{1}{\sqrt{\sum_{i\in I^*}\frac{\xi_i^2}{M^2(X_i)}}}
\leq \sqrt{2}\phi\frac{1}{\sqrt{\sum_{i\in I^*}\xi_i^2}}
= \sqrt{2}\phi\frac{1}{\norm{\vecc{\xi}}_2}.
\end{align*}
Finally, using the above inequality we get that, for any $\varepsilon\geq 0$,
\begin{equation*}
\label{eq:ER-rank1-maxdensity-1}
\prob{0\leq \vecc{\xi}\cdot \vecc X \leq \varepsilon} 
= \prob{0 \leq Y \leq \varepsilon} 
\leq \int_{0}^{\varepsilon} M(Y)
\leq \varepsilon\cdot \sqrt{2}\phi\frac{1}{\norm{\vecc{\xi}}_2}.
\end{equation*}

We will now also derive an alternative bound. Let $i^*\in\argmax_{i\in I^*} \card{\xi_i}$, i.e.,  $\card{\xi_{i^*}}=\norm{\vecc{\xi}}_\infty$. Since $I^*\neq\emptyset$, we know that $\xi_{i^*}\neq 0$. Then, we get:
\begin{align}
\label{eq:ER-rank1-maxdensity-helper-1}
\prob{0 \leq \vecc{\xi}\cdot \vecc X \leq \varepsilon} 
	&=\prob{-\sum_{i\neq i^*}\xi_i X_i \leq \xi_{i^*} X_{i^*} \leq \varepsilon -\sum_{i\neq i^*}\xi_i X_i}\notag\\
	&=
\begin{dcases}
\prob{X_{i^*}\in\left(Z,Z+\frac{\varepsilon}{\card{\xi_{i^*}}}\right)}, &\text{if}\;\; \xi_{i^*}>0,\\
\prob{X_{i^*}\in\left(Z-\frac{\varepsilon}{\card{\xi_{i^*}}},Z\right)}, &\text{if}\;\; \xi_{i^*}<0,
\end{dcases}
\end{align}
where
$$
Z=-\frac{1}{\xi_{i^*}}\sum_{i\neq i^*}\xi_i X_i.
$$
Since random variables $X_{i^*}$ and $Z$ are independent, both probabilities in~\eqref{eq:ER-rank1-maxdensity-helper-1} can be uniformly
bounded, for each realization of $Z$, by $\frac{\varepsilon}{\card{\xi_{i^*}}}\phi$.

Merging the two bounds derived above, we finally get the desired
$$
\prob{0\leq \vecc{\xi}\cdot \vecc X \leq \varepsilon} \leq 
\min\left(\frac{1}{\norm{\vecc{\xi}}_\infty},\frac{\sqrt{2}}{\norm{\vecc{\xi}}_2}\right) \varepsilon\phi
$$

For nonzero \emph{integral} vectors $\vecc{\xi}$, we have $\min\left(\frac{1}{\norm{\vecc{\xi}}_\infty},\frac{\sqrt{2}}{\norm{\vecc{\xi}}_2}\right) \leq 1$, since $\norm{\vecc{\xi}}_\infty \geq 1$.

\section{\texorpdfstring{\pls/}{PLS}-Hardness of Restrained Congestion Games}
\label{sec:congestion-hardness}

Here we prove \cref{th:pls-hardness-restrained}, by showing that the problem of finding a PNE in a congestion game is
\pls/-complete even for constantly-restrained games and for the
simplest versions of general, polynomial, and step-function costs. Particularly, in \cref{th:pls-10-bounded-interaction} and \cref{thm: pls-bounded-linear-cong-games} we show that even $10$-restrained congestion games are \pls/-hard under all considered cost function models. 

We start by presenting the main
reduction from the problem \lmaxcutd to congestion games with at most two
strategies per player and access to at most $d$ resources per strategy. 
\lmaxcutd is the restricted version of \lmaxcut (see \cref{def:lmaxcut}) where the input graph has maximum degree $d$. 
Since \lmaxcutV is \pls/-complete (\cite{ElsasserT11}), our reduction shows that congestion games with two strategies per player and access to at most five resources is already \pls/-hard. The reduction we present is identical to the one that appears in the textbook of~\textcite[Theorem~19.4]{Roughgarden16} which reduces \lmaxcut to finding a PNE in congestion games. For completeness, we present the reduction here while keeping track of the important parameters of the resulting instance: the number of strategies per player and the number of resources per strategy.
Note that within our reduction, the strategies and the local search neighbourhood of the reduced instance correspond one-to-one to the cuts and neighbourhood of the \lmaxcutd instance.
Therefore, the reduction is tight in the sense of \cite{Schaffer91}.

Given a \lmaxcutd instance on the graph $G = (V, E)$ we will create a congestion game with player set $V$, where each player corresponds to a vertex of the graph. For each edge $e \in E$ we create two resources $r_e, \bar{r}_e$. Let the (graph) neighbourhood of a vertex $i \in V$ be denoted by $\nei{i} := \{j \in V ~|~ ij \in E\}$. Let us also denote the edges incident to $i$ by $\inc{i} := \{ij \in E \fwh j \in \nei{i}\}$. Each player $i \in V$ has two strategies, namely, $S_1(i) := \{r_e\}_{e \in \inc{i}}$ and $S_2(i) := \{\bar{r}_e\}_{e \in \inc{i}}$. For $e \in E$, the cost for using a resource is
\begin{align}\label{eq: pls-congestion-cost}
	\kappa_{r_e}(1) = \kappa_{\bar{r}_e}(1) = 0, \quad \text{and} \quad \kappa_{r_e}(\ell) = \kappa_{\bar{r}_e}(\ell) = w_e, \quad \text{for} \quad \ell \geq 2.
\end{align}
Notice that, by construction, at most two players can use the same resource.

We claim that any PNE of the congestion game corresponds to a solution to the initial \lmaxcutd instance, that is, a cut $(V_1, V_2)$ in the following way: For any player $i$ that plays strategy $S_1(i)$, the corresponding node $i \in V$ is placed in $V_1$, otherwise she is placed in $V_2$. Consider a PNE of the congestion game, and for the sake of contradiction, suppose there exists a vertex $i \in V_1$ which, if moved to $V_2$, yields a greater cut weight. Let $E_1(i) := \{ij \in E \fwh j \in V_1\}$ and $E_2(i) := \{ij \in E \fwh j \in V_2\}$ be the edges between $i$ and her neighbours in $V_1$ and $V_2$, respectively. Then $\sum_{e \in E_1(i)} w_e - \sum_{e \in E_2(i)} w_e > 0$. Now consider the difference in experienced cost of player $i$ when deviating from $S_1(i)$ to $S_2(i)$. This would be 
\begin{align*}
	 \sum_{e \in E_{1}(i)} \kappa_{\bar{r}_e}(1) + \sum_{e \in E_{2}(i)} \kappa_{\bar{r}_e}(2) - \sum_{e \in E_{1}(i)} \kappa_{r_e}(2) - \sum_{e \in E_{2}(i)} \kappa_{r_e}(1) &= \sum_{e \in E_{2}(i)} \kappa_{\bar{r}_e}(2) - \sum_{e \in E_{1}(i)} \kappa_{r_e}(2) \\
	 &= \sum_{e \in E_{2}(i)} w_e - \sum_{e \in E_{1}(i)} w_e \\
	 &< 0.
\end{align*}
This means that player $i$ would decrease her cost if she unilaterally deviated to $S_2(i)$, therefore, the original state was not a PNE, a contradiction. Similarly, we get a contradiction by assuming that there is a vertex $i \in V_2$ which, if moved to $V_1$, yields a greater cut weight. Hence, any PNE of the congestion game corresponds to a solution of \lmaxcutd.
In this reduction, the constructed congestion game has step-function costs (with only a single step at $\ell = 2$), implying immediately that it holds for general cost functions (see \eqref{eq: pls-congestion-cost}). 
Since for any $d \geq 5$ \lmaxcutd is \pls/-complete (\cite{ElsasserT11}), the above reduction yields the following.

\begin{theorem}\label{thm: pls-bounded-cong-games}
    Finding a PNE, in congestion games with step function or general cost representation, is \pls/-complete even when every player has 2 strategies, each consisting of at most 5 resources. 
\end{theorem}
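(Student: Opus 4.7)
The plan is to invoke the reduction from \lmaxcutV described in the paragraphs preceding the theorem, and to verify three things: membership in \pls/, correctness together with polynomial-time computability of the reduction, and the claimed structural constraints on the resulting congestion game. Since the reduction is essentially laid out already, most of the work is bookkeeping.

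For \pls/ membership, I would point to Rosenthal's potential in~\eqref{eq:rosenthal_potential}: PNE existence is guaranteed, an initial profile can be output trivially (for example, every player choosing $S_1(i)$), and with only two strategies per player the pivoting rule just compares the two cost values of each player in turn, which is linear-time. For the reduction's correctness, the key calculation has already been performed in the preceding text: a player $i$'s cost under profile $\vecc{\sigma}$ is exactly $\sum_{e\in E_{j(i)}(i)}w_e$, where $j(i)\in\{1,2\}$ is the side of the induced cut containing $i$, so the change in $i$'s cost from a unilateral switch equals (the negative of) the change in cut weight from flipping vertex $i$. Therefore PNE of the constructed game are in bijection with local maxima of the cut under the \flip neighbourhood, i.e.\ with \lmaxcutV solutions. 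The construction itself is polynomial-time, producing $|V|$ players, $2|E|$ resources, and explicitly listed strategies; this also makes the reduction tight in the sense of~\cite{Schaffer91}, since the transition-graph edges match one-to-one with those of the \lmaxcutV instance.

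For the structural bounds, each player has exactly two strategies by construction, and each strategy $S_j(i)$ contains exactly $|E(i)|$ resources, which is at most the degree of $i$, hence at most $5$ for a \lmaxcutV input. The costs specified by~\eqref{eq: pls-congestion-cost} are themselves step functions with a single break-point at $\ell=2$, so the result is immediate for the step-function representation; for the general representation we simply list the two distinct latency values $(0,w_e,w_e,\dots)$ explicitly, paying only a polynomial overhead. Combined with the \pls/-completeness of \lmaxcutV from~\cite{ElsasserT11}, this yields \pls/-hardness, while \pls/ membership was established above.

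There is no substantial obstacle here: the only delicate point is making sure the cost-change computation in the preceding text is framed as an exact equality (up to sign) between the two potential differences, so that the correspondence is not just ``PNE $\to$ local optimum'' but also ``locally improving move $\to$ locally improving flip'', giving tightness of the reduction. Since the calculation is already complete, this just requires a clean one-line restatement.
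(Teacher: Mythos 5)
Your proposal is correct and follows essentially the same route as the paper: you invoke the identical reduction from \lmaxcutV (players as vertices, resource pairs $r_e,\bar r_e$, two strategies $S_1(i),S_2(i)$ of size $\leq\deg(i)\leq 5$, cost $0$ at load $1$ and $w_e$ at load $\geq 2$), and you correctly identify that a player's cost change under a unilateral switch equals the negative of the cut-weight change, so PNE correspond exactly to \flip-local maxima. The extra detail you supply (explicit \pls/ membership via Rosenthal's potential, framing the sign-matched equality to justify tightness) is consistent with what the paper states more tersely and does not change the argument.
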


Recall now that, by construction, $|\inc{i}| \leq d$ for all $i \in V$. Therefore, $|S_1(i)| = |S_2(i)| \leq d$, which implies that the symmetric difference between any pair of strategies of a player in such congestion games is at most $2 d$ (see \cref{def:restrained-games}). Therefore, we get the following.

\begin{corollary}
	\label{th:pls-10-bounded-interaction}
    Finding a PNE in $10$-restrained congestion games, with step-function or general cost representation, is \pls/-complete.
\end{corollary}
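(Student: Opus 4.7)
The plan is to observe that the reduction already carried out for \cref{thm: pls-bounded-cong-games} automatically yields a $10$-restrained congestion game, so the corollary reduces to a short bookkeeping step. First I would start from an instance of \lmaxcutV, which is known to be \pls/-complete~\cite{ElsasserT11}, and apply verbatim the reduction preceding this corollary. This produces a congestion game with player set $V$ in which each player $i$ has exactly two strategies, $S_1(i)=\{r_e\}_{e\in\inc{i}}$ and $S_2(i)=\{\bar r_e\}_{e\in\inc{i}}$, with costs given by~\eqref{eq: pls-congestion-cost}. Since the underlying graph has maximum degree $5$, both strategies have cardinality at most $|\inc{i}|\leq 5$.

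The key additional observation is that $S_1(i)$ and $S_2(i)$ are drawn from disjoint pools of resources -- the ``unprimed'' $r_e$'s versus the ``primed'' $\bar r_e$'s -- and hence their symmetric difference equals the sum of their cardinalities:
\[ |\symd{S_1(i)}{S_2(i)}| = |S_1(i)|+|S_2(i)| \leq 10. \]
Taking the maximum over all players $i\in\mathcal N$ and all pairs of strategies in $\Sigma_i=\{S_1(i),S_2(i)\}$, the resulting game is $10$-restrained in the sense of \cref{def:restrained-games}. Membership of the decision problem in \pls/ is inherited from the standard membership of \congestion, via Rosenthal's potential~\eqref{eq:rosenthal_potential} together with the fact that a better-response can be found by enumerating the two strategies of each player in polynomial time; this is valid under both the step-function and the general cost representation, since the costs in~\eqref{eq: pls-congestion-cost} are expressible in either model.

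There is no genuine technical obstacle: the reduction of \cref{thm: pls-bounded-cong-games} already does all the work, and the only addendum is the disjointness remark that promotes the per-strategy bound of $5$ resources into a symmetric-difference bound of $10$. The same reduction thus witnesses \pls/-hardness of \congestion restricted to $10$-restrained instances in both cost models, which together with \pls/-membership establishes the corollary.
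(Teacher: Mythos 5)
Your proof is correct and follows essentially the same route as the paper's: both observe that the reduction behind \cref{thm: pls-bounded-cong-games} already produces a game in which each player's two strategies have cardinality at most $d=5$, so their symmetric difference is at most $2d=10$. Your disjointness remark (that $S_1(i)$ and $S_2(i)$ live in disjoint resource pools, so the symmetric difference is exactly the sum of sizes) is a slightly sharper version of the paper's generic bound $\card{\symd{S_1(i)}{S_2(i)}}\leq 2d$, but this makes no difference to the conclusion.
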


We will now show how they also hold for affine cost functions with nonnegative coefficients, that is, the simplest class of polynomial functions, excluding constant functions.
For $e \in E$, let the cost for using the corresponding resource be
\begin{align}\label{eq: pls-linear-congestion-cost}
    \kappa_{r_e}(\ell) = \kappa_{\bar{r}_e}(\ell) = \ell \cdot w_e, \quad \text{for} \quad \ell \geq 1,
\end{align}
and recall that, by construction, at most 2 players can be using the same resource.

Similarly to the previous reduction, we show that any PNE of this congestion game corresponds to a solution to the initial \lmaxcutd instance. In particular, for every player $i$ using $S_1(i)$ (resp.\ $S_2(i)$) we place its corresponding vertex $i$ in $V_1$ (resp.\ $V_2$). Consider a PNE of the congestion game, and suppose there exists a vertex $i \in V_1$ which, if moved to $V_2$, yields a greater cut weight. Therefore, $\sum_{e \in E_1(i)} w_e - \sum_{e \in E_2(i)} w_e > 0$. The difference in experienced cost of player $i$ when deviating from $S_1(i)$ to $S_2(i)$ is
\begin{align*}
    \sum_{e \in E_{1}(i)} \kappa_{\bar{r}_e}(1) + \sum_{e \in E_{2}(i)} \kappa_{\bar{r}_e}(2) - &\sum_{e \in E_{1}(i)} \kappa_{r_e}(2) - \sum_{e \in E_{2}(i)} \kappa_{r_e}(1) =  \\
    &= \sum_{e \in E_{1}(i)} w_e + \sum_{e \in E_{2}(i)} 2w_e - \sum_{e \in E_{1}(i)} 2w_e - \sum_{e \in E_{2}(i)} w_e \\
    &= \sum_{e \in E_{2}(i)} w_e - \sum_{e \in E_{1}(i)} w_e \\
    &< 0.
\end{align*}
So, player $i$ would decrease her cost if she unilaterally deviated to $S_2(i)$, contradicting the assumption that the original state was a PNE. Similarly, we are led to a contradiction by assuming that there is a node $i \in V_2$ which, if moved to $V_1$, yields a greater cut weight. Therefore, any PNE of the congestion game with linear cost function corresponds to a solution of \lmaxcutd, and again, since for any $d \geq 5$ \lmaxcutd is \pls/-complete (\cite{ElsasserT11}) we get the following.

\begin{theorem}\label{thm: pls-bounded-linear-cong-games}
    Congestion games with affine cost functions are \pls/-complete even when every player has 2 strategies, each consisting of at most 5 resources, and each resource being available to at most 2 players. In particular, such games are $10$-restrained.
\end{theorem}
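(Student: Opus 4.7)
The plan is to adapt the step-function reduction from \cref{thm: pls-bounded-cong-games} to the affine-cost setting, keeping the combinatorial structure of the constructed game identical and only modifying the latency on each resource. Concretely, starting from an instance of \lmaxcutV on $G=(V,E)$, I would produce one player per vertex, two resources $r_e, \bar{r}_e$ per edge, and the two strategies $S_1(i) = \{r_e : e \in \inc{i}\}$, $S_2(i) = \{\bar{r}_e : e \in \inc{i}\}$ exactly as before. The structural parameters therefore inherit the previous bounds: two strategies per player, at most five resources per strategy (since $G$ has maximum degree five), and at most two players per resource (the endpoints of the corresponding edge). This immediately yields $\card{\symd{S_1(i)}{S_2(i)}} \leq 2\cdot 5 = 10$, giving the $10$-restrained property. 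The only substantive change is to replace the step-function latency by the affine $\kappa_{r_e}(\ell) = \kappa_{\bar{r}_e}(\ell) = \ell \cdot w_e$ of~\eqref{eq: pls-linear-congestion-cost}.

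The core verification is that PNE of the resulting affine-cost game still biject with local maximum cuts of $G$. I would identify a strategy profile with the cut $(V_1, V_2)$ where $V_k$ collects the vertices of players using $S_k$, and compute the cost change for a player $i \in V_1$ deviating from $S_1(i)$ to $S_2(i)$: every edge $e$ to a same-side neighbour switches from a load-$2$ resource $r_e$ (cost $2w_e$) to a load-$1$ resource $\bar{r}_e$ (cost $w_e$), while every edge $e$ to an opposite-side neighbour undergoes the reverse transition. Summing these contributions, the net cost change equals $\sum_{e \in E_2(i)} w_e - \sum_{e \in E_1(i)} w_e$, which is precisely the negative of the cut improvement obtained by moving $i$ across the partition. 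The symmetric argument applies to $V_2 \to V_1$ deviations, so the profile is a PNE if and only if $(V_1, V_2)$ is a local maximum cut under the \flip neighbourhood.

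For \pls/-membership, any congestion game (in particular the affine one) admits Rosenthal's potential~\eqref{eq:rosenthal_potential}, placing the problem in \pls/ via better-response dynamics; \pls/-hardness then follows because \lmaxcutV is \pls/-complete~\cite{ElsasserT11}. Combining both directions, along with the structural bounds noted above, establishes the theorem.

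The main obstacle I expect is the cost-difference calculation in the second paragraph: unlike the step function, the affine latency is \emph{nonzero} already at load $1$, so one must carefully check that the extra $w_e$ contributions at singleton-loaded resources cancel out when comparing the pre- and post-deviation costs. Once this cancellation is verified, the proof proceeds exactly as in the step-function case.
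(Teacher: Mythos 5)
Your proposal is correct and follows exactly the same route as the paper's proof: reuse the step-function reduction from \cref{thm: pls-bounded-cong-games} verbatim, substitute the affine latency $\kappa_{r_e}(\ell)=\ell\cdot w_e$, and verify that in the player-$i$ cost difference the load-$1$ contributions $\sum_{e\in E_1(i)}w_e$ and $\sum_{e\in E_2(i)}w_e$ cancel, leaving the same net change $\sum_{e\in E_2(i)}w_e-\sum_{e\in E_1(i)}w_e$ as in the step case. The ``obstacle'' you flag at the end is precisely this cancellation, which you have already carried out in your second paragraph, so the argument is complete.
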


\section{\texorpdfstring{\pls/}{PLS}-Hardness of Compact Network Congestion Games}
\label{sec:pl-hardness-network-compact}

Here we prove that computing a PNE in network congestion games is \pls/-hard even for games with the simplest versions of general, polynomial, and step cost functions. Our reduction is similar to that of \cite{Ackermann2008}, however, our reduction route is simpler: we start from \lmaxcutd (see introduction of \cref{sec:congestion-hardness}) and we reduce to network congestion games via general congestion games. We present it for completeness while remarking how the parameters of interest, namely, the maximum number of best-responses per player, $A$, and the maximum path-length per player, $B$ (see \cref{def:compact-network-game}), are affected by the corresponding parameters of the initial problem. 

We will take the reduction of \cref{sec:congestion-hardness} a step further and end up with a network congestion game with at most 2 \emph{cost levels} per edge-cost; the number of cost levels for a given edge cost function is defined as the number $d+1$, where $d$ is the number of break-points when the cost function is in its step function representation (see \cref{sec:congestion-games-smoothed-models}). The reduction in the proof of \cref{thm: pls-bounded-cong-games} had as a starting point the \pls/-complete problem of computing a local max cut in a \lmaxcutd instance, so we will be essentially reducing this problem to computing PNEs in our family of network congestion games. Since the latter is also in \pls/ (\cite{FPT04}), our reduction implies its \pls/-completeness.
Furthermore, the reduction is tight, since $\mathcal{R}$ (the set of reasonable feasible solutions in tight reductions according to Definition 3.2 in \cite{Schaffer91}) is our set $\prod_{i\in \mathcal{N}} \Sigma_i^*$ of strategy profiles that consist of best-responses.

Let us continue the reduction from the proof of \cref{thm: pls-bounded-cong-games}. In this section we will be calling $I$ the congestion game instance constructed in the aforementioned proof. Let the set of players be $[n]$, and the two strategies of player $i \in [n]$ be $S_1 (i)$ and $S_{2}(i)$, where $|S_1 (i)| = |S_2 (i)| \leq d$. We now construct the network congestion game instance $I'$. To avoid confusion with the terms ``vertices'' and ``edges'' of the local max cut problems, we use instead ``nodes'' and ``arcs'', respectively, for network congestion games. We therefore also attain the vertex set and edge set notation $V$ and $E$ of the former problem, and denote by $N$ and $A$ the corresponding sets of the latter problem. The set of players in $I'$ is identical to that of $I$. Let us make a pair of nodes for each player $i$, an origin node $o_{i}$ and a destination node $d_{i}$. For any pair $i,j \in [n]$ with $i < j$, if edge $ij \in E$, we create a \emph{supernode} $U_{ij}$ as shown in \cref{fig:supnode}.

\begin{figure}
\centering
\begin{minipage}[t]{\dimexpr.35\textwidth}
  \centering
  \includegraphics[width=1\textwidth]{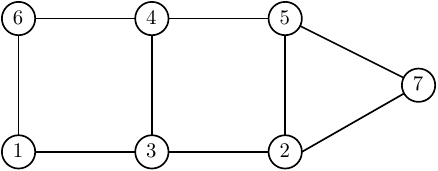}
  \captionof{figure}{The graph of a \lmaxcutIII instance.}
  \label{fig:local-max-cut}
\end{minipage}%
\qquad
\begin{minipage}[t]{\dimexpr.6\textwidth}
  \centering
  \includegraphics[width=0.4\textwidth]{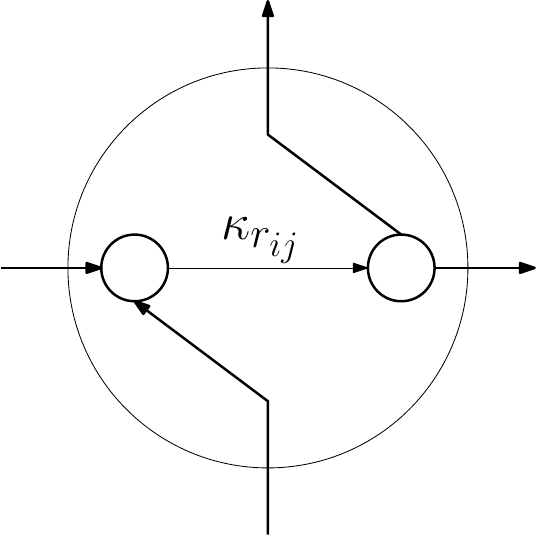}
  \captionof{figure}{Supernode $U_{ij}$. It admits at most two incoming arcs and at most two outgoing arcs. The arc between the two nodes inside it has cost function $\kappa_{r_{ij}}$ (see \eqref{eq: pls-congestion-cost}). Supernode $\bar{U}_{ij}$ is defined analogously.
  }
  \label{fig:supnode}
\end{minipage}
\end{figure}

\begin{figure}
	\centering
	\includegraphics[scale=0.8]{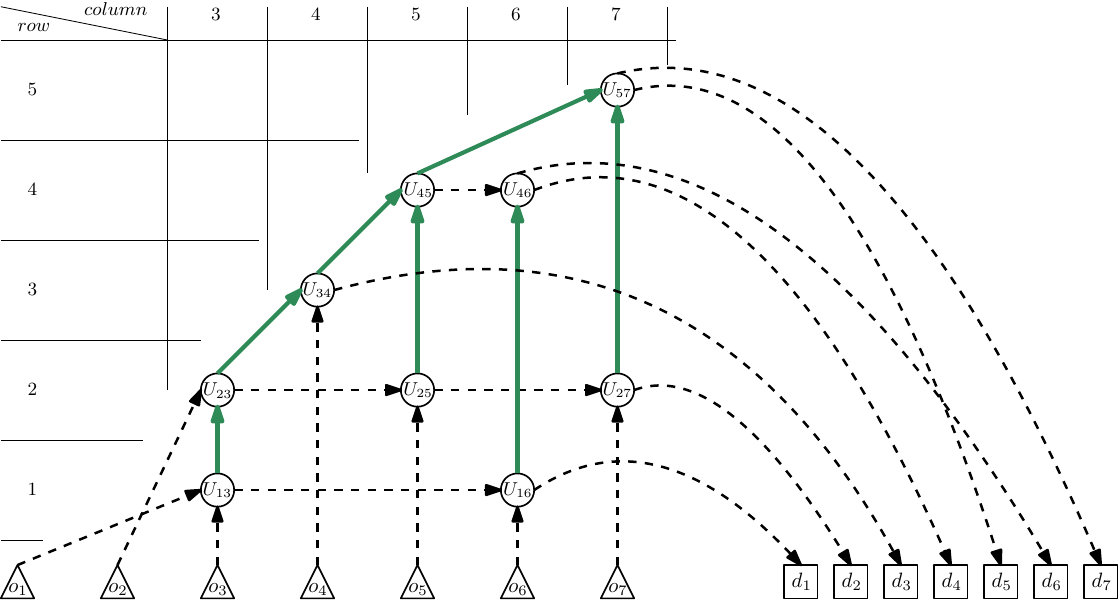}
	\caption{Half of the network congestion game instance $I'$ corresponding to the \lmaxcutIII instance of Figure \ref{fig:local-max-cut}; the other half is symmetric and consists of supernodes $\bar{U}_{ij}$.
    Supernodes $U_{ij}$ have a circular shape. The origin nodes have a triangular shape, while the destination nodes have a square shape. The black/dashed arcs are light, while the green/solid ones are heavy.
    }\label{fig:net-cong}
\end{figure}

We will now define the arcs of $I'$, and let us remark that all the arcs are directed. We have three kinds of arcs: the \emph{regular} ones inside each supernode $U_{ij}$ which have cost function $\kappa_{r_{ij}}$ as defined in \eqref{eq: pls-congestion-cost}\footnote{In \cref{cl: two players per str} we will show that at most two players can simultaneously use a regular arc. Similarly to $U_{ij}$, in each supernode $\bar{U}_{ij}$ there exists a regular arc with cost function $\kappa_{\bar{r}_{i,j}}$.}; the \emph{light} ones that have cost $0$ for any number of players using them; and the \emph{heavy} ones between supernodes $U_{ij}, U_{i'j'}$ which have cost $(i' - i + j - 1) \cdot W$, for any number of players using them, where $W > \sum_{ij \in E} w_{ij}$. Consider now all existing supernodes $U_{ij}$ for a fixed $i$, and let $R_{i}$ be the increasingly ordered tuple of the elements in $\{ j ~|~ ij \in E \land i < j \}$. Note that $i \in [n-1]$. We create a line graph, called \emph{row $i$}, by connecting each supernode $U_{ij}$ with a light arc to $U_{ij'}$, where $j,j'$ are two consecutive elements of tuple $R_{i}$. Now consider all existing supernodes $U_{ij}$ for a fixed $j$, and let $C_{j}$ be the increasingly ordered tuple of the elements in $\{ i ~|~ ij \in E \land i < j \}$. Note also that $j \in [n]\setminus\{1\}$. We connect each supernode $U_{ij}$ with a heavy arc to $U_{i'j}$, where $i,i'$ are two consecutive elements of tuple $C_{j}$. 

For each $i \in [n]$, let $U_{k \ell}$ be a supernode such that either $k = i$ or $\ell = i$, and among such existent supernodes consider $U_{k^{-} \ell^{-}}$, where $k^{-} + \ell^{-} \leq k + \ell$. Similarly, consider $U_{k^{+} \ell^{+}}$, where $k^{+} + \ell^{+} \geq k + \ell$. We connect via light arcs $o_{i}$ to $U_{k^{-} \ell^{-}}$, and $U_{k^{+} \ell^{+}}$ to $d_{i}$. Also, if both $C_{i}$ and $R_{i}$ are non-empty, let $k^{*}, \ell^{*}$ be the largest element in $C_{i}$ and the smallest element in $R_{i}$, respectively. We connect $U_{k^{*} i}$ to $U_{i \ell^{*}}$ via a heavy arc.

We have described half of the construction, where we translated the resources $r_{e}, e \in E$ of the intermediate congestion game instance $I$ into paths of $I'$. We have to implement an entirely symmetric construction where we also translate resources $\bar{r}_{e}$ into paths by adding an identically connected set of supernodes $\bar{U}_{ij}$ for $ij \in E$, which connect to the existing set of $o_i$'s and $d_i$'s. As an example of the reduction, we present in \cref{fig:local-max-cut} a \lmaxcutIII instance, and in \cref{fig:net-cong} the respective network congestion game instance $I'$. For presentation simplicity, we only show half of the network since the one consisting of supernodes $\bar{U}_{ij}$ is symmetric.

The strategy set of each player $i \in [n]$ consists of all possible $o_i - d_i$ paths. For fixed $i$, let $P(i)$ be the path consisting of all existing $U_{k \ell}$'s for which either $k = i$ or $\ell = i$, and similarly, let $\bar{P}(i)$ be defined analogously using $\bar{U}_{k\ell}$'s. We will call $Q_{1}(i)$ and $Q_{2}(i)$ the strategies $o_i - P(i) - d_i$ and $o_i - \bar{P}(i) - d_i$, respectively.

We now show how to translate a solution of $I'$ directly to a solution of the initial \lmaxcutd instance in polynomial time. For any player $i$ that plays $Q_{1}(i)$ (resp.\ $Q_{1}(i)$) in $I'$, we place vertex $i$ in $V_1$ (resp.\ $V_2$), and observe that due to \cref{cl: dom-strat} only $Q_{1}(i)$ or $Q_{2}(i)$ can be played in a PNE. The following auxiliary definition and claims will be needed in the correctness proof of our reduction. 

\begin{definition}[Strictly dominant set]
    For a given player $i$ with strategy set $\Sigma_i$, the set $D_i \subseteq \Sigma_i$ is strictly dominant if every strategy in $D_i$ strictly dominates all strategies outside $D_i$; formally, for every $\sigma^{*}_i \in D_i$, $\sigma'_i \in \Sigma_i \setminus D_i$, we have $C_i (\sigma^*_{i}, \vecc\sigma_{-i}) < (\sigma'_{i}, \vecc\sigma_{-i}) $, for all strategies $\vecc\sigma_{-i}$ of the other players.
\end{definition}
Observe that, for any fixed player $i$, the strictly dominant set $D_i$ is unique, but it can be empty.

\begin{claim}\label{cl: dom-strat}
	For any player $i \in [n]$, $Q(i) := \{Q_{1}(i), Q_{2}(i)\}$ is strictly dominant.
\end{claim}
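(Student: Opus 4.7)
The plan is to exploit the choice $W > \sum_{ij \in E} w_{ij}$: every heavy arc costs at least $W$ (its cost is $(i'-i+j-1)W$ with $i'-i\geq 1$ and $j\geq 1$), which strictly exceeds the aggregate contribution that all regular (in-supernode) arcs plus light arcs can ever make to any player's cost. Consequently, strict dominance of $Q(i)$ reduces to a heavy-arc counting argument. Since the $U$-supernodes and $\bar U$-supernodes are linked only through $o_i$ and $d_i$, every simple $o_i$-$d_i$ path lies entirely in one of the two symmetric halves of the graph: $Q_1(i)$ in the $U$-half, $Q_2(i)$ in the $\bar U$-half.

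First I would establish that $P(i)$ is a minimum-heavy-arc $o_i$-$d_i$ route in the $U$-half (and, symmetrically, $\bar P(i)$ in the $\bar U$-half), and that every other $o_i$-$d_i$ path in that half uses strictly more heavy arcs. By construction, $P(i)$ enters at the supernode with minimum index-sum among those touching $i$, traverses column $i$ using $\card{C_i}-1$ heavy column arcs, pivots once via the special cross arc $U_{k^*i}\to U_{i\ell^*}$ (when both $C_i$ and $R_i$ are nonempty), and then walks along row $i$ using only light arcs before exiting at $d_i$. Denote this minimum number of heavy arcs by $h^*$. A case analysis of the grid structure then shows that any other simple $o_i$-$d_i$ path on the $U$-side either visits a supernode $U_{k\ell}$ with $k\neq i\neq \ell$ (forcing an extra column traversal) or enters and exits column $i$ more than once (again forcing extra column arcs); in every case it uses at least $h^*+1$ heavy arcs. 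By symmetry, the same statement holds in the $\bar U$-half for $\bar P(i)$, with the same value of $h^*$.

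Fix now any $\sigma^*_i \in Q(i)$, any $\sigma'_i \in \Sigma_i \setminus Q(i)$, and any $\sigma_{-i} \in \vecc{\Sigma}_{-i}$. Whether $\sigma'_i$ lies on the same half as $\sigma^*_i$ or on the opposite half, the previous step guarantees that $\sigma^*_i$ uses exactly $h^*$ heavy arcs while $\sigma'_i$ uses at least $h^*+1$. Therefore
\[ C_i(\sigma'_i, \sigma_{-i}) - C_i(\sigma^*_i, \sigma_{-i}) \;\geq\; W - \sum_{ij \in E} w_{ij} \;>\; 0, \]
since the heavy-arc cost gap is at least $W$ \emph{uniformly in $\sigma_{-i}$} (heavy-arc costs are congestion-independent), while the total contribution of regular and light arcs to $C_i$ is bounded above by $\sum_{ij\in E} w_{ij}$ (each regular arc $r_e$ contributes at most $\kappa_{r_e}(2)=w_e$, and light arcs contribute $0$). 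This is exactly the required strict dominance.

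The main technical obstacle is the combinatorial step: verifying that every simple $o_i$-$d_i$ path on the $U$-side other than $P(i)$ uses at least one additional heavy arc. This rests on carefully delineating the options at $o_i$ and $d_i$ (which by construction connect only to the minimum- and maximum-index-sum supernodes touching $i$) and on the unique pivot available from column $i$ to row $i$ via the special arc $U_{k^*i}\to U_{i\ell^*}$; any deviation from the canonical route of $P(i)$ forces either re-entering column $i$ from a different row or visiting a supernode disjoint from both row $i$ and column $i$, each of which contributes at least one extra heavy column arc.
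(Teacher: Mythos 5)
Your reduction to a heavy-arc \emph{counting} argument is the crux of the plan, and it does not hold up. Two independent problems arise. First, it is not true that every simple $o_i$-$d_i$ path other than $P(i)$ uses strictly more heavy arcs: the heavy column arcs join \emph{consecutive} elements of $C_j$, so a path that leaves column $i$ early via a light row arc and descends in a higher-index column $\ell$ may jump from row $k$ to row $i$ in a \emph{single} column arc (if $k$ and $i$ are consecutive in $C_\ell$), whereas $P(i)$ traverses $\card{C_i}-1$ column arcs plus the cross arc — potentially many more. Second, even when the count comparison were to go your way, it would not suffice: heavy arcs have variable cost $(i'-i+j-1)\cdot W$, so a path with more heavy arcs can still be cheaper if its arcs are in cheaper positions, and conversely $Q_1(i)$'s arcs (which live in column $i$, contributing a $(i-1)W$ term each) can each cost far more than $W$. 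Your final inequality $C_i(\sigma_i',\sigma_{-i}) - C_i(\sigma^*_i,\sigma_{-i}) \geq W - \sum_{ij\in E} w_{ij}$ silently assumes the deviating path's heavy cost exceeds $Q_1(i)$'s heavy cost by at least $W$, which is exactly what the count alone cannot deliver.

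The paper's proof instead compares total heavy-arc \emph{costs}, not counts. It bounds the heavy cost of $Q_1(i)$ (and $Q_2(i)$) above, and shows that any deviating path that reaches row $i$ must include a heavy arc between supernodes in a column $\ell \geq i+1$; the column-index contribution $(\ell-1)W$ in the cost formula then forces the deviating path's total heavy cost to strictly exceed that of $Q_1(i)$, by at least $W$. To repair your argument you would need to replace the arc count $h^*$ with a bound on the total heavy \emph{cost} of $P(i)$, and then argue — exploiting the explicit cost formula $(i'-i+j-1)W$ and the fact that row arcs only move to larger column indices — that any deviation incurs a strictly larger heavy cost, with the gap at least $W$.
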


\begin{proof}
	For some player $i$, if $C_{i}$ is non-empty let $k$ be its smallest item, otherwise $k = 0$. The cost experienced by $i$ when playing $Q_{1}(i)$ is at most $\sum_{ij \in E} w_{ij} + (i - k + i - 1) \cdot W < (2 i - k) \cdot W$, where in the left-hand side of the inequality the leftmost term is due to the regular arcs, and the rightmost term is due to the heavy arcs. The exact same upper bound holds if she played $Q_{2}(i)$, by definition of the costs on the corresponding arcs.
	
	If $i$ played any strategy other than $Q_{1}(i)$ or $Q_{2}(i)$, then she would be using a path that reaches row $i$, otherwise she would not have access to $d_i$. Therefore, her path passes through heavy arcs starting from row $k \leq i - 1$ and reaching row $i$. Note that if $k = i$ then the path is $Q_{1}(i)$ or $Q_{2}(i)$, and if $k > i$ then $i$ does not have access to $d_i$. Also, at least one of the heavy arcs must be between two supernodes that are on columns $\ell, \ell'$, respectively, where $\ell' \geq \ell \geq i + 1$, otherwise the path is $Q_{1}(i)$ or $Q_{2}(i)$. Therefore, her cost would be at least $0 + (i - k + \ell - 1) \cdot W \geq (2 i - k) \cdot W$, where again, in the left-hand side of the inequality the leftmost term is due to the regular arcs, and the rightmost term is due to the heavy arcs.
	
	We conclude that for any player $i$, each of the strategies in $Q(i)$ strictly dominates all strategies outside $Q(i)$.
\end{proof}

\begin{claim}\label{cl: two players per str}
	In any PNE of the network congestion game, at most two players can be using the same regular arc.
\end{claim}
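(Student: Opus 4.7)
The plan is to leverage the preceding Claim (\textbf{cl: dom-strat}) which already forces every player in any PNE to play one of only two strategies: $Q_1(i) = o_i - P(i) - d_i$ or $Q_2(i) = o_i - \bar P(i) - d_i$. Since regular arcs live strictly inside the supernodes $U_{k\ell}$ and $\bar U_{k\ell}$, it suffices to count, for each such supernode, how many players can have a $Q_1$ or $Q_2$ route that traverses it.

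First I would recall the construction of $P(i)$: by definition, $P(i)$ consists of all existing supernodes $U_{k\ell}$ with $k=i$ or $\ell=i$. Dually, $\bar P(i)$ uses the analogous supernodes $\bar U_{k\ell}$. In particular, a supernode $U_{ij}$ (with $i<j$) appears in $P(p)$ only when $p \in \{i,j\}$, and symmetrically $\bar U_{ij} \in \bar P(p)$ only for $p \in \{i,j\}$. Since the regular arc inside $U_{ij}$ is reached only by routes that pass through $U_{ij}$, and analogously for $\bar U_{ij}$, this immediately identifies at most two candidate users per regular arc.

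Then I would assemble the argument: fix an arbitrary PNE $\vecc\sigma$ and any regular arc $a$. Let $U_{ij}$ (or $\bar U_{ij}$) be the supernode containing $a$. By Claim \textbf{cl: dom-strat}, every player $p$ plays $\sigma_p \in \{Q_1(p), Q_2(p)\}$. Any $Q_1$-strategy traversing $a$ must include $U_{ij}$ in its path $P(p)$, hence $p \in \{i,j\}$; any $Q_2$-strategy traversing a regular arc in $\bar U_{ij}$ similarly forces $p \in \{i,j\}$. Thus only players $i$ and $j$ can ever use $a$ under $\vecc\sigma$, giving a load of at most $2$, as required.

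I do not anticipate any real obstacle here: the claim is essentially a structural bookkeeping observation, and the heavy lifting (ruling out all strategies outside $\{Q_1(i),Q_2(i)\}$) has been done in Claim \textbf{cl: dom-strat} via the $W$-weighted heavy arcs. The only care needed is to handle the two symmetric halves ($U_{ij}$ versus $\bar U_{ij}$) uniformly and to note that a $Q_1$-player cannot use a regular arc inside a $\bar U$-supernode (and vice versa), which is immediate from the disjointness of $P(i)$ and $\bar P(i)$ as vertex sets in the network.
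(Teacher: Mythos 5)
Your proof is correct and follows essentially the same approach as the paper's: invoke Claim~\ref{cl: dom-strat} to restrict each player to $\{Q_1(i), Q_2(i)\}$, then observe that the supernode $U_{ij}$ (or $\bar U_{ij}$) containing a given regular arc appears only in $P(i)$, $P(j)$ (respectively $\bar P(i)$, $\bar P(j)$), so only players $i$ and $j$ can traverse it. The paper's version is terser, leaving the $U$/$\bar U$ symmetry implicit, but the argument is the same.
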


\begin{proof}
    Suppose we are in a PNE. Since no strictly dominated strategy can be played by any player in a PNE, we know that any player $i$ can only be playing either $Q_{1}(i)$ or $Q_{2}(i)$ by \cref{cl: dom-strat}. For any regular arc consider its supernode $U_{i j}$. By definition of the aforementioned strategies, this supernode can only be used by players $i$ and $j$.
\end{proof}

Now we will show the correctness of the reduction. First, note that the construction described can be performed in polynomial time. It remains to prove that any PNE of the network congestion game corresponds to a solution of the \lmaxcutd instance. Consider a PNE in the constructed game and observe that due to \cref{cl: dom-strat}, no player $i$ can be playing some strategy other than $Q_{1}(i)$ or $Q_{2}(i)$. For the sake of contradiction, suppose that there is a node $i \in V_1$ which yields greater cut value if moved to $V_2$. Then $\sum_{e \in E_1(i)} w_e - \sum_{e \in E_2(i)} w_e > 0$. 

Recall that there exists a supernode $U_{m \ell}$ only if $m\ell \in E$. If $C_{i}$ is non-empty let $k$ be its smallest item, otherwise $k = 2i$. Let us compute the difference in experienced cost of player $i$ when she deviates from $Q_{1}(i)$ to $Q_{2}(i)$ (the only other strategy that can yield lower cost). This is   
\begin{align*}
	\sum_{e \in E_{1}(i)} \kappa_{\bar{r}_e}(1) + \sum_{e \in E_{2}(i)} \kappa_{\bar{r}_e}(2) + (2i - k) \cdot W - \sum_{e \in E_{1}(i)} \kappa_{r_e}(2) - &\sum_{e \in E_{2}(i)} \kappa_{r_e}(1) - (2i - k) \cdot W = \\ 
	&= \sum_{e \in E_{2}(i)} \kappa_{\bar{r}_e}(2) - \sum_{e \in E_{1}(i)} \kappa_{r_e}(2) \\
	&= \sum_{e \in E_{2}(i)} w_e - \sum_{e \in E_{1}(i)} w_e \\
	&< 0.
\end{align*}
This means that the initial strategy profile was not a PNE, a contradiction. Assuming there is a vertex in $V_2$ in the \lmaxcutd instance that could improve the cut value by moving to $V_1$, via a similar reasoning we can arrive to a contradiction. Notice that by construction of $Q_{1}(i)$ and $Q_{2}(i)$, each player $i$ passes through at most $d$ supernodes in a PNE. Note also that since every best-response of a player is necessarily a strategy in the strictly dominant set $D$, we have that for parameter $B$ of \cref{def:compact-network-game} it holds $B \leq \card{D} = 2$. In this reduction, the constructed network congestion game has step cost functions, implying immediately that it holds for general cost functions (see \eqref{eq: pls-congestion-cost}). Finally, since \lmaxcutd is \pls/-complete for any $d \geq 5$ and network congestion games are in \pls/ (\cite{FPT04}), we get the following theorem.

\begin{theorem}\label{thm: pls-bounded-net-cong-games}
	Network congestion games, with step function or general cost representation, are \pls/-complete even when every player has 2 strategies that are in the strictly dominant set, each having maximum length $11$, and each of the network's edges has at most $2$ cost levels. In particular, such games are $(2,11)$-compact. 
\end{theorem}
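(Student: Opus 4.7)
The proof proceeds by a tight \pls/-reduction from \lmaxcutV, which is \pls/-complete by \cite{ElsasserT11}, and membership of computing a PNE in network congestion games in \pls/ is standard by \cite{FPT04}. So it suffices to establish \pls/-hardness for the restricted family described in the statement. The plan is essentially to take the two-stage construction already laid out in the preceding discussion (first embed \lmaxcutV into a general congestion game with two strategies per player and step-function costs as in \cref{sec:congestion-hardness}, then embed that game into a network congestion game via a grid of supernodes) and then read off the structural parameters.

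First, I would fix an \lmaxcutV instance on a graph $G = (V, E)$ with maximum degree $5$, and invoke the intermediate reduction of \cref{thm: pls-bounded-cong-games} to obtain a congestion game whose resources $r_e, \bar r_e$ have the step-function costs in \eqref{eq: pls-congestion-cost}. Then I would realize this as a network congestion game $I'$ exactly as described: for every $ij \in E$ build a supernode $U_{ij}$ (and a symmetric $\bar U_{ij}$) containing a single internal regular arc that carries the cost $\kappa_{r_{ij}}$, connect supernodes within a row by light ($0$-cost) arcs, between rows by heavy arcs of cost $(i'-i+j-1)\cdot W$ with $W > \sum_e w_e$, and connect each $o_i$ and $d_i$ to the appropriate endpoints of row~$i$ by light arcs. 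This construction is polynomial-time and preserves the one-to-one correspondence between strategy profiles from $\prod_i \{Q_1(i), Q_2(i)\}$ in $I'$ and cuts in the original graph $G$, which is what makes the reduction tight in the sense of \cite{Schaffer91}.

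Next, I would verify the three structural claims of the theorem in turn. (i) \cref{cl: dom-strat} shows that $\{Q_1(i), Q_2(i)\}$ strictly dominates every other $o_i$–$d_i$ path for player $i$, so this set coincides with $\Sigma_i^*$ from \cref{def:compact-network-game} and has size $A = 2$; the large multiplier $W$ on heavy arcs is what makes any detour out of row~$i$ strictly more expensive than the dominant options. (ii) For the length bound, recall that $Q_k(i)$ visits exactly $\deg_G(i) \leq 5$ supernodes, each contributing one internal arc, with one connecting arc between consecutive supernodes and one light arc at each of $o_i, d_i$; this totals $2\deg_G(i)+1 \leq 11$ arcs, giving $B = 11$. (iii) Every regular arc has step-function cost with exactly one break-point at load $2$, and every light or heavy arc is constant, so each edge of the network has at most $2$ cost levels; the same structure immediately extends to the general cost representation, since $2$-level step functions are a special case. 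Combined with \cref{cl: two players per str} and the cost-difference calculation already performed in the main text, this shows that any PNE of $I'$ decodes in polynomial time into a locally maximal cut of $G$.

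The main obstacle in making all of this work simultaneously is pinning down the length bound $B = 11$: it is forced precisely by the degree-$5$ \pls/-completeness base \lmaxcutV, since each increment in $\deg_G$ adds two arcs to $Q_k(i)$ (one internal, one connecting). A secondary point that needs care is ruling out, via the strict-dominance argument of \cref{cl: dom-strat}, any path that leaves row $i$ and re-enters, for otherwise best-response dynamics could equilibrate on a configuration that does not correspond to a cut; the choice $W > \sum_e w_e$ makes even a single ``wrong'' heavy arc outweigh any possible gain on the regular arcs. Finally, as the entire reduction chain from \lmaxcutV through the general congestion game to $I'$ is tight, the exponential worst-case lower bound on improvement paths discussed after \cref{th:pls-hardness-compact-network} applies unconditionally to standard shortest-path dynamics on this family.
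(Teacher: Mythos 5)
Your proposal correctly reproduces the paper's reduction chain (from \lmaxcutV through the intermediate congestion game of \cref{thm: pls-bounded-cong-games} to the supernode network $I'$), invokes the same two key claims (\cref{cl: dom-strat} and \cref{cl: two players per str}), and your explicit arithmetic $2\deg_G(i)+1 \leq 2\cdot 5 + 1 = 11$ fills in the length bound that the paper leaves implicit after noting that each player traverses at most $d=5$ supernodes. This is essentially the same argument as the paper's.
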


We will now show how the same result also holds for affine cost functions with nonnegative coefficients, that is, the simplest class of polynomial functions, excluding constant functions.
Similarly to the proof of \cref{thm: pls-bounded-linear-cong-games}, let us set the cost functions of regular arcs to be \eqref{eq: pls-linear-congestion-cost}. 
Consider a PNE of the Network Congestion Game, and recall that no player $i$ can be playing some strategy other than $Q_{1}(i)$ or $Q_{2}(i)$ due to \cref{cl: dom-strat}. Suppose now that there is a node $i \in V_1$ which yields greater cut value if moved to $V_2$. Therefore, $\sum_{e \in E_1(i)} w_e - \sum_{e \in E_2(i)} w_e > 0$. 

Again, if $C_{i}$ is non-empty let $k$ be its smallest item, otherwise $k = 2i$. The difference in experienced cost of player $i$ when she deviates from $Q_{1}(i)$ to $Q_{2}(i)$ is   
\begin{align*}
	\sum_{e \in E_{1}(i)} \kappa_{\bar{r}_e}(1) + \sum_{e \in E_{2}(i)} \kappa_{\bar{r}_e}(2) + (2i - k) \cdot W &- \sum_{e \in E_{1}(i)} \kappa_{r_e}(2) - \sum_{e \in E_{2}(i)} \kappa_{r_e}(1) - (2i - k) \cdot W = \\ 
    &= \sum_{e \in E_{1}(i)} w_e + \sum_{e \in E_{2}(i)} 2w_e - \sum_{e \in E_{1}(i)} 2w_e - \sum_{e \in E_{2}(i)} w_e \\
    &= \sum_{e \in E_{2}(i)} w_e - \sum_{e \in E_{1}(i)} w_e \\
    &< 0.
\end{align*}
Therefore, this contradicts the assumption that the initial strategy profile was a PNE. Similarly, by assuming there is a vertex in $V_2$ in the \lmaxcut-$d$ instance that could improve the cut value by moving to $V_1$, we get a contradiction. Hence, we get the following theorem.
\begin{theorem}\label{thm: pls-bounded-linear-net-cong-games}
    Network Congestion Games with affine cost functions are \pls/-complete, even when every player has 2 strategies that are in the strictly dominant set, each having maximum length $11$. In particular, such games are $(2,11)$-compact.
\end{theorem}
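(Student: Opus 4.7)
The plan is to adapt the construction used in the proof of \cref{thm: pls-bounded-net-cong-games} by replacing the step-function latencies on the regular arcs with the affine latencies $\kappa_{r_e}(\ell)=\kappa_{\bar r_e}(\ell)=\ell\cdot w_e$ from \eqref{eq: pls-linear-congestion-cost}, exactly as was done when passing from \cref{thm: pls-bounded-cong-games} to \cref{thm: pls-bounded-linear-cong-games}. Every other ingredient of the construction, namely the origin/destination pairs $o_i, d_i$, the supernodes $U_{ij}$ and $\bar U_{ij}$, the zero-cost light arcs, and the heavy arcs with constant cost $(i'-i+j-1)W$ for $W>\sum_{e\in E}w_e$, remains unchanged, so the graph of the reduction is identical.

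First I would verify that \cref{cl: dom-strat} still holds: any $Q_1(i)$- or $Q_2(i)$-strategy incurs an affine regular-arc contribution bounded by $2\sum_e w_e$, while any alternative $o_i\to d_i$ path forces the player to traverse at least one additional heavy arc that costs another $W>\sum_e w_e$. Thus $\{Q_1(i),Q_2(i)\}$ is still the strictly dominant set, and \cref{cl: two players per str} follows verbatim, giving load at most $2$ on every regular arc in any PNE. In particular, only the first two load values of the affine functions are ever queried, so the smoothed-analysis representations remain well-defined.

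Next I would recover a local max cut from any PNE by placing $i\in V_1$ when player $i$ plays $Q_1(i)$ and $i\in V_2$ otherwise, and argue by contradiction that if moving $i\in V_1$ to $V_2$ strictly increases the cut, then deviating from $Q_1(i)$ to $Q_2(i)$ strictly decreases $i$'s cost. The heavy-arc contribution $(2i-k)W$ is identical along the $Q_1(i)$ and $Q_2(i)$ halves of the network by the symmetric construction of the $\bar U_{ij}$ part, so it cancels in the cost-difference computation. What remains is exactly the affine-cost calculation already carried out in the previous reduction (\cref{thm: pls-bounded-linear-cong-games}): edges in $E_1(i)$ go from load $2$ to $1$ and edges in $E_2(i)$ go from $1$ to $2$, producing a net change of $\sum_{e\in E_2(i)}w_e-\sum_{e\in E_1(i)}w_e<0$, contradicting the PNE assumption. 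The symmetric case for $i\in V_2$ is handled identically.

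The main (and essentially only) thing to double-check is that the heavy-arc penalty $(2i-k)W$ really cancels for affine regular costs as well as for step costs; this is immediate because the penalty depends only on the row/column structure of the path, not on the cost model chosen for the regular arcs. Finally, the structural parameters are inherited from the previous construction: each player has exactly two best-response strategies (so $A=2$) and each of those has at most $11$ arcs (so $B=11$), yielding $(2,11)$-compactness. Since the chain of reductions from \lmaxcutV (which is \pls/-complete by \cite{ElsasserT11}) to the affine-cost variant is tight, and network congestion games lie in \pls/ \parencite{FPT04}, \pls/-completeness follows.
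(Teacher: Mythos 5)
Your proposal is correct and takes essentially the same route as the paper: reuse the step-function construction verbatim, swap the regular-arc latencies for the affine ones of \eqref{eq: pls-linear-congestion-cost}, invoke \cref{cl: dom-strat} to confine attention to $Q_1(i)$ and $Q_2(i)$, and rerun the load-$1\leftrightarrow 2$ cost-difference calculation with the heavy-arc term $(2i-k)\cdot W$ cancelling, exactly as the paper does. One small detail is worth tightening, and it appears in both your sketch and the paper's own presentation: with affine latencies the regular-arc contribution along $Q_1(i)$ or $Q_2(i)$ is bounded by $2\sum_{e\in E} w_e$ rather than $\sum_{e\in E} w_e$, so for the strict-domination inequality in \cref{cl: dom-strat} one must take $W > 2\sum_{e\in E} w_e$; the choice $W > \sum_{e\in E} w_e$ inherited from the step-function case does not suffice on its own, though rescaling $W$ is of course harmless and does not affect any other part of the argument.
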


\section{\texorpdfstring{\pls/}{PLS}-Hardness of \texorpdfstring{\lmaxkcut}{Local Max-k-Cut}}
\label{sec:maxkcut_pls}
Here we present a \pls/-hardness reduction from \lmaxcutd to \lmaxkcut for any $k \geq 2$ under the \flip neighbourhood. \lmaxcutd is the restricted version of \lmaxcut where the input graph has maximum degree $d \in \N^*$.

\begin{theorem}
    \lmaxkcut is \pls/-hard for any $k \geq 2$, even when the maximum degree of the input graph is $k + 3$.
\end{theorem}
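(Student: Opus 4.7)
For $k = 2$ the claim reduces to the \pls/-hardness of \lmaxcutV from \cite{ElsasserT11} already used throughout the paper. For $k \geq 3$ I plan a reduction from \lmaxcutV. Given an instance $(G = (V, E), w)$ with $\deg_G \leq 5$, set $W := 1 + \sum_{e \in E} |w_e|$ and construct $G'$ by, for each $v \in V$, introducing $k-2$ fresh ghost vertices $g_v^1, \ldots, g_v^{k-2}$ and inserting the $\binom{k-1}{2}$ edges of a $(k-1)$-clique on $\{v, g_v^1, \ldots, g_v^{k-2}\}$, each of weight $W$, while keeping all original edges of $G$ with their original weights. A direct count then gives $\deg_{G'}(v) \leq \deg_G(v) + (k-2) \leq k+3$ at every original vertex and $\deg_{G'}(g_v^j) = k - 2 \leq k + 3$ at every ghost, establishing the degree bound.

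Next I would set up the correspondence by associating to every cut $\mu : V \to \{L, R\}$ of $G$ the canonical $k$-coloring $\chi_\mu : V(G') \to [k]$ defined by $\chi_\mu(v) = 1$ if $\mu(v) = L$ and $\chi_\mu(v) = 2$ if $\mu(v) = R$, with $\chi_\mu(g_v^j) = j + 2$ for every ghost. Under $\chi_\mu$ every gadget clique is a rainbow over $k-1$ distinct colors, and the cut value of $\chi_\mu$ in $G'$ equals the cut value of $\mu$ in $G$ plus the constant $|V| \binom{k-1}{2} W$. The core step is to show that with $W$ chosen as above, any flip in $G'$ that breaks the rainbow of a gadget --- either a $v \in V$ jumping to a color in $\{3, \ldots, k\}$, or a ghost jumping into a color already used in its clique --- forfeits at least one clique edge of weight $W$, which no change in original edges can compensate. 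Hence such flips are never strictly improving, and the only improving moves from $\chi_\mu$ are single-vertex flips of $v \in V$ between colors $1$ and $2$, matching side-flips of $\mu$ in $G$ with the same cut-value change. This lifts any improving sequence in $G$ one-to-one into an improving sequence in $G'$ starting from $\chi_\mu$ and terminating at the canonical extension of a local max-cut of $G$, yielding a tight \pls/-reduction in the sense of \cite{Schaffer91} and inheriting both \pls/-hardness and the exponentially long worst-case local-search sequences of \cite{ElsasserT11, MT10}.

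The main obstacle I anticipate is controlling the color-symmetry of \lmaxkcut: non-canonical rainbow configurations obtained by independently permuting ghost colors across different gadgets are also locally optimal in $G'$ and, a priori, need not correspond to any canonical extension via a simple color renaming. I plan to handle this either by adding tiny, lexicographically ordered perturbations to the clique-edge weights (of the form $W + \varepsilon_{v,j}$ for suitably chosen $\varepsilon_{v,j}$) so that the canonical extension of every $\mu$ becomes the unique rainbow on each gadget and no symmetric alternative local optimum survives, or alternatively by defining the ``reasonable solutions'' set $\mathcal{R}(G)$ of \cite{Schaffer91} to consist of all cost-neutral ghost-permutation equivalents of canonical extensions and verifying that $\mathcal{R}(G)$ is closed under strictly improving moves and contains every local optimum reachable by standard local search from a canonical starting configuration.
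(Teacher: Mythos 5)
Your construction is close in spirit to the paper's, but the gadget differs in a way worth noting: you place a $(k-1)$-clique on $\{v\}\cup\{g_v^1,\dots,g_v^{k-2}\}$ with $v$ itself as a clique vertex, whereas the paper attaches a $k$-clique on $k$ fresh special vertices $v^1,\dots,v^k$ (forming a rainbow over \emph{all} $k$ colors) and then adds $k-2$ anchor edges $v\,v^j$ for $j\in\{3,\dots,k\}$. Both gadgets force $v$ into one of exactly two colors \emph{relative to the gadget's own orientation}, and both yield the stated degree bound. So the core idea is essentially the same.

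The color-symmetry obstruction you flag at the end, however, is not a side-issue that can be waved away — it is a genuine gap, and one that appears to also affect the paper's own proof. The paper writes ``without loss of generality (by renaming the vertex-parts), for any fixed $j\in[k]$, let $V_j$ contain $v_i^j$ for all $i\in[n]$.'' A single global permutation of the blocks can align \emph{one} gadget, but different gadgets in a local optimum can have arbitrarily permuted rainbows, and no single renaming aligns them all. Concretely, take $G$ the triangle $v_1v_2v_3$ with unit weights and $k=3$. Set gadget~$1$ to $(v_1^1,v_1^2,v_1^3)\in(V_1,V_2,V_3)$, gadget~$2$ to $(V_2,V_3,V_1)$, gadget~$3$ to $(V_3,V_1,V_2)$, and place $v_1\in V_1$, $v_2\in V_2$, $v_3\in V_3$. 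Every gadget is a rainbow, every anchor edge and every original edge is cut, and one can check that no single flip improves the cut. So this is a local optimum of the reduced instance, yet the original vertices occupy \emph{three} blocks, $V_2\cap V=\{v_2\}\not\ni v_3$, and the map ``$(V_1\cap V,\;V_2\cap V)$'' produces something that is not even a 2-partition of $V$, let alone a local max cut of $G$. The same phenomenon occurs in your $(k-1)$-clique construction with $g_{v_1}^1\in V_2$, $g_{v_2}^1\in V_3$, $g_{v_3}^1\in V_1$.

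Neither of your proposed repairs closes this gap. Perturbing the clique-edge weights cannot work: in every rainbow coloring of a gadget clique \emph{all} of its edges are cut, so the gadget's contribution to the objective is $\sum_{e\in\text{clique}}w_e$ regardless of which colors the rainbow uses, and no lexicographic tie-breaking distinguishes one rainbow from another. Your second repair --- defining $\mathcal{R}$ as the ghost-permutation equivalents of canonical extensions and checking closure from a canonical start --- also falls short: a (tight or even plain) PLS-reduction requires the solution map $\psi$ to send \emph{every} local optimum of $\phi(x)$ to a local optimum of $x$, not just those reachable from a chosen starting configuration, and the misaligned local optima above lie outside any such $\mathcal{R}$ (they are not obtained from a canonical extension by permuting only ghost colors, since the original vertices themselves end up in $\geq 3$ colors). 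You were right to be uneasy about this step; what is missing is some structure that couples the orientations of different gadgets --- e.g.\ alignment edges or a shared anchor --- and making that work within the degree budget $k+3$ is the real technical content this proof still needs.
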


\begin{proof}
    Consider an instance of \lmaxcutd on graph $G' = (V,E)$ and weights $w_e, e \in E$, where $|V| = n \in \N^*$, and without loss of generality, let $V = \{v_1, \dots, v_n\}$. For each vertex $v_i \in V$ we introduce $k$ \emph{special vertices} $v_i^1, \dots, v_i^{k}$. For each fixed $i \in [n]$, we introduce a \emph{special edge} $v_i^j v_i^\ell$ for every $j,\ell \in [k]$ with $j \neq \ell$, where $w_{v_i^j v_i^\ell} := W^* > \sum_{e \in E} w_e$. Additionally, for each fixed $i \in [n]$ we introduce a special edge $v_i v_i^j$ for every $j \in \{3,4,\dots,k\}$, where $w_{v_i v_i^j} := W^* > \sum_{e \in E} w_e$. This instance creation clearly takes polynomial time. Suppose the cut $(V_1, \dots, V_k)$ is a solution to the constructed \lmaxkcut instance, where each $V_j$ is a \emph{partition block}. We translate this solution to a solution of \lmaxcutd by considering as one partition block $V_1 \cap V$ and the other block $V_2 \cap V$. 
    
    It remains to show that the latter is indeed a solution to the initial \lmaxcutd instance.  First, observe that no two special vertices $v_i^j, v_i^\ell$ are in the same partition block. If that were the case, then there would exist a vertex-part $V_j$ that does not contain any special vertex from the set $\{v_{i}^1, \dots, v_i^{k}\}$, and moving $v_i^j$ to $V_j$ increases the weight of the cut by at least $W^* - \sum_{e \in E} w_e > 0$, contradicting the local optimality of the cut. Therefore, without loss of generality (by renaming the vertex-parts), for any fixed $j \in [k]$, let $V_j$ contain $v_i^j$ for all $i \in [n]$. Now we claim that no vertex $v_i \in V$ is contained in any $V_j$ for $j \in \{3,4,\dots,k\}$. If that were the case, then moving $v_i$ to $V_1$ (or $V_2$) would increase the weight of the cut by at least $W^* - \sum_{e \in E} w_e > 0$, contradicting the local optimality of the cut.
    
    To prove that this instance's solution induces a solution to the initial \lmaxcutd instance, we examine the vertices $v_i \in V$. According to the above, all $v_i$'s are in $V_1 \cup V_2$. Let us now consider some $v_i \in (V_1 \cup V_2) \cap V$ of the \lmaxkcut solution, and suppose that in the induced \lmaxcutd cut, by moving $v_i$ from its current cut-set to the other cut-set, resulted in a cut with greater weight. Then this would also be the case in the \lmaxkcut solution, since there are no special edges between $v_i$ and any vertex of $V_1 \cup V_2$. This contradicts its local optimality, and completes the correctness of the reduction. 
    
    Notice now that the \lmaxkcut instance has maximum degree $k+d-2$: the maximum degree of any vertex $v_i$ is $d + (k-2)$, while that of any vertex $v_i^j$ is $(k-1) + 1$. Since \lmaxcutd is \pls/-hard for any $d \geq 5$ (\cite{ElsasserT11}), we get the statement of our theorem.
\end{proof}

\section{\texorpdfstring{\pls/}{PLS}-Hardness of Hitting Set with Bounded Element Occurrence}
\label{sec:hittingset_pls}
Here we present an adaptation of the proof of \pls/-hardness due to \cite{Dumrauf10_arXiv}.
Instead of reducing from \lmaxsatI (or CNFSat as it is called in the aforementioned work), we show a tight \pls/-reduction from \lmaxsatI with bounded variable occurrence, which is also \pls/-complete due to \cite{Krentel:structurelocalopt} and \cite{Klauck96} via tight \pls/-reductions. 

\begin{theorem}
	\hsI is \pls/-hard, even when the number of sets in which any single element occurs is bounded by a constant.
\end{theorem}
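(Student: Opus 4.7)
The plan is to adapt the tight \pls/-reduction from \lmaxsatI to \hsI given by Dumrauf and S\"u\ss{} in~\cite{Dumrauf10_arXiv}, but to feed it the restricted \pls/-complete variant of \lmaxsatI in which every variable occurs in at most a constant number $\tilde{B}$ of clauses. The latter variant is itself established through tight \pls/-reductions in~\cite{Krentel:structurelocalopt,Klauck96}. Since the composition of two tight \pls/-reductions is again a tight \pls/-reduction, composing these two steps yields a tight \pls/-reduction to \hsI whose output instances have bounded element occurrence, which is exactly what we want.

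The first step of the argument is to recall the Dumrauf--S\"u\ss{} construction essentially verbatim. Given an \lmaxsatI instance with variables $x_1,\ldots,x_n$, clauses $C_1,\ldots,C_m$, and weights $w_1,\ldots,w_m$, the reduction produces an \hsI instance whose ground set uses only a constant number of elements per variable (roughly, one per possible truth value of each $x_j$), together with a clause-set for each $C_i$ formed from the elements corresponding to the literals appearing in $C_i$, and a small number of auxiliary \emph{gadget sets} per variable carrying large weights whose role is to force every local optimum to correspond to a well-defined truth assignment. The cardinality bound $m$ and the weights are chosen so that the canonical mapping between hitting sets and truth assignments turns the \Iflip neighbourhood of \hsI into (a refinement of) the \Iflip neighbourhood of \lmaxsatI. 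All of this is done in~\cite{Dumrauf10_arXiv}.

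The second step, and the only genuinely new part of the argument, is a bookkeeping check on element occurrences. Each element of the constructed \hsI instance is associated with some variable $x_j$ of the starting \lmaxsatI instance, and it appears only in (i) a constant number of gadget sets attached to $x_j$, plus (ii) the clause-sets corresponding to those clauses in which $x_j$ occurs. Under the bounded-occurrence hypothesis on the starting instance, the total count is therefore at most $\tilde{B}+\landausmall{1}$, i.e., still a constant; this gives the promised occurrence bound of $\tilde{B}+1$ once one checks, as stated in the paper, that only a single gadget set needs to be attached per variable.

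The main obstacle is re-verifying the tightness of the inner reduction in step one: one needs to argue that any ``non-canonical'' hitting set (one that does not correspond to a truth assignment under the intended mapping) admits an improving \Iflip move leading back toward the canonical form, and moreover that every improvement sequence in the \hsI instance can be mapped to an improvement sequence in the \lmaxsatI instance of no greater length. These are exactly the gadget-level arguments carried out in~\cite{Dumrauf10_arXiv} and, crucially, none of them depends on the unrestricted variable occurrences in the source instance; they go through unchanged once the starting problem is restricted to bounded variable occurrence. Combining this with the bookkeeping of step two yields the claimed \pls/-hardness of \hsI under bounded element occurrence.
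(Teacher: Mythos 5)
Your proposal is correct and follows essentially the same route as the paper's proof: both reduce from the bounded-variable-occurrence variant of \lmaxsatI (PLS-complete via Krentel and Klauck) using the Dumrauf--S\"u\ss{} construction unchanged, and then observe that each element of the resulting \hsI instance occurs in at most $\tilde{B}+1$ sets ($\tilde{B}$ clause-sets plus the single auxiliary set $\{x,\bar{x}\}$ attached to its variable). The paper writes out the construction explicitly rather than invoking it as a black box, but the mathematical content is the same.
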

\begin{proof}
We reduce from \lmaxsatI with bounded variable occurrence, i.e.\ we are given a CNF formula with clauses $\tilde{\mathcal{C}} = \{\tilde{C}_1,\dots, \tilde{C}_{\tilde{m}}\}$ over binary variables $\tilde{\mathcal{X}} = \{x_1, \dots, x_{\tilde{n}}\}$ and weights $\tilde{w}_i$ for every $\tilde{C}_i$.
Additionally, every variable only occurs in at most $\tilde{B}$ clauses.

Given such an instance of \lmaxsatI, we define an instance of \hsI and show that every locally optimal subset corresponds to a locally optimal assignment of the \lmaxsatI instance.
The ground set is given by a positive and negative element for every variable; $\mathcal{B} = \bigcup_{x\in \tilde{\mathcal{X}}}\{x, \bar{x}\}$.
For each of the variables we also define a set $C_x = \{x, \bar{x}\}, x\in \tilde{\mathcal{X}}$ with large weight $W$ exceeding the sum of all clause weights $\tilde{w}_i$.
We restrict the subset size to the number of variables, $m = \tilde{m}$, and the subset corresponding to a truth assignment is given by choosing the respective positive or negative element for every variable according to the assignment.
Finally, for every clause $\tilde{C}_i$ of the \lmaxsatI instance, we define a set $C_i$ for \hsI that contains exactly the variables of the clause ($x$ for positive and $\bar x$ for negative literals) and its weight corresponds to the weight of the clause, $ w_i = \tilde{w}_i $.
The auxiliary sets $C_x$ imply that every locally optimal subset $S$ contains exactly one of $x, \bar{x}$ for every variable:
Otherwise, because of our choice of $m$, it must hold that for some variable both $x, \bar{x}$ do not occur in $S$, while for another one both $y, \bar{y}$ do occur.
Because the sets $C_x$ have large weight, it is a local improvement to replace $y$ by $x$, thus making both $C_x$ and $C_y$ contribute their weight to the total weight of the subset.

Now, every locally optimal subset $S$ corresponds to a truth assignment and their weights also coincide up to the constant weight $m\cdot W$ of the auxiliary sets:
A clause $\tilde{C}_i$ is satisfied as soon as any of its literals is true, and analogously the set $C_i$ contributes its weight to the solution as soon as any of its elements are contained in $S$.
The neighbourhood graph contains all changes of an element $x$ to its negative copy $\bar{x}$ or vice-versa.
However, when starting from a subset that corresponds to a truth assignment, a change to a neighbouring subset that does not correspond to one is never an improving move, which concludes the reduction as being tight.

Up to this point, this is exactly the reduction from \cite{Dumrauf10_arXiv}.
When the original instance of \lmaxsatI has bounded variable occurrence, it follows immediately that the \hsI instance has bounded element occurrence $B = \max_{b\in \mathcal{B}}\card{\{C\in \mathcal{C}\fwh b\in C\}}$:
The maximum number of element occurrence in the \hsI instance is $B \leq \tilde{B} + 1$, as apart from the sets $C_i$ that correspond to the clauses, we have only added the auxiliary set $\{x, \bar{x}\}$ for every variable to the problem.
\end{proof}

\section{Maximum Constraint Assignment}
\label{sec:MCA}
This section deals with the problem of Maximum Constraint Assignment (\mcapqr) as defined in \cite{DumraufMonien:PLS_MCA}.
\mcapqr is a general optimization problem over integer variables.
A heavily restricted variant of this problem has undergone smoothed analysis by the name of Binary Max-CSP in \cite{Chen20}.
Compared to (even non-binary) Max-CSP, $\mcapqr$ covers many possible weights for the different assignments of a single constraint instead of just a single one.
Our focus will be on translating the parameters of \mcapqr directly to parameters in the smoothed running time.

\begin{definition}[\mcapqr]
	By \mcapqr we denote the following local optimization problem. Let $\mathcal{C} = \{C_1,\dots, C_m\} $ be a set of constraints over variables $ \mathcal{X} = \{x_1,\dots, x_n\}$, i.e.\ each constraint $C_i$ is associated with a set of variables and written symbolically as $C_i(x_{i_1},\dots, x_{i_{p_i}})$.
	When assigning integers from $\intupto{r}$ to the variables, these constraints are evaluated as functions $C_i:\intupto{r}^{p_i}\map [0, 1]$ with $p_i\leq p$.
	The feasible solutions are the integer assignments of the variables, i.e.\ functions $a: X \map \intupto{r}$.
	The sum of all constraints evaluated for the assignment is the weight of the solution:
		\begin{equation*}
			w(a) = \sum_{i=1}^m C_i(a(x_{i_1}),\dots, a(x_{i_{p_i}})).
		\end{equation*}
	Every variable $x_i$ appears in at most $q$ constraints from $\mathcal{C}$.
	The local search neighbourhood is given by changing the mapping of a single variable of a given assignment to a different value.
	This is an efficiently searchable neighbourhood.
\end{definition}

As shown in \cite{DumraufMonien:PLS_MCA}, \mcavar{3}{2}{3} and \mcavar{2}{3}{6} are \pls/-complete.
Therefore, we will treat the parameters $(p, q, r)$ as constant in the following, leaving of interest only the instance size parameters $m$ and $n$.
This also justifies the assumption that the every function $C_i$ can be given as a list of values for all $r^{p_i}$ possible input assignments.

\begin{theorem}
	Given a smoothed \mcapqr problem, where the cost values $C_i(\vecc v_i)$ for every integer assignment $\vecc v_i\in \intupto{r}^{p_i}$ are drawn from independent distributions $f_{i, \vecc v_i}: [0, 1]\map [0, \phi]$, the expected number of iterations of local search is bounded by $\landau{r^{2p(q+1)+2} nm^2 \phi}$.
\end{theorem}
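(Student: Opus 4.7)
The plan is to cast \mcapqr as a binary combinatorial local optimization problem and then apply our black-box tool \cref{thm:single-step-bound} under a natural covering. I would introduce, for every constraint $C_i$ and every possible restricted assignment $\vecc{v}_i \in \intupto{r}^{p_i}$ of its variables, a binary cost coordinate $s_{i,\vecc{v}_i}$ with cost coefficient $c_{i,\vecc{v}_i} = C_i(\vecc{v}_i)$. A variable assignment $a:\mathcal{X}\map\intupto{r}$ is mapped to the configuration $\vecc{s}(a)$ with $s_{i,\vecc{v}_i}(a)=1$ iff the restriction of $a$ to the variables of $C_i$ equals $\vecc{v}_i$; in particular, within each coordinate cluster $I_i = \ssets{(i,\vecc{v}_i)\fwh{\vecc{v}_i\in\intupto{r}^{p_i}}}$ exactly one entry is $1$, and the CLO cost of $\vecc{s}(a)$ equals the weight $w(a)$. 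This gives $M=1$ and $\nu = \sum_i r^{p_i} \leq m r^p$, and the assignment itself is maintained in the non-cost part so that configurations and assignments correspond one-to-one. The single-variable-flip neighbourhood has size at most $n(r-1)$ and is efficiently searchable, and the assumed smoothness on the constraint values translates identically to the CLO cost coefficients.

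Next, I would fix the covering. For the coordinate cover I use $\mathcal{I} = \ssets{I_i \fwh{i\in\intupto{m}}}$ as introduced above, and for the transition cover $\mathcal{E} = \ssets{E(j,v,v') \fwh{j\in\intupto{n},\; v\neq v'\in\intupto{r}}}$, where $E(j,v,v')$ collects all edges of the neighbourhood graph that correspond to flipping variable $x_j$ from value $v$ to value $v'$, regardless of the assignment of the remaining variables. This immediately gives $\lambda \leq n r(r-1)$.

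The key structural observation is that a flip of $x_j$ can only affect coordinates belonging to constraints that actually contain $x_j$, of which there are at most $q$ by definition of \mcapqr. Hence $\core{E(j,v,v')}$ is covered by at most $q$ clusters from $\mathcal{I}$, giving $\beta = q$. For the diversity parameter, fix a transition cluster $E(j,v,v')$ and a coordinate cluster $I_i$: if $C_i$ does not contain $x_j$ the projected difference vector is identically zero, and otherwise it has a single $+1$ at the index of the old restricted assignment and a single $-1$ at the new one, where the two restrictions differ only in the $x_j$-coordinate and are therefore both determined by the common assignment of the remaining $p_i-1$ variables of $C_i$. Consequently at most $r^{p_i-1} \leq r^{p-1}$ distinct difference vectors arise, and we may take $\mu \leq r^{p-1}$.

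Substituting $\lambda = \landausmall{nr^2}$, $\beta = q$, $\mu \leq r^{p-1}$, $\nu \leq mr^p$, and $M=1$ into the bound $3\mu^\beta \lambda \nu^2 M\log(M+1)\phi$ of \cref{thm:single-step-bound} yields $\landausmall{r^{(p-1)q + 2p + 2} n m^2 \phi}$, which is absorbed into the advertised $\landausmall{r^{2p(q+1)+2} n m^2 \phi}$. I do not foresee any major obstacle: the only subtle step is the diversity count, which hinges on the fact that within a fixed cluster $E(j,v,v')$ every variable other than $x_j$ is frozen along each individual transition, so its restriction to the remaining $p_i-1$ variables of $C_i$ is what fully determines the projected difference vector.
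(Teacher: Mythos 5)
Your proposal is correct and follows the same overall structure as the paper's proof: the same binary CLO encoding with cost coordinates $(i,\vecc{v}_i)$, the same coordinate cover $\{I_i\}_{i\in[m]}$, the same transition cover $\{E(j,v,v')\}$, and the same values $\lambda = nr(r-1)$ and $\beta = q$. The one place you diverge is the diversity parameter: the paper settles for the cruder $\mu = r^{2p}$ (any pair of indicator vectors from $\{0,1\}^{r^{p_i}}$, each of which is one of $r^{p_i}$ unit vectors, giving $(r^{p_i})^2$ possible differences), whereas you observe that within $E(j,v,v')$ the old and new restrictions of the assignment to the variables of $C_i$ differ only in the $x_j$-slot, whose two values $v,v'$ are fixed by the cluster, so the pair is entirely determined by the common assignment of the other $p_i-1$ variables, yielding $\mu \leq r^{p-1}$. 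This is correct and yields a strictly sharper exponent $(p-1)q+2p+2$ in place of $2p(q+1)+2$; both are subsumed by the theorem's advertised $\landausmall{r^{2p(q+1)+2}nm^2\phi}$, so either choice closes the argument. The refinement buys nothing for the stated claim, but it is a genuine (if small) quantitative improvement over what the paper writes down.
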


\begin{proof}
	We first define a binary ($M=1$) CLO problem, where the cost part corresponds to all constraints and their possible assignments, i.e. for each $i \in \intupto{m}$ and every $\vecc v_i \in \intupto{r}^{p_i}$ we have a binary cost variable $s_{i, \vecc v_i}$ with associated cost coefficient $c_{i, \vecc v_i} = C_i(\vecc v_i)$.
	The model works on an indicator basis, i.e.\ for every assignment $a$ of \mcapqr, the cost variables of the induced CLO configuration $\vecc s(a)$ take value $s_{i, \vecc v_i} = 1$ only for $\vecc v_i = (a(x_{i_j}))_{j=1}^{p_i}$, and $s_{i, \vecc v} = 0$ for every other integer vector $\vecc v \neq \vecc v_i$, for all $i\in \intupto{m}$.
    Intuitively, the single entry of $1$ within every set of coordinates for fixed $i$ indicates which of the possible integer assignments is currently chosen, and thus this value contributes its weight to the total weight of the configuration.
    Therefore, we have in total $\nu = \sum_{i=1}^m r^{p_i}$ many cost components.
    Now it holds that for every integer assignment $a$ we have
    \begin{equation*}
        c(a) = \sum_{i=1}^m C_i(a(x_{i_1}),\dots, a(x_{i_{p_i}})) = \sum_{i=1}^m \sum_{\vecc v_i \in \intupto{r}^{p_i}} c_{i, \vecc v_i} \indicator{\vecc v_i = (a(x_{i_{1}}),\dots, a(x_{i_{p_i}}))} = C(\vecc s(a))
    \end{equation*}
    and the cost functions coincide.
    The neighbourhood relation for the CLO problem is derived directly from \mcapqr as well and is also efficiently searchable.

    We now translate the cost distributions of smoothed \mcapqr to the distributions of the induced smoothed CLO problem and apply \cref{thm:single-step-bound} to it.
    We specify the covering $(\mathcal{E}, \mathcal{I})$ and the properties $\lambda$, $\beta$ and $\mu$ in the following.
    To capture the indicator structure defined above, we define a coordinate cluster $I_i = \{(i, \vecc{v}_i)\fwh{\vecc{v}_i \in \intupto{r}^{p_i}}\}$ for all cost components associated with constraint $C_i$.
    There are in total $r^{p_i}$ different indicator assignments for the configuration of $\vecc s_{I_i}$.
    In the difference $\vecc s_{I_i}'-\vecc s_{I_i}$, both $\vecc s_{I_i}$ and $\vecc s_{I_i}'$ are one of these assignments, and we therefore have at most $ (r^{p_i})^2$ distinct differences by taking all possible combinations.
    Thus, we choose $\mu = r^{2p}$.

    Now, we specify the transition cover $\mathcal{E}$.
    We denote by $E(j, v, v')$ the transition cluster that captures all transitions where the assignment of variable $x_j$ changes from integer $v$ to $v'$.
    Formally, this is
		\begin{equation*}
			E(j, v, v') = \{(\vecc s(v, a_{-j}), \vecc s(v', a_{-j}))\fwh a_{-j}: \mathcal{X}\setminus \{x_j\} \map \intupto{r} \}
		\end{equation*}
    where $a_{-j}$ denotes an arbitrary assignment of the remaining variables to integers and $(v, a_{-j})$ denotes the assignment that maps $\mathcal{X}\setminus \{x_j\}$ as defined by $a_{-j}$ and $x_j$ to $v$.
    Now, $\mathcal{E} = \{E(j, v, v')\fwh j\in \intupto{n}, v, v'\in \intupto{r}, v\neq v'\}$ and therefore $\lambda = nr(r-1)$.

    Finally, in order to specify $\beta$, we investigate which coordinate clusters are affected by the transitions in a fixed cluster $E(j, v, v')$:
    In such transitions, the configuration of $\vecc s_{I_i}$ changes if and only if $x_j$ occurs in constraint $C_i$.
    By assumption, every variable appears in at most $q$ different constraints.
    This gives $\beta = q$ and concludes the proof with a bound on the expected number of steps of
        \begin{equation*}
            3\cdot (r^{2p})^q nr(r-1) \cdot \left(\sum_{i=1}^m r^{p_i}\right)^2 \cdot \phi
            \leq 3 (r^{2p})^q nr^2 \phi  m^2 r^{2p} 
            = 3 r^{2p(q+1)+2} nm^2 \phi. 
        \end{equation*}
\end{proof}

Note that a variant of \mcapqr can be defined which bridges the gap between its vast description of the constraints and the constraints in Max-CSP: instead of separate weights for every possible assignment of the variables in a constraint in \mcapqr or just a set of \emph{satisfying} assignments with a single weight, regardless of the specific assignment, we can use several (disjoint) sets of assignments with a given weight each for every constraint.
For the smoothed problem, each of these weights -- and therefore the contribution of several assignments in a given constraint -- is perturbed as a single value.
By doing so, many problems can be directly expressed by an instance of \mcapqr, which gives another unified proof method for smoothed complexity.
This model requires only slight adaptations from our analysis and also gives polynomial smoothed complexity under the variable occurrence constraints derived from \mcapqr.

\newpage
\printbibliography

\end{document}